\newtheorem{theo}{Theorem}[subsection]
\newtheorem{theoa}{Theorem}
\newtheorem{prop}[theo]{Proposition}
\newtheorem{cora}{Corollary}[theoa]
\newtheorem{defi}[theo]{Definition}
\newtheorem{rema}[theo]{Remark}
\newcommand{\Hol}{\mathrm{Hol}}
\newcommand{\Op}{\mathrm{Op}}
\newcommand{\id}{\mathrm{id}}
\newcommand{\diag}{\mathrm{diag}}
\begin{document}

\title[Quantizations of the torus]{Berezin-Toeplitz quantization and complex Weyl quantization of the torus $ \mathbb{T}^2$.}

\author{Ophélie Rouby}
\address{Grupo de Física Matemática, Faculdade de Ciências, Universidade de Lisboa, 1749-016 Lisboa, Portugal}
\curraddr{}
\email{o.rouby@laposte.net}
\thanks{}

\subjclass[1991]{}

\date{}

\dedicatory{}

\begin{abstract}
In this paper, we give a correspondence between the Berezin-Toeplitz and the complex Weyl quantizations of the torus $ \mathbb{T}^2$. To achieve this, we use the correspondence between the Berezin-Toeplitz and the complex Weyl quantizations of the complex plane and a relation between the Berezin-Toeplitz quantization of a periodic symbol on the real phase space $ \mathbb{R}^2$ and the Berezin-Toeplitz quantization of a symbol on the torus $ \mathbb{T}^2$. 
\end{abstract}

\maketitle
\setcounter{tocdepth}{1}

\section*{Introduction}
The object of this paper is to construct a new semi-classical quantization of the torus $ \mathbb{T}^2$ by adapting Sjöstrand's complex Weyl quantization of $ \mathbb{R}^2$ and to give the correspondence between this quantization and the well-known Berezin-Toeplitz quantization of $ \mathbb{T}^2$.
When the phase space is $ \mathbb{R}^{2n}$, the pseudo-differential Weyl quantization allows us to relate a classical system to a quantum one through the symbol map; thus pseudo-differential operators have become an important tool in quantum mechanics. On the mathematical side, these operators have been introduced in the mid-sixties by André Unterberger and Juliane Bokobza \cite{MR0176360} and in parallel by Joseph Kohn and Louis Nirenberg \cite{MR0176362} and have been investigated by Lars Hörmander \cite{MR0180740,MR0233064,MR0383152}. They allow to study physical systems in positions and momenta. On the other hand, Berezin-Toeplitz operators have been introduced by Feliks Berezin \cite{MR0411452} and investigated by Louis Boutet de Monvel and Victor Guillemin \cite{MR620794} as a generalization of Toeplitz matrices. The study of these operators has been motivated by the fact that pseudo-differential operators take into account only phases spaces that can be written as cotangent spaces, whereas in mechanics, there are physical observables like spin that naturally lives on other types of phases spaces, like compact Kähler manifolds, which can be quantized in the Berezin-Toeplitz way. In fact, it was realized recently that the Berezin-Toeplitz quantization applies to even more general symplectic manifolds, and thus has become a tool of choice for applications of symplectic geometry and topology, see \cite{2016arXiv160905395C}. \\
\\
In this paper, we give a relation between the Berezin-Toeplitz quantization of the torus, studied for instance by David Borthwick and Alejandro Uribe in \cite{MR2014161} and the complex Weyl quantization of the torus, which we introduce  as a variation of Sjöstrand's quantization of $ \mathbb{R}^2$. The complex Weyl quantization of $ \mathbb{R}^2$ has been investigated by Johannes Sj{\"o}strand in \cite{SJ}, then by Anders Melin and Johannes Sj{\"o}strand in \cite{MR1957486,MR2003421}, also by Michael Hitrik and Johannes Sj{\"o}strand in \cite{MR2036816} and in their mini-courses \cite{HJ} and by Michael Hitrik, Johannes Sj{\"o}strand and San V\~u Ng\d{o}c in \cite{MR2288739}. This quantization of the real plane $ \mathbb{R}^2$ allows to study pseudo-differential operators with complex symbols, and therefore is particularly useful for problems involving non self-adjoint operators or quantum resonances. It is defined by a contour integral over an $IR$-manifold ($I$-Lagrangian and $R$-symplectic) which plays the role of the phase space. Here, we define an analogue of this notion in the torus case. \\
If we consider the complex plane as a phase space, there exists a correspondence between the complex Weyl and the Berezin-Toeplitz quantizations (this correspondence uses a variant of Bargmann's transform and can be found, for instance, in the book \cite[chapter 13]{MR2952218} of Maciej Zworski); using this result, we are able to obtain Bohr-Sommerfeld quantization conditions for non-selfadjoint perturbations of self-adjoint Berezin-Toeplitz operators of the complex plane $ \mathbb{C}$ by first proving the result in the case of pseudo-differential operators (see \cite{lapin}). Therefore, we expect that this new complex quantization of $ \mathbb{T}^2$, together with its relationship to the Berezin-Toeplitz quantization, will be crucial in obtaining precise eigenvalue asymptotics of non-selfadjoint Berezin-Toeplitz operators on the torus.
\\

\noindent \textit{Structure of the paper:}
\begin{itemize}
\item in Section \ref{section_resultat}, we state our result;
\item in Section \ref{section_demo}, we give the proof of our result which is divided into three parts, the first one consists in recalling the Berezin-Toeplitz quantization of the torus, the second one in introducing the complex Weyl quantization of the torus and the last one in relating these two quantizations.
\end{itemize}
$ $  

\noindent \textbf{Acknowledgements.} The author would like to thank both San V\~u Ng\d{o}c and Laurent Charles for their support and guidance. Funding was provided by the Universit\'e de Rennes 1 and the Centre Henri Lebesgue.

\section{Result} \label{section_resultat}
\subsection{Context} \label{subsection_context}
In this section, we recall the definition of the Berezin-Toeplitz quantization of a symbol on the torus $ \mathbb{T}^2$ (see for example \cite{MR3349834}) and we give a definition of the complex Weyl quantization of a symbol on the torus. Let $ 0 < \hbar \leq 1$ be the semi-classical parameter. By convention the Weyl quantization involves the semi-classical parameter $ \hbar$, contrary to the Berezin-Toeplitz quantization which involves the inverse of this parameter, denoted by $k$. In the whole paper, we will use these two parameters. \\

\noindent \texttt{Notation:} let $k$ be an integer greater than $1$. Let $u$ and  $v$ be complex numbers of modulus $1$.
\begin{itemize}
\item If $z \in \mathbb{C}$, we denote by $z = (p, q) \in \mathbb{R}^2$ or $z = p + iq$ via the identification of $ \mathbb{C}$ with $ \mathbb{R}^2$.
\item $ \mathbb{T}^2$ denotes the torus $ (\mathbb{R}/ 2 \pi \mathbb{Z}) \times ( \mathbb{R}/ \mathbb{Z})$.
\item $ \mathcal{G}_k$ is the space of measurable functions $g$ such that:
$$ \int_0^{2 \pi} \! \! \! \int_0^1 \left| g(p, q) \right|^2 e^{-k q^2} dp dq < + \infty,$$
which are invariant under the action of the Heisenberg group (for more details, see Subsection \ref{subsection_quantif_toeplitz}), \textit{i.e.} for all $(p, q) \in \mathbb{R}^2$, we have:
$$ g(p+2 \pi, q) = u^k g(p, q)\quad \text{and} \quad g(p, q+1) = v^k e^{-i(p+iq)k+k/2} g(p, q).$$
\item $ \mathcal{H}_k$ is the space of holomorphic functions in $ \mathcal{G}_k$, \textit{i.e.}:
$$ \mathcal{H}_k = \left\lbrace g \in \Hol( \mathbb{C}); \quad g(p+2 \pi, q) = u^k g(p, q),\quad g(p, q+1) = v^k e^{-i(p+iq)k+k/2} g(p, q) \right\rbrace.$$
\item $\Pi_k$ is the orthogonal projection of the space $ \mathcal{G}_k$ (equipped with the weighted $L^2$-scalar product on $[0, 2 \pi] \times [0, 1]$) on the space $\mathcal{H}_k$.
\end{itemize} 

\begin{rema} $ $
\begin{itemize}
\item The spaces $ \mathcal{G}_k$ and $ \mathcal{H}_k$ depend on the complex numbers $u$ and $v$. 
\item In \cite{MR2014161}, they consider the torus $ \mathbb{T}^2 = \mathbb{R}^2 / \mathbb{Z}^2$ and they choose an other quantization which leads to an other space of holomorphic functions, also called $ \mathcal{H}_k$, defined as follows:
$$ \mathcal{H}_k = \left\lbrace g \in \Hol( \mathbb{C}); \quad \forall (m, n) \in \mathbb{Z}^2, g(z+m+in) = (-1)^{kmn} e^{k \pi ( z(m-in) + (1/2)(m^2 + n^2))} g(z) \right\rbrace.$$
\end{itemize}
\end{rema} 

\begin{defi}[Asymptotic expansion] \label{defi_asymp_expan}
Let $f_k \in \mathcal{C}^{ \infty}( \mathbb{R}^2)$. We say that $f_k$ admits an asymptotic expansion in powers of $1/k$ for the $ \mathcal{C}^{ \infty}$-topology of the following form:
$$ f_k(x,y) \sim \sum_{l \geq 0} k^{-l} f_l(x,y) ,$$
if:
\begin{enumerate}
\item $ \forall l \in \mathbb{N}$, $f_l \in \mathcal{C}^{ \infty}( \mathbb{R}^2)$;
\item $ \forall L \in \mathbb{N}^*$, $ \forall (x, y) \in \mathbb{R}^2$, $ \exists C > 0$ such that:
$$ \left| f_k(x,y) - \sum_{l=0}^{L-1} k^{-l} f_l(x,y) \right| \leq C k^{-L} \quad \text{for large enough $k$}.$$
\end{enumerate}
We denote by $ \mathcal{C}^{ \infty}_k( \mathbb{R}^2)$ the space of such functions.
\end{defi}

\begin{defi}[Berezin-Toeplitz quantization of the torus] \label{defi_quantif_BT_tore}
Let $f_k \in \mathcal{C}^{ \infty}_k(\mathbb{R}^2)$ be a function such that, for $(x,y) \in \mathbb{R}^2$, we have:
$$ f_k(x+ 2 \pi, y) = f_k(x, y) = f_k(x, y+1 ) .$$
Define the Berezin-Toeplitz quantization of the function $f_k$ by the sequence of operators $ T_{f_k} := (T_k)_{k \geq 1}$ where, for $k \geq 1$, the operator $T_k$ is given by:
$$ T_k = \Pi_k M_{f_k} \Pi_k : \mathcal{H}_k \longrightarrow \mathcal{H}_k,$$
where $M_{f_k} : \mathcal{G}_k \longrightarrow \mathcal{G}_k$ is the multiplication operator by the function $f_k$. \\
We call $f_k$ the symbol of the Berezin-Toeplitz operator $T_{f_k}$.
\end{defi}

Now, we define the complex Weyl quantization of a symbol on the torus. We will explain in details in Subsection \ref{subsection_complex_weyl_quantization} why we consider such a notion. First, we introduce some notations. \\

\noindent \texttt{Notation:} let $ \Phi_1$ be the strictly subharmonic quadratic form defined by the following formula for $z \in \mathbb{C}$:
$$ \Phi_1(z) = \dfrac{1}{2} \Im(z)^2 .$$
\begin{itemize}
\item $ \Lambda_{ \Phi_1}$ denotes the following set:
$$ \Lambda_{ \Phi_1} = \left\lbrace \left(z, \dfrac{2}{i} \dfrac{\partial \Phi_1}{\partial z}(z) \right); z \in \mathbb{C} \right\rbrace = \lbrace (z, - \Im(z)); z \in \mathbb{C} \rbrace \simeq \mathbb{C}.$$
\item $L(dz)$ denotes the Lebesgue measure on $ \mathbb{C}$, \textit{i.e.} $L(dz)= \dfrac{i}{2} dz \wedge d \overline{z}$.
\item $L^2_{\hbar}( \mathbb{C}, \Phi_1) := L^2( \mathbb{C}, e^{-2 \Phi_1(z)/ \hbar} L( d z))$ is the set of measurable functions $f$ such that: 
$$ \int_{ \mathbb{C}} |f(z)|^2 e^{-2 \Phi_1(z)/ \hbar} L(dz) < + \infty.$$
\item $H_{ \hbar}( \mathbb{C}, \Phi_1) := \Hol( \mathbb{C}) \cap L^2_{\hbar}( \mathbb{C}, \Phi_1)$ is the set of holomorphic functions in the space $L^2_{ \hbar}( \mathbb{C}, \Phi_1)$.
\item $ \mathcal{C}^{ \infty}_{ \hbar}( \Lambda_{ \Phi_1})$ denotes the set of smooth functions on $ \Lambda_{ \Phi_1}$ admitting an asymptotic expansion in powers of $ \hbar$ for the $ \mathcal{C}^{ \infty}$-topology in the sense of Definition \ref{defi_asymp_expan} (by replacing $1/k$ by $ \hbar$ and $ \mathbb{R}^2$ by $ \Lambda_{ \Phi_1}$).
\end{itemize}

\begin{rema}
There are several definitions of the Bargmann transform. Here we chose the weight function $ \Phi_1 (z) = \dfrac{1}{2} \Im(z)^2$ instead of $ |z|^2$ because it is well-adapted to the analysis of the torus.
\end{rema}

\begin{defi}[Complex Weyl quantization of the torus $ \mathbb{T}^2$ (see Definitions \ref{defi_weyl_complex_tore_1} and \ref{defi_weyl_complex_tore_2})]  \label{defi_quantif_complex_tore}
Let $b_{ \hbar} \in \mathcal{C}^{ \infty}_{ \hbar}( \Lambda_{ \Phi_1})$ be a function such that, for $(z, w) \in \Lambda_{ \Phi_1}$, we have: 
$$b_{ \hbar}(z + 2 \pi, w) = b_{ \hbar}(z, w) = b_{ \hbar}(z+i, w-1).$$
Define the complex Weyl quantization of the function $b_{ \hbar}$, denoted by $ \Op^w_{ \Phi_1}(b_{ \hbar})$, by the following formula, for $u \in H_{ \hbar}( \mathbb{C}, \Phi_1)$:
$$ \Op_{ \Phi_1}^w(b_{ \hbar}) u(z) = \dfrac{1}{2 \pi \hbar} \int \! \! \! \int_{ \Gamma(z)} e^{(i/ \hbar)(z-w) \zeta} b_{ \hbar} \left( \dfrac{z+w}{2}, \zeta \right) u(w) dw d \zeta,$$
where the contour integral is the following:
$$ \Gamma(z) = \left\lbrace (w, \zeta) \in \mathbb{C}^2; \zeta =  \dfrac{2}{i} \dfrac{\partial \Phi_1}{\partial z} \left( \dfrac{z+w}{2} \right) = - \Im \left( \dfrac{z+w}{2} \right) \right\rbrace.$$
We call $b_{ \hbar}$ the symbol of the pseudo-differential operator $ \Op^w_{ \Phi_1}(b_{ \hbar})$.
\end{defi}

We will show that for $b_{ \hbar} \in \mathcal{C}^{ \infty}_{ \hbar}( \Lambda_{ \Phi_1})$ satisfying the hypotheses of Definition \ref{defi_quantif_complex_tore}, the complex Weyl quantization defines an operator $ \Op^w_{ \Phi_1}(b_{ \hbar})$ which sends the space of holomorphic functions $ \mathcal{H}_k$ on itself (see Proposition \ref{prop_action_quantif_weyl_complexe_sur_SS'(C)}). Therefore, the Berezin-Toeplitz and the complex Weyl quantizations give rise to operators acting on the space of holomorphic functions $ \mathcal{H}_k$. 

\subsection{Main result} 

\begin{theoa} \label{theoA}
Let $ f_k \in \mathcal{C}^{ \infty}_k( \mathbb{R}^2)$ be a function such that, for $(x, y) \in \mathbb{R}^2$, we have:
$$ f_k(x + 2 \pi, y ) = f_k(x, y) = f_k(x, y+1) .$$
Let $T_{f_k} = ( T_k)_{k \geq 1}$ be the Berezin-Toeplitz operator of symbol $ f_k$. Then, for $k \geq 1$, we have:
$$ T_k = \Op^w_{ \Phi_1}(b_{ \hbar}) + \mathcal{O}(k^{- \infty}) \quad \text{on $ \mathcal{H}_k$},$$
where $b_{\hbar} \in \mathcal{C}^{ \infty}_{ \hbar}( \Lambda_{\Phi_1})$ is given by the following formula, for $z \in \Lambda_{ \Phi_1} \simeq \mathbb{C}$:
$$ b_{ \hbar}(z) = \exp \left( \dfrac{1}{k} \partial_z \partial_{ \overline{z}} \right) (f_k(z)).$$
This formula means that $b_{ \hbar}$ is the solution at time $1$ of the following ordinary differential equation:
$$
\begin{cases}
\partial_t b_{ \hbar}(t, z) = \dfrac{1}{k} \partial_z \partial_{ \overline{z}} \left( b_{ \hbar}(t, z) \right), \\
b_{ \hbar}(0,z) = f_k(z).
\end{cases}
$$
Besides, $b_{ \hbar}$ satisfies the following periodicity conditions, for $(z,w) \in \Lambda_{ \Phi_1}$:
$$ b_{ \hbar}(z+ 2 \pi, w) = b_{ \hbar}(z, w) = b_{ \hbar}(z+i, w-1).$$
\end{theoa}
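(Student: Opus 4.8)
\emph{Overall strategy.} The plan is to deduce Theorem~\ref{theoA} from the analogous correspondence on the complex plane together with a periodization argument, following the three-part scheme announced in the introduction. On $\mathbb{C}$ with the weight $\Phi_1$, writing $\Pi$ for the orthogonal projection $L^2_\hbar(\mathbb{C},\Phi_1)\to H_\hbar(\mathbb{C},\Phi_1)$, one has for a symbol $a_k$ with bounded derivatives the identity
$$ \Pi\, M_{a_k}\, \Pi \;=\; \Op^w_{\Phi_1}\!\left( e^{\frac{1}{k}\partial_z\partial_{\bar z}}\, a_k \right) \;+\; \mathcal{O}(k^{-\infty}) \qquad\text{on } H_\hbar(\mathbb{C},\Phi_1), $$
the right-hand side being the Weyl symbol attached to the Toeplitz (anti-Wick) symbol $a_k$ via the heat operator, see \cite[Chapter~13]{MR2952218}. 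The three steps to carry out are then: (1) identify the torus operator $T_k=\Pi_k M_{f_k}\Pi_k$ with the operator induced by $\Pi M_{f_k}\Pi$ via periodization; (2) apply the plane identity with $a_k=f_k$; (3) identify the periodization of $\Op^w_{\Phi_1}(b_\hbar)$ with the torus operator of Definition~\ref{defi_quantif_complex_tore}, and check the periodicity of $b_\hbar$.

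\emph{Step (1): periodization.} The space $\mathcal{G}_k$ consists of the functions on $\mathbb{C}$ that are equivariant under the lattice $\Gamma=2\pi\mathbb{Z}\oplus i\mathbb{Z}$ with respect to the unitary cocycle $\rho$ written in the Notation paragraph (equivalently, $L^2$-sections of a Hermitian line bundle over $\mathbb{T}^2$), and $\mathcal{H}_k$ is its holomorphic part. That cocycle is chosen precisely so that the weight density $|g|^2 e^{-k(\Im z)^2}L(dz)$ is $\Gamma$-invariant and $\rho$ preserves holomorphy; consequently the periodization $\mathcal{P}\colon f\mapsto\sum_{\gamma\in\Gamma}\rho(\gamma)f$, defined on a dense (Schwartz-type) subspace of $L^2_\hbar(\mathbb{C},\Phi_1)$, takes values in $\mathcal{G}_k$, maps $H_\hbar(\mathbb{C},\Phi_1)$ onto $\mathcal{H}_k$, and intertwines the Bergman projections, $\mathcal{P}\circ\Pi=\Pi_k\circ\mathcal{P}$. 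Because $f_k$ is $2\pi$-periodic in $x$ and $1$-periodic in $y$, multiplication by $f_k$ commutes with every $\rho(\gamma)$, so $\mathcal{P}\circ M_{f_k}=M_{f_k}\circ\mathcal{P}$ and hence $\mathcal{P}\circ(\Pi M_{f_k}\Pi)=T_k\circ\mathcal{P}$; thus $T_k$ is exactly the operator induced on $\mathcal{H}_k$ by the plane Toeplitz operator $\Pi M_{f_k}\Pi$. Concretely, the reproducing kernel of $\mathcal{H}_k$ is the $\Gamma$-average of the Bergman kernel of $H_\hbar(\mathbb{C},\Phi_1)$, and $T_k$ is obtained from $\Pi M_{f_k}\Pi$ by this averaging.

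\emph{Steps (2)--(3): conclusion.} Applying the plane identity gives $T_k=\Op^w_{\Phi_1}(b_\hbar)+\mathcal{O}(k^{-\infty})$ on $\mathcal{H}_k$ with $b_\hbar=e^{\frac{1}{k}\partial_z\partial_{\bar z}}f_k=e^{\frac{1}{4k}\Delta}f_k$; since $f_k\in\mathcal{C}^\infty_k(\mathbb{R}^2)$ is periodic (hence bounded with bounded derivatives), the heat semigroup is well defined on it, $b_\hbar\in\mathcal{C}^\infty_\hbar(\Lambda_{\Phi_1})$, and its asymptotic expansion comes term by term from Taylor's formula $b_\hbar\sim\sum_{l\ge0}\frac{1}{(4k)^l l!}\Delta^l f_k$ combined with the expansion of $f_k$. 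Since $e^{\frac{1}{4k}\Delta}$ commutes with all translations of $\mathbb{C}$ and, under $\Lambda_{\Phi_1}\simeq\mathbb{C}$, $z\mapsto(z,-\Im z)$, the maps $z\mapsto z+2\pi$ and $z\mapsto z+i$ correspond to $(z,w)\mapsto(z+2\pi,w)$ and $(z,w)\mapsto(z+i,w-1)$, the periodicities $f_k(z+2\pi)=f_k(z)=f_k(z+i)$ give at once $b_\hbar(z+2\pi,w)=b_\hbar(z,w)=b_\hbar(z+i,w-1)$ — the asserted periodicity, which is also the hypothesis of Definition~\ref{defi_quantif_complex_tore}. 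By Proposition~\ref{prop_action_quantif_weyl_complexe_sur_SS'(C)}, $\Op^w_{\Phi_1}(b_\hbar)$ maps $\mathcal{H}_k$ into itself and is the operator induced on $\mathcal{H}_k$ by the plane complex Weyl operator with the same periodic symbol, so the displayed identity is meaningful on $\mathcal{H}_k$, which finishes the proof.

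\emph{Main obstacle.} The delicate point is making Step~(1), and its analogue hidden in Step~(3), rigorous with $\mathcal{O}(k^{-\infty})$ control, i.e.\ showing that passing from the real phase space $\mathbb{R}^2$ to the torus $\mathbb{T}^2$ (averaging over $\Gamma$) costs only an exponentially small error and that all constants are uniform over a fundamental domain. This relies on the Gaussian off-diagonal decay of the Bargmann/Bergman kernel attached to $\Phi_1$, which makes the sum over the nonzero lattice translates $\mathcal{O}(k^{-\infty})$ in operator norm; granting this, Steps~(2)--(3) reduce to the cited plane correspondence and Proposition~\ref{prop_action_quantif_weyl_complexe_sur_SS'(C)}.
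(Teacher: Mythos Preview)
Your proposal is correct and follows essentially the same three-step strategy as the paper, but the formalism of Step~(1) differs. You frame the passage from $\mathbb{C}$ to $\mathbb{T}^2$ via a periodization map $\mathcal{P}\colon f\mapsto\sum_{\gamma}\rho(\gamma)f$ and the intertwining relation $\mathcal{P}\circ\Pi=\Pi_k\circ\mathcal{P}$; the paper instead works in the dual direction, extending the plane projector $\Pi_{\Phi_1,k}$ directly to the equivariant space $\mathcal{G}_k$ (by decomposing the integral over $\mathbb{C}$ as a sum over lattice translates of a fundamental domain and controlling it by the same Gaussian decay you flag as the main obstacle) and showing that it coincides there with the torus projector $\Pi_k$ (Proposition~\ref{prop_Pi_Phi_bien_defini_sur_Hk}). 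The plane Toeplitz and Weyl operators are then made to act on $\mathcal{H}_k$ via the inclusion $\mathcal{H}_k\subset\mathfrak{S}'(\mathbb{C})$ and duality, rather than by pushing forward through $\mathcal{P}$. The two routes are essentially adjoint to one another, and your ``main obstacle'' is precisely the content of Proposition~\ref{prop_Pi_Phi_bien_defini_sur_Hk}. Your periodization viewpoint is more functorial; the paper's extension-by-duality viewpoint has the advantage of giving a canonical meaning to $\Pi_{\Phi_1,k}$ and $\Op^w_{\Phi_1}(b_\hbar)$ on $\mathcal{H}_k$ without needing to verify that $\mathcal{P}$ is onto or well defined on the relevant domains.
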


\begin{rema}
This result is analogous to Proposition \ref{prop_Toeplitz=pseudo_H(Phi1)} (see for example \cite[Chapter 13]{MR2952218}) which relates the Berezin-Toeplitz and the complex Weyl quantizations of the complex plane. The important difference here is that the phase space is the torus.
\end{rema}

\begin{rema}
As a corollary of this result, we can establish a connection between the Berezin-Toeplitz and the \emph{classical} Weyl quantizations of the torus (see Corollary \ref{coro}).
\end{rema}

\section{Proof} \label{section_demo}

\noindent The structure of the proof is organized as follows:
\begin{itemize}
\item in Subsection \ref{subsection_quantif_toeplitz}, we recall the Berezin-Toeplitz quantization of the torus;
\item in Subsection \ref{subsection_complex_weyl_quantization}, we introduce the complex Weyl quantization of the torus;
\item in Subsection \ref{subsection_links_between_quantizations}, we relate the Berezin-Toeplitz quantization of the torus to the complex Weyl quantization of the torus.
\end{itemize}

\subsection{Berezin-Toeplitz quantization of the torus $ \mathbb{T}^2$} \label{subsection_quantif_toeplitz}
In this paragraph, we recall the geometric quantization of the torus (see for example the article \cite{MR3349834} of Laurent Charles and Julien March\'e). \\

Consider the real plane $ \mathbb{R}^2$ endowed with the euclidean metric, its canonical complex structure and with the symplectic form $ \omega = dp \wedge dq$. Let $L_{ \mathbb{R}^2} = \mathbb{R}^2 \times \mathbb{C}$ be the trivial complex line bundle endowed with the constant metric and the connection $ \nabla = d + \dfrac{1}{i} \alpha$ where $ \alpha$ is the $1$-form given by:
$$ \alpha = \dfrac{1}{2} \left( p dq - q dp \right).$$
The holomorphic sections of $L_{ \mathbb{R}^2}$ are the sections $f$ satisfying the following condition: 
$$ \nabla_{ \overline{z}} f = \dfrac{\partial}{\partial \overline{z}} f + \dfrac{1}{4} z f = 0 .$$

We are interested in the holomorphic sections of the torus $ \mathbb{T}^2 = ( \mathbb{R}/ 2 \pi \mathbb{Z}) \times ( \mathbb{R}/ \mathbb{Z})$. Let $x = 2 \pi \dfrac{\partial}{\partial p}$, \textit{i.e.} if we denote by $t_{x}$ the translation of vector $x$, it is defined by the following formula:
\begin{align*}
t_{x} : \mathbb{R}^2 & \longrightarrow \mathbb{R}^2 \\
(p,q) & \longrightarrow (p+2 \pi, q).
\end{align*}
And let $y = \dfrac{\partial}{\partial q}$ which corresponds to the translation $t_y$ given by:
\begin{align*}
t_y: \mathbb{R}^2 & \longrightarrow \mathbb{R}^2 \\
(p, q) & \longmapsto (p, q+1).
\end{align*}
Note that the $ \omega$ volume of the fundamental domain of the lattice is $ 2 \pi$. \\
Let $k \geq 1$, the Heisenberg group at level $k$ is $ \mathbb{R}^2 \times U(1)$ with the product:
$$ (x, u) \cdot (y, v) = \left( y+x, u v e^{(ik/2) \omega(x, y)} \right),$$
for $(x, u), (y, v) \in \mathbb{R}^2 \times U(1)$ (where $U(1)$ denotes the set of complex numbers of modulus one). This formula defines an action of the Heisenberg group on the bundle $L_{ \mathbb{T}^2}^{ \otimes k}$ endowed with the product measure. We identify the space of square integrable sections of $L_{ \mathbb{T}^2}^{ \otimes k}$ which are invariant under the action of the Heisenberg group with the space $ \mathcal{G}_k$ (defined in Subsection \ref{subsection_context}). In fact, if $ \psi$ denotes such a section, we associate to it a function $g \in \mathcal{G}_k$ using the following application:
\begin{align*}
L^2( \mathbb{T}^2, L^{ \otimes k}_{ \mathbb{T}^2}) & \longrightarrow \mathcal{G}_k \\
\psi & \longmapsto g( \tilde{x}),
\end{align*}
where $ \tilde{x} \in \mathbb{R}^2$ and $ \tilde{x} = x_0 + (n_1,n_2)$ with $ x_0 \in [0, 2 \pi] \times [0,1]$, $(n_1, n_2) \in \mathbb{Z}^2$ and where:
$$ ( \tilde{x}, g( \tilde{x})) = ((n_1, n_2),1)  \cdot (x_0, \psi(x_0)) .$$
 Similarly, we identify the space of holomorphic sections of $L_{ \mathbb{T}^2}^{ \otimes k}$ with the following Hilbert space:
$$ \mathcal{H}_k =  \left\lbrace g \in \Hol( \mathbb{C}); \quad g(p+2 \pi, q) = u^k g(p, q), \quad g(p, q+1) = v^k e^{-i(p+iq)k+k/2} g(p, q) \right\rbrace,$$
endowed with the $L^2$-weighted scalar product on $[0, 2 \pi] \times [0,1]$. The complex numbers $u$ and $v$ are called Floquet indices. The Hilbert space $ \mathcal{H}_k$ admits an orthogonal basis, given 
for $l \in \lbrace 0, \ldots, k-1 \rbrace$, by the functions $e_l$ which are defined, for $z \in \mathbb{C}$, as follows:
\begin{equation} \label{eq_base_el}
e_l(z) = u^{kz/(2 \pi)} \sum_{j \in \mathbb{Z}} \left( v^{-k} e^{-l-kj/2} u^{ik/(2 \pi)} \right)^j e^{i(l+jk)z} .
\end{equation}

\subsection{Complex Weyl quantization of the torus $ \mathbb{T}^2$} \label{subsection_complex_weyl_quantization}
In this paragraph, we introduce the notion of complex Weyl quantization of the torus which, to our knowledge, is new. To do so, we follow these three steps:
\begin{enumerate}
\item[1.] we recall the definition of the classical Weyl quantization of the torus;
\item[2.] we recall the definition of the semi-classical Bargmann transform and we look at some of its properties;
\item[3.] we introduce the complex Weyl quantization as the range of the classical Weyl quantization by the Bargmann transform.
\end{enumerate}

\subsubsection{Classical Weyl quantization of the torus}
The classical Weyl quantization of a symbol on the torus has been studied, for example, by Monique Combescure and Didier Robert in the book \cite[Chapter 6]{MR2952171}. We need to introduce the following notation.\\

\noindent \texttt{Notation:}
\begin{itemize}
\item $ \mathcal{S}( \mathbb{R})$ denotes the Schwartz space, \textit{i.e.}:
$$  \mathcal{S}( \mathbb{R}) = \left\lbrace \phi \in \mathcal{C}^{ \infty}( \mathbb{R}); \|\phi \|_{\alpha, \beta} := \sup_{ x \in \mathbb{R}} | x^{ \alpha} \partial^{ \beta}_x \phi(x) | < + \infty, \forall \alpha, \beta \in \mathbb{N} \right\rbrace; $$
\item for $ \phi \in \mathcal{S}( \mathbb{R})$, $ \mathcal{F}_{ \hbar} \phi$ denotes the semi-classical Fourier transform of the function $ \phi$ and it is defined by the following equality:
$$ \mathcal{F}_{ \hbar} \phi(\xi) = \int_{ \mathbb{R}} e^{-(i/ \hbar) x \xi} \phi(x) dx$$
this transform is an isomorphism of the Schwartz space and its inverse is given by:
$$ \mathcal{F}_{ \hbar}^{-1} \phi(x) = \dfrac{1}{2 \pi \hbar} \int_{ \mathbb{R}} e^{(i/ \hbar) x \xi} \phi( \xi) d \xi ;$$
\item $ \mathcal{S}'( \mathbb{R})$ denotes the space of tempered distributions, it is the dual of the Schwartz space $ \mathcal{S}( \mathbb{R})$, \textit{i.e.} it is the space of continuous linear functionals on $ \mathcal{S}( \mathbb{R})$;
\item $ \langle \cdot, \cdot \rangle_{ \mathcal{S}', \mathcal{S}}$ denotes the duality bracket between $ \mathcal{S}'( \mathbb{R})$ and $ \mathcal{S}( \mathbb{R})$;
\item for $ \psi \in \mathcal{S}'( \mathbb{R})$, $ \mathcal{F}_{ \hbar} \psi$ denotes the semi-classical Fourier transform of a tempered distribution and it is defined by the following equality, for $ \phi \in \mathcal{S}( \mathbb{R})$:
$$ \langle \mathcal{F}_{ \hbar} \psi, \phi \rangle_{ \mathcal{S}', \mathcal{S}} = \langle \psi, \mathcal{F}_{ \hbar} \phi \rangle_{ \mathcal{S}', \mathcal{S}} ;$$
\item for $a \in \mathbb{R}$, we denote by $\tau_a$ the translation of vector $a$ defined as follows:
\begin{align*}
\tau_a : \mathbb{R} & \longrightarrow \mathbb{R} \\
x & \longmapsto x+a,
\end{align*}
recall that the translation 
of a tempered distribution $ \psi \in \mathcal{S}'( \mathbb{R})$ is defined as follows, for $ \phi \in \mathcal{S}( \mathbb{R})$:
$$ \langle \tau_a \psi, \phi \rangle_{ \mathcal{S}', \mathcal{S}} = \langle \psi, \tau_{-a} \phi \rangle_{ \mathcal{= \mathbb{R} / \mathbb{Z}^2S}', \mathcal{S}} $$
the distribution $ \psi$ is called $a$-periodic if $\tau_a \psi = \psi$, in this case, $ \psi$ can be written as a convergent Fourier series in $ \mathcal{D}'( \mathbb{R})$ (see for example the book of Jean-Michel Bony \cite{Bony}):
$$ \psi = \sum_{l \in \mathbb{Z}} \psi_l e^{ilt 2 \pi/a} ,$$
where the sequence $( \psi_l)_{l \in \mathbb{Z}}$ is such that, there exists an integer $N \geq 0$ such that:
$$ | \psi_l| \leq C (1 + |l|)^N \quad \forall l \in \mathbb{Z}.$$
\end{itemize}
$ $ \\

Recall now the definition of the subspace of tempered distributions that corresponds to the natural space on which pseudo-differential operators of the torus act (see \cite[Chapter 6]{MR2952171}). For $k \geq 1$ and for $u, v \in U(1)$, we consider the following space:
$$ \mathcal{L}_k = \left\lbrace \psi \in \mathcal{S}'( \mathbb{R}); \quad \tau_{2 \pi} \psi  = u^k \psi, \quad \tau_1 \mathcal{F}_{ \hbar}( \psi) = v^{-k} \mathcal{F}_{ \hbar}( \psi)  \right\rbrace .$$

\begin{rema} $ $ 
\begin{itemize}
\item The definition of the space $ \mathcal{L}_k$ involves two complex numbers $u$ and $v$. We will see that they correspond to the Floquet indices seen in the definition of the space $ \mathcal{H}_k$.
\item In \cite{MR2679813}, they consider the torus $ \mathbb{T}^2 = \mathbb{R}^2/ \mathbb{Z}^2$ and they choose $u = v= 1$.
\end{itemize}

\end{rema}

This space admits a basis (see for example \cite[Chapter 6]{MR2952171}), given for $l \in \lbrace 0, \ldots, k-1 \rbrace$, by the distributions $ \epsilon_l$ which are defined as follows:
\begin{equation} \label{eq_base_epsilon_l}
\epsilon_l = u^{kt/(2 \pi)} \sum_{j \in \mathbb{Z}} \left( v^{-k} \right)^j e^{i(l+jk)t} .
\end{equation}
We consider the structure of Hilbert space such that the family $( \epsilon_l)_{l \in \mathbb{Z}}$ is an orthonormal basis of the space $ \mathcal{L}_k$. Recall now two different definitions of the Weyl quantization of a symbol on the torus $ \mathbb{T}^2$. In the whole paper $S( \mathbb{R}^2)$ denotes the following class of symbols on $ \mathbb{R}^2$:
$$ S( \mathbb{R}^2) = \left\lbrace a \in \mathcal{C}^{ \infty}( \mathbb{R}^2); \forall \alpha \in \mathbb{N}^2, \text{there exists a constant $C_{ \alpha} > 0$ such that:} \, \left| \partial^{ \alpha} a \right| \leq C_{ \alpha} \right\rbrace.$$

\begin{rema} \label{rema_symbole_perio_dans_S(R2)}
Let $a_{ \hbar} \in \mathcal{C}^{ \infty}_{ \hbar}( \mathbb{R}^2)$ be a function such that, for all $(x, y) \in \mathbb{R}^2$, we have:
$$ a_{ \hbar}(x + 2 \pi, y) = a_{ \hbar}(x, y) = a_{ \hbar}(x, y+1).$$
Then the function $a_{ \hbar}$ belongs to the class of symbols $S( \mathbb{R}^2)$.
\end{rema}

\begin{defi}[First definition of the Weyl quantization of the torus] \label{defi_quantif_weyl_tore_integrale}
Let $a_{ \hbar } \in \mathcal{C}^{ \infty}_{ \hbar}(\mathbb{R}^2)$ be a function such that, for all $(x, y) \in \mathbb{R}^2$, we have:
$$ a_{ \hbar}(x+2 \pi, y) = a_{ \hbar}(x, y) = a_{ \hbar}(x, y+1) .$$
Define the Weyl quantization of the symbol $a_{ \hbar}$, denoted by $ \Op^w(a_{ \hbar})(x, \hbar D_x)$, by the following integral formula, for $u \in \mathcal{S}( \mathbb{R})$:
$$ \Op^w(a_{ \hbar})(x, \hbar D_x) u(x) = \dfrac{1}{2 \pi \hbar}
 \int_{ \mathbb{R}} \int_{ \mathbb{R}} e^{i(x-y) \xi/ \hbar} a _{ \hbar}\left( \dfrac{x+y}{2}, \xi \right) u(y) dy d \xi .$$
We call $a_{ \hbar}$ the symbol of the pseudo-differential operator $ \Op^w(a_{ \hbar})(x, \hbar D_x)$. 
\end{defi}

Recall that if $a_{ \hbar} \in S( \mathbb{R}^2)$, then (see for example the book of Maciej Zworski \cite[Chapter 3]{MR2952218}):
\begin{enumerate}
\item[1.] $ \Op^w(a_{ \hbar})(x, \hbar D_x): \mathcal{S}( \mathbb{R}) \longrightarrow \mathcal{S}( \mathbb{R})$;
\item[2.] $ \Op^w(a_{ \hbar})(x, \hbar D_x): \mathcal{S}'( \mathbb{R}) \longrightarrow \mathcal{S}'( \mathbb{R})$;
\end{enumerate}
are continuous linear transformations and the action of $ \Op^w(a_{ \hbar})$ on $ \mathcal{S}'( \mathbb{R})$ is defined, for $\psi \in \mathcal{S}'( \mathbb{R})$ and $\phi \in \mathcal{S}( \mathbb{R})$, by:
\begin{equation} \label{equa_action_operateur_pseudo_sur_S'(R)}
\langle \Op^w(a_{ \hbar}) \psi, \phi \rangle_{ \mathcal{S}', \mathcal{S}} = \langle \psi, \Op^w( \tilde{a}_{ \hbar}) \phi \rangle_{ \mathcal{S}', \mathcal{S}},
\end{equation}
where, for $(x, y) \in \mathbb{R}^2$, $ \tilde{a}_{ \hbar}(x, y) := a_{ \hbar}(x, - y) \in S(\mathbb{R}^2)$. This property allows to easily prove the following proposition (see \cite{MR2679813}).

\begin{prop}
Let $a_{ \hbar} \in \mathcal{C}^{ \infty}_{ \hbar}( \mathbb{R}^2)$ be a function such that, for all $(x, y) \in \mathbb{R}^2$, we have:
$$ a_{ \hbar }(x + 2 \pi,y) = a_{ \hbar}(x, y) = a_{ \hbar}(x, y+1) .$$
Then, if $ \hbar = \dfrac{1}{k}$ for $k \geq 1$, we have: $ \Op^w(a_{ \hbar})(x, \hbar D_x): \mathcal{L}_k \longrightarrow \mathcal{L}_k$.
\end{prop}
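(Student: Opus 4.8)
\medskip
\noindent\emph{Proof strategy.} The plan is to derive the statement from the covariance of the Weyl quantization under translations and modulations, together with the double periodicity of the symbol. First, by Remark~\ref{rema_symbole_perio_dans_S(R2)} the symbol $a_{\hbar}$ lies in $S(\mathbb{R}^2)$, so, by the continuity of $\Op^w(a_{\hbar})(x,\hbar D_x)$ on $\mathcal{S}'(\mathbb{R})$ recalled above, this operator is well defined on $\mathcal{L}_k\subset\mathcal{S}'(\mathbb{R})$; it then remains only to check that it preserves the two Floquet conditions $\tau_{2\pi}\psi=u^k\psi$ and $\tau_1\mathcal{F}_{\hbar}\psi=v^{-k}\mathcal{F}_{\hbar}\psi$ defining $\mathcal{L}_k$.

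For the condition in position, I would first establish, for $\phi\in\mathcal{S}(\mathbb{R})$ and $a\in\mathbb{R}$, the covariance identity
$$\Op^w(a_{\hbar})(x,\hbar D_x)\,\tau_a\phi = \tau_a\,\Op^w\big(a_{\hbar}(\cdot+a,\cdot)\big)(x,\hbar D_x)\,\phi$$
by a translation of the integration variable $y$ in the oscillatory integral of Definition~\ref{defi_quantif_weyl_tore_integrale}, and then extend it to $\psi\in\mathcal{S}'(\mathbb{R})$ by transposition, using the duality formula~(\ref{equa_action_operateur_pseudo_sur_S'(R)}), the definition of $\tau_a$ on tempered distributions, and the fact that $\tilde{a}_{\hbar}$ inherits the same periodicities as $a_{\hbar}$. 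Since $a_{\hbar}(\cdot+2\pi,\cdot)=a_{\hbar}$, this gives $\Op^w(a_{\hbar})\,\tau_{2\pi}=\tau_{2\pi}\,\Op^w(a_{\hbar})$, whence for $\psi\in\mathcal{L}_k$ one obtains $\tau_{2\pi}\,\Op^w(a_{\hbar})\psi=\Op^w(a_{\hbar})\,\tau_{2\pi}\psi=u^k\,\Op^w(a_{\hbar})\psi$.

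For the condition in frequency, the starting point is the elementary identity $\mathcal{F}_{\hbar}\circ M=\tau_1\circ\mathcal{F}_{\hbar}$ on $\mathcal{S}'(\mathbb{R})$, where $M$ denotes multiplication by $e^{ix/\hbar}$. As $\hbar=1/k$, this $M$ is multiplication by $e^{ikx}$, the operator implementing the second quasi-periodicity, and the identity shows that the condition $\tau_1\mathcal{F}_{\hbar}\psi=v^{-k}\mathcal{F}_{\hbar}\psi$ is equivalent to $M\psi=v^{-k}\psi$. Next, exactly as before but using a translation of the integration variable $\xi$ and the identity $e^{i(x-y)\xi/\hbar}\,e^{iy/\hbar}=e^{ix/\hbar}\,e^{i(x-y)(\xi-1)/\hbar}$, I would prove the modulation covariance $\Op^w(a_{\hbar})\circ M=M\circ\Op^w\big(a_{\hbar}(\cdot,\cdot+1)\big)$, first on $\mathcal{S}(\mathbb{R})$ and then on $\mathcal{S}'(\mathbb{R})$ by transposition. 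Since $a_{\hbar}(\cdot,\cdot+1)=a_{\hbar}$, this reads $\Op^w(a_{\hbar})\circ M=M\circ\Op^w(a_{\hbar})$, so for $\psi\in\mathcal{L}_k$ one gets $M\,\Op^w(a_{\hbar})\psi=\Op^w(a_{\hbar})\,M\psi=v^{-k}\,\Op^w(a_{\hbar})\psi$, that is $\tau_1\mathcal{F}_{\hbar}\big(\Op^w(a_{\hbar})\psi\big)=v^{-k}\mathcal{F}_{\hbar}\big(\Op^w(a_{\hbar})\psi\big)$. Together with the previous paragraph, this shows $\Op^w(a_{\hbar})\psi\in\mathcal{L}_k$.

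The only point requiring real care is the passage from $\mathcal{S}(\mathbb{R})$ to $\mathcal{S}'(\mathbb{R})$ in the two covariance identities, where one must track the transposes of $\tau_a$, of $M$ and of $\mathcal{F}_{\hbar}$ and their compatibility with~(\ref{equa_action_operateur_pseudo_sur_S'(R)}); everything else reduces to two elementary changes of variables in the oscillatory integral. (Alternatively, one could expand the doubly periodic symbol in a Fourier series $a_{\hbar}(x,\xi)=\sum_{m,n}c_{m,n}(\hbar)\,e^{imx}e^{2\pi in\xi}$ and observe that each monomial quantizes to a magnetic translation that manifestly preserves $\mathcal{L}_k$; but this approach then requires controlling the convergence of the resulting series of operators on distributions, which is less transparent than the covariance argument.)
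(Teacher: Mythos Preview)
Your proposal is correct and aligns with the paper's own treatment: the paper does not spell out a proof but only says ``This property allows to easily prove the following proposition (see \cite{MR2679813})'', where ``this property'' is precisely the continuity of $\Op^w(a_{\hbar})$ on $\mathcal{S}'(\mathbb{R})$ that you invoke at the outset. Your covariance argument---commutation of $\Op^w(a_{\hbar})$ with $\tau_{2\pi}$ and with the modulation $M=e^{ix/\hbar}$, obtained by translating $y$ and $\xi$ respectively in the oscillatory integral and using the double periodicity of $a_{\hbar}$---is exactly the standard way to complete the details, and your care in extending these identities from $\mathcal{S}(\mathbb{R})$ to $\mathcal{S}'(\mathbb{R})$ via the duality formula~(\ref{equa_action_operateur_pseudo_sur_S'(R)}) is appropriate. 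The alternative you mention (Fourier expansion of the symbol into elementary Weyl--Heisenberg translations) is in fact the route underlying Definition~\ref{defi_quantif_weyl_tore_somme} and Proposition~\ref{prop_qauntif_weyl_Lk_avec_somme}, so both approaches appear in the paper in slightly different guises.
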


Since we consider a symbol $a_{ \hbar} \in \mathcal{C}^{ \infty}_{ \hbar}( \mathbb{R}^2)$ which is periodic, we can rewrite it as a Fourier series, for all $(x, y) \in \mathbb{R}^2$:
\begin{equation} \label{equation_ecriture_somme_symbole_tore}
a_{\hbar}(x, y) = \sum_{(m, n) \in \mathbb{Z}^2} a_{m, n}^{\hbar} e^{ixn} e^{-i 2 \pi y m} 
\end{equation}
where $\left(a_{m, n}^{\hbar}\right)_{(m, n) \in \mathbb{Z}^2}$ is a sequence of complex coefficients depending on the semi-classical parameter $\hbar$. Recall an other definition of the Weyl quantization of a symbol on the torus, linked to Equation \eqref{equation_ecriture_somme_symbole_tore}, found in the book of Monique Cobescure and Didier Robert \cite[Chapter 6]{MR2952171}. By convention, this definition uses the parameter $k$, which is the inverse of the semi-classical parameter $ \hbar$. Throughout this text, we will make the abuse of notation of using $a_k$ and $a_{ \hbar}$ for the same object where $\hbar = 1/k$.

\begin{defi}[Second definition of the Weyl quantization of the torus] \label{defi_quantif_weyl_tore_somme}
Let $a_k \in \mathcal{C}^{ \infty}_k( \mathbb{R}^2)$ be a function such that, for all $(x, y) \in \mathbb{R}^2$, we have:
$$ a_k(x+ 2 \pi, y) = a_k(x, y) = a_k(x, y+1).$$
Define the Weyl quantization of the symbol $a_k$, denoted by $ \Op^w_k(a_k)$, by the following formula:
$$ \Op^w_k(a_k) = \sum_{(m, n) \in \mathbb{Z}^2} a_{m, n}^k \hat{T} \left( \dfrac{2 \pi m}{k}, \dfrac{n}{k} \right),$$
where the sequence $\left(a_{m,n}^k \right)_{(m,n) \in \mathbb{Z}^2}$ is defined by Equation \eqref{equation_ecriture_somme_symbole_tore} and where $ \hat{T}(p, q)$ is the Weyl-Heisenberg translation operator by a vector $(p, q) \in \mathbb{R}^2$ defined, for $ \phi \in \mathcal{S}( \mathbb{R})$, by:
$$ \hat{T}(p, q) \phi(x) = e^{-iqpk/2} e^{ixqk} \phi(x-p).$$ 
\end{defi}

\begin{prop}[\cite{MR2952171}] \label{prop_qauntif_weyl_Lk_avec_somme}
Let $a_k \in \mathcal{C}^{ \infty}_k( \mathbb{R}^2)$ be a function such that, for all $(x, y) \in \mathbb{R}^2$, we have:
$$ a_k(x + 2 \pi, y) = a_k(x, y) = a_k(x, y+1).$$
Then, we have: $ \Op^w_k(a_k) : \mathcal{L}_k \longrightarrow \mathcal{L}_k$.
\end{prop}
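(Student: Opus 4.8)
The plan is to use the expression of $\Op^w_k(a_k)$ as the series $\sum_{(m,n)\in\mathbb Z^2}a^k_{m,n}\,\hat{T}(2\pi m/k,\, n/k)$ from Definition \ref{defi_quantif_weyl_tore_somme}, and to reduce the statement to the termwise claim that each Weyl--Heisenberg translation $\hat{T}(2\pi m/k,\, n/k)$ maps $\mathcal{L}_k$ into itself. Granting this, one concludes by summing: since $a_k$ is smooth and $2\pi\times 1$-periodic, its Fourier coefficients $(a^k_{m,n})$ are rapidly decreasing, hence absolutely summable; moreover the operators $\hat{T}(2\pi m/k,\, n/k)$ restrict to uniformly bounded operators on the finite-dimensional space $\mathcal{L}_k$ (up to unimodular scalars they take only finitely many values there, because $\hat{T}(2\pi,0)$ and $\hat{T}(0,1)$ act on $\mathcal{L}_k$ as the scalars $u^{-k}$ and $v^{k}$ by the very definition of $\mathcal{L}_k$). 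Hence the series converges in operator norm on $\mathcal{L}_k$ and its sum preserves $\mathcal{L}_k$ because each term does.

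It remains to check, for $p=2\pi m/k$ and $q=n/k$, that $\hat{T}(p,q)$ preserves the two quasi-periodicity conditions defining $\mathcal{L}_k$; all the identities below are first verified on $\mathcal{S}(\mathbb R)$ and then extended to $\mathcal{S}'(\mathbb R)$ by duality. For the condition $\tau_{2\pi}\psi=u^k\psi$, a direct computation from $\hat{T}(p,q)\phi(x)=e^{-iqpk/2}e^{ixqk}\phi(x-p)$ gives $\tau_{2\pi}(\hat{T}(p,q)\psi)=e^{2\pi i qk}\,\hat{T}(p,q)(\tau_{2\pi}\psi)$, and $e^{2\pi i qk}=e^{2\pi i n}=1$, so $\tau_{2\pi}(\hat{T}(p,q)\psi)=u^k\,\hat{T}(p,q)\psi$. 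For the condition on $\mathcal{F}_\hbar\psi$, I would first establish the intertwining relation $\mathcal{F}_\hbar\circ\hat{T}(p,q)=\hat{T}(q,-p)\circ\mathcal{F}_\hbar$, obtained from the substitution $x\mapsto x-p$ in the defining integral of $\mathcal{F}_\hbar$ together with a careful bookkeeping of the cocycle factor $e^{-iqpk/2}$. Then $\tau_1\mathcal{F}_\hbar(\hat{T}(p,q)\psi)=\tau_1\hat{T}(q,-p)(\mathcal{F}_\hbar\psi)$, and the same kind of computation as above, now with the roles of translation and modulation exchanged, shows that applying $\tau_1$ to $\hat{T}(q,-p)(\mathcal{F}_\hbar\psi)$ brings out the phase $e^{-ipk}=e^{-2\pi i m}=1$; combined with $\tau_1\mathcal{F}_\hbar\psi=v^{-k}\mathcal{F}_\hbar\psi$ this yields $\tau_1\mathcal{F}_\hbar(\hat{T}(p,q)\psi)=v^{-k}\mathcal{F}_\hbar(\hat{T}(p,q)\psi)$, as required.

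The only substantive point --- and the place I expect to need care --- is the arithmetic behind these cancellations: the lattice in Definition \ref{defi_quantif_weyl_tore_somme} is rescaled by $1/k$ precisely so that the translation vectors $2\pi m/k$ (in $x$) and $n/k$ (in $\xi$) enter the modulation phases as $e^{\pm 2\pi i\,\mathbb Z}=1$, which is exactly what makes the $u^k$- and $v^{-k}$-quasi-periodicities survive. Getting the intertwining relation and the Weyl--Heisenberg cocycle $e^{-iqpk/2}$ exactly right is the only computation requiring real attention; everything else is routine, and one could alternatively deduce the proposition from the corresponding statement for the integral form of the quantization (Definition \ref{defi_quantif_weyl_tore_integrale} together with the duality action on $\mathcal{S}'(\mathbb R)$) once the equivalence of the two definitions of the Weyl quantization is known.
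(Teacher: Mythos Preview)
Your argument is correct. The paper does not give its own proof of this proposition: it is stated with a citation to \cite{MR2952171} and no proof is supplied, so there is nothing to compare against directly. What you wrote is essentially the standard proof from that reference.

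A few confirmations of the points you flagged as needing care. The intertwining relation $\mathcal{F}_\hbar\circ\hat T(p,q)=\hat T(q,-p)\circ\mathcal{F}_\hbar$ is correct with the conventions of the paper (a direct change of variables gives $\mathcal{F}_\hbar(\hat T(p,q)\phi)(\xi)=e^{ipqk/2}e^{-ip\xi k}(\mathcal{F}_\hbar\phi)(\xi-q)$, which matches $\hat T(q,-p)$ exactly). Your claim that $\hat T(2\pi,0)$ and $\hat T(0,1)$ act on $\mathcal{L}_k$ as the scalars $u^{-k}$ and $v^{k}$ is also correct, and from the Weyl--Heisenberg commutation relation this indeed forces the family $\hat T(2\pi m/k,n/k)$ to take only $k^2$ values on $\mathcal{L}_k$ up to unimodular factors; in fact each such operator permutes the basis $(\epsilon_l)_l$ up to phases, hence is unitary for the Hilbert structure declared on $\mathcal{L}_k$. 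This gives the uniform bound you need for the series to converge in operator norm. The alternative route you mention at the end --- deducing everything from the integral Definition~\ref{defi_quantif_weyl_tore_integrale} via the duality action \eqref{equa_action_operateur_pseudo_sur_S'(R)} --- is exactly the proof of the preceding proposition in the paper, so either way works.
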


\begin{rema}[\cite{MR2679813}]
Definition \ref{defi_quantif_weyl_tore_integrale} and Definition \ref{defi_quantif_weyl_tore_somme} coincides in the sense that, if $a_{ \hbar} = a_k \in \mathcal{C}^{ \infty}_{ \hbar}( \mathbb{R}^2)$ is a function such that, for all $(x, y) \in \mathbb{R}^2$, we have:
$$ a_{\hbar}(x + 2 \pi, y) = a_{\hbar}(x, y) = a_{\hbar}(x, y+1).$$
Then, $ \Op^w(a_{ \hbar}) = \Op^w_k(a_k)$ on the space $ \mathcal{L}_k$.
\end{rema}

\subsubsection{Bargmann transform}
In this paragraph, we recall the definition of the semi-classical Bargmann transform and we study some of its properties. The principal difference with the transform introduced by Valentine Bargmann in the article \cite{MR0201959} is the weight function that we choose. The semi-classical Bargmann transform has been studied by Anders Melin, Michael Hitrik and Johannes Sjöstrand in \cite{MR1957486,MR2003421,MR2036816} and by the last two authors in the mini-course \cite{HJ}. Here, we investigate the action of the semi-classical Bargmann transform on the Schwartz space, on the tempered distributions space and on the space $ \mathcal{L}_k$. \\

First, we recall the definition of the Bargmann transform and its first properties (see for example the book of Maciej Zworski \cite[Chapter 13]{MR2952218}). 

\begin{defi}[Bargmann transform and its canonical transformation] \label{defi_Bargmann_transform}
Let $ \phi_1$ be the holomorphic quadratic function defined, for $(z, x) \in \mathbb{C} \times \mathbb{C}$, by:
$$ \phi_1(z,x) = \dfrac{i}{2} (z-x)^2 .$$
The Bargmann transform associated with the function $ \phi_1$ is the operator, denoted by $T_{ \phi_1}$, defined on $ \mathcal{S}( \mathbb{R})$ by:
$$ T_{ \phi_1} u(z) = c_{ \phi_1} \hbar^{-3/4} \int_{ \mathbb{R}} e^{(i/\hbar) \phi_1(z,x)} u(x) dx =  c_{ \phi_1} \hbar^{-3/4} \int_{ \mathbb{R}} e^{-(1/2 \hbar)(z - x)^2} u(x) dx,$$
where:
\begin{equation} \label{formule_constante_c_phi}
c_{ \phi_1} = \dfrac{1}{2^{1/2} \pi^{3/4}} \dfrac{| \det \partial_x \partial_z \phi_1|}{(\det \Im \partial^2_x \phi_1 )^{1/4}} = \dfrac{1}{2^{1/2} \pi^{3/4}}.
\end{equation}
Define the canonical transformation associated with $T_{ \phi_1}$ by:
\begin{align*}
\kappa_{ \phi_1}: \mathbb{C} \times \mathbb{C} & \longrightarrow \mathbb{C} \times \mathbb{C}, \\
(x, - \partial_x \phi_1(z, x)) =: (x, \xi) & \longmapsto (z, \partial_z \phi_1(z, x)) = (x- i\xi, \xi).
\end{align*}
\end{defi}

We have the following properties on the Bargmann transform (see for example \cite[Chapter 13]{MR2952218}).

\begin{prop} \label{prop_ecriture_integrale_Pi_Phi1} $ $
\begin{enumerate}
\item $ T_{ \phi_1}$ extends to a unitary transformation: $L^2( \mathbb{R}) \longrightarrow H_{ \hbar}( \mathbb{C}, \Phi_1)$.
\item If $ T_{ \phi_1}^* :L^2_{ \hbar}( \mathbb{C}, \Phi_1) \longrightarrow L^2( \mathbb{R})$ denotes the adjoint of $T_{ \phi_1}: L^2( \mathbb{R}) \longrightarrow L^2_{ \hbar}( \mathbb{C}, \Phi_1)$, then it is given by the following formula, for $v \in L^2_{ \hbar}( \mathbb{C}, \Phi_1)$:
$$ T_{ \phi_1}^* v(x) = c_{ \phi_1} \hbar^{-3/4} \int_{ \mathbb{C}} e^{-(1/2 \hbar) ( \overline{z} - x)^2} e^{-2 \Phi_1(z)/ \hbar} v(z) L( dz).$$
\item Let $ \psi_1$ be the unique holomorphic quadratic form on $ \mathbb{C} \times \mathbb{C}$ such that, for all $z \in \mathbb{C}$, we have:
$$ \psi_1(z, \overline{z}) = \Phi_1(z).$$
Then the orthogonal projection $ \Pi_{ \Phi_1, \hbar} : L^2_{ \hbar}( \mathbb{C}, \Phi_1) \longrightarrow H_{ \hbar}( \mathbb{C}, \Phi_1)$ is given by the following formula:
$$ \Pi_{ \Phi_1, \hbar} u(z) = \dfrac{2 \det \partial^2_{z, w} \psi_1}{\pi \hbar} \int_{ \mathbb{C}} e^{2( \psi_1(z, \overline{w})- \Phi_1(w))/ \hbar} u(w) dw d \overline{w}.$$
Moreover, $ \Pi_{ \Phi_1, \hbar} = T_{ \phi_1} T_{ \phi_1}^*$.
\end{enumerate}
\end{prop}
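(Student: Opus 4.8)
The plan is to prove the three assertions in the order (1), (2), (3): the isometry of $T_{\phi_1}$ from (1) is the starting point, the adjoint formula of (2) is the main ingredient of (3), and the explicit integral kernel produced in (3) is precisely what promotes the isometry of (1) to a unitary onto $H_{\hbar}(\CC,\Phi_1)$. For (1), I would first check holomorphy: for $u\in\Sr(\RR)$ the integrand $e^{-(1/2\hbar)(z-x)^2}u(x)$ is entire in $z$ with Schwartz-controlled decay in $x$, so one differentiates under the integral sign to get $\overline\partial_z\bigl(T_{\phi_1}u\bigr)=0$, and the same bounds show $T_{\phi_1}u\in L^2_{\hbar}(\CC,\Phi_1)$. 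For the isometry I would write $\|T_{\phi_1}u\|_{L^2_{\hbar}(\CC,\Phi_1)}^2$ as an integral over $(x,x',z)$, apply Fubini (justified by the Gaussian bounds), and integrate in $z=p+iq$ first: the $q$-integral is a Fourier-inversion integral producing $\delta(x-x')$ and the $p$-integral is an elementary Gaussian, the normalisation $c_{\phi_1}=2^{-1/2}\pi^{-3/4}$ being exactly the value for which the resulting constant equals $1$. Hence $T_{\phi_1}$ extends to an isometry $L^2(\RR)\to L^2_{\hbar}(\CC,\Phi_1)$ whose range is closed and, since $L^2_{\hbar}(\CC,\Phi_1)$-limits of holomorphic functions are holomorphic, contained in $H_{\hbar}(\CC,\Phi_1)$; surjectivity is deferred to (3).

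For (2), I would pair $T_{\phi_1}u$ with $v\in L^2_{\hbar}(\CC,\Phi_1)$ in $L^2_{\hbar}(\CC,\Phi_1)$, insert the integral defining $T_{\phi_1}u$, and interchange the two integrations (legitimate by Cauchy--Schwarz together with the decay of the Gaussian kernel against the weight $e^{-2\Phi_1(z)/\hbar}$). Comparing the result with $\langle u,T_{\phi_1}^*v\rangle_{L^2(\RR)}$ and using that $x$ is real, so that $\overline{e^{-(1/2\hbar)(z-x)^2}}=e^{-(1/2\hbar)(\overline z-x)^2}$ while $c_{\phi_1}$ and $\hbar$ are real, reads off the stated formula for $T_{\phi_1}^*$; density of $\Sr(\RR)$ in $L^2(\RR)$ then extends it to all of $L^2_{\hbar}(\CC,\Phi_1)$.

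For (3), since $T_{\phi_1}$ is an isometry with closed range $\mathrm{Ran}\,T_{\phi_1}\subseteq H_{\hbar}(\CC,\Phi_1)$, the operator $T_{\phi_1}T_{\phi_1}^*$ is automatically the orthogonal projection of $L^2_{\hbar}(\CC,\Phi_1)$ onto $\mathrm{Ran}\,T_{\phi_1}$. I would then compute $T_{\phi_1}T_{\phi_1}^*u(z)$ by composing the two integral formulas and performing the $x$-integral: completing the square, $(z-x)^2+(\overline w-x)^2=2\bigl(x-\tfrac{z+\overline w}{2}\bigr)^2+\tfrac{1}{2}(z-\overline w)^2$, which produces a factor $\sqrt{\pi\hbar}\,e^{-(1/4\hbar)(z-\overline w)^2}$. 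Identifying $\psi_1$ as the polarisation of $\Phi_1$, namely $\psi_1(z,w)=-\tfrac{1}{8}(z-w)^2$ (so that $\psi_1(z,\overline z)=\tfrac{1}{2}(\Im z)^2=\Phi_1(z)$ and $\det\partial^2_{z,w}\psi_1=\tfrac{1}{4}$), one recognises $e^{-(1/4\hbar)(z-\overline w)^2}=e^{2\psi_1(z,\overline w)/\hbar}$, and collecting the constants $c_{\phi_1}^2\hbar^{-3/2}\sqrt{\pi\hbar}=\tfrac{1}{2\pi\hbar}$ against the convention $L(dw)=\tfrac{i}{2}\,dw\wedge d\overline w$ recovers the claimed prefactor $\tfrac{2\det\partial^2_{z,w}\psi_1}{\pi\hbar}$. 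It then remains to check that this integral operator acts as the identity on $H_{\hbar}(\CC,\Phi_1)$; after the substitution $w\mapsto w+z$ this reduces to a Gaussian-moment computation exploiting the holomorphy of the integrand. Once this reproducing-kernel identity is in hand, $\mathrm{Ran}\,T_{\phi_1}=H_{\hbar}(\CC,\Phi_1)$, which simultaneously completes the surjectivity claim of (1) and identifies $T_{\phi_1}T_{\phi_1}^*$ with $\Pi_{\Phi_1,\hbar}$.

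I expect the main obstacle to be precisely this last reproducing-kernel identity --- equivalently, the surjectivity of the Bargmann transform onto $H_{\hbar}(\CC,\Phi_1)$ (which one may alternatively import from the standard theory, see e.g.\ \cite[Chapter 13]{MR2952218}); the remaining steps amount to Gaussian integral evaluations and routine Fubini justifications, the only genuine care being the bookkeeping of the constants $c_{\phi_1}$, the measure $L(dz)$, and the quadratic form $\psi_1$.
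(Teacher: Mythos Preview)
Your proposal is correct and carefully organised; the Gaussian computations, the identification $\psi_1(z,w)=-\tfrac{1}{8}(z-w)^2$ with $\det\partial^2_{z,w}\psi_1=\tfrac14$, and the constant bookkeeping all check out, and your strategy of recovering surjectivity in (1) from the reproducing-kernel identity in (3) is exactly the standard route. The paper itself does not give a proof of this proposition at all --- it simply records the statement and refers to \cite[Chapter~13]{MR2952218} --- so your argument is more self-contained than what the paper offers.
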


The following proposition gives a connection between the Weyl quantization of $ \mathbb{R}^2$ and the complex Weyl quantization of $ \mathbb{R}^2$ (see for example the mini-course \cite{HJ}).

\begin{prop}\label{prop_transfor_Bargmann_et_qauntif_weyl}
Let $a_{ \hbar} \in S( \mathbb{R}^2)$ be a function admitting an asymptotic expansion in powers of $ \hbar$. Let $ \Op^w_{ \Phi_1}(b_{ \hbar}) := T_{ \phi_1} \Op^w(a_{ \hbar}) T_{ \phi_1}^*$. Then:
\begin{enumerate}
\item[1.] $ \Op^w_{ \Phi_1}(b_{ \hbar}) : H_{ \hbar}( \mathbb{C}, \Phi_1) \longrightarrow H_{ \hbar}( \mathbb{C}, \Phi_1)$ is uniformly bounded with respect to $\hbar$;
\item[2.] $ \Op^w_{ \Phi_1}(b_{ \hbar})$ is given by the following contour integral:
$$ \Op^w_{ \Phi_1}(b_{ \hbar}) u(z) = \dfrac{1}{2 \pi \hbar} \int \! \! \! \int_{\Gamma(z)} e^{(i/\hbar) (z-w)\zeta} b_{ \hbar} \left( \dfrac{z+w}{2}, \zeta \right) u(w) dw d\zeta, $$
where $ \Gamma(z) = \left\lbrace (w, \zeta) \in \mathbb{C}^2; \zeta = \dfrac{2}{i} \dfrac{\partial \Phi_1}{\partial z} \left( \dfrac{z+w}{2} \right) = - \Im \left( \dfrac{z+w}{2} \right) \right\rbrace$, where the symbol $b_{ \hbar}$ is given by $ b_{ \hbar} = a_{ \hbar} \circ \kappa_{ \phi_1}^{-1}$ and where the canonical transformation $ \kappa_{ \phi_1}$ is defined by:
\begin{align*}
\kappa_{ \phi_1}: \mathbb{R}^2 & \longrightarrow \Lambda_{ \Phi_1} = \lbrace (z, - \Im(z)); z \in \mathbb{C} \rbrace \\
(x, \xi) & \longmapsto (x-i \xi, \xi).
\end{align*}
\end{enumerate} 
\end{prop}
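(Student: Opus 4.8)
\emph{Sketch of the intended proof.} The plan for part~(1) is to combine a standard $L^2$-bound with the unitarity of the Bargmann transform. Since $a_{\hbar}\in S(\mathbb{R}^2)$ with all seminorms bounded uniformly in $\hbar$, the Calder\'on--Vaillancourt theorem (see e.g.\ \cite[Chapter~4]{MR2952218}) gives $\|\Op^w(a_{\hbar})\|_{L^2(\mathbb{R})\to L^2(\mathbb{R})}\le C$ with $C$ independent of $\hbar$. By Proposition~\ref{prop_ecriture_integrale_Pi_Phi1}(1) the transform $T_{\phi_1}$ is a unitary isomorphism $L^2(\mathbb{R})\to H_{\hbar}(\mathbb{C},\Phi_1)$, and $T_{\phi_1}^*$ restricted to $H_{\hbar}(\mathbb{C},\Phi_1)$ is its inverse; hence $\Op^w_{\Phi_1}(b_{\hbar})=T_{\phi_1}\Op^w(a_{\hbar})T_{\phi_1}^*$ is bounded on $H_{\hbar}(\mathbb{C},\Phi_1)$ with operator norm $\le C$, uniformly in $\hbar$.

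For part~(2) the plan is to compute the kernel of the composition $T_{\phi_1}\Op^w(a_{\hbar})T_{\phi_1}^*$ by composing the integral representation of $T_{\phi_1}$, the Weyl formula of Definition~\ref{defi_quantif_weyl_tore_integrale}, and the formula for $T_{\phi_1}^*$ from Proposition~\ref{prop_ecriture_integrale_Pi_Phi1}(2). This produces an oscillatory integral in auxiliary variables $(x,y,\xi)\in\mathbb{R}^3$ whose phase is quadratic — assembled from $\phi_1(z,x)=\tfrac{i}{2}(z-x)^2$, the Weyl exponent $(x-y)\xi$, and the Gaussian $-\tfrac12(\overline{w}-y)^2$ — and whose amplitude is $a_{\hbar}\!\left(\tfrac{x+y}{2},\xi\right)$. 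Because $\phi_1$ is holomorphic quadratic and $\Phi_1$ real quadratic, this phase is nondegenerate with critical point depending affinely on $(z,w)$, and I would evaluate the integral by the method of stationary phase with a complex quadratic phase, using an almost-holomorphic extension of $a_{\hbar}$ and the Melin--Sj\"ostrand lemma; as the phase is exactly quadratic the expansion terminates with no remainder, and the Gaussian normalisation constants collapse to the stated $c_{\phi_1}$ and $\tfrac{1}{2\pi\hbar}$, leaving an integral over $w$ against a holomorphic quadratic exponential.

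The remaining step is to bring this $w$-integral into the contour-integral form over $\Gamma(z)$: one deforms the Gaussian-type contour produced by the computation onto $\Gamma(z)$ by Stokes' theorem, using an almost-holomorphic extension of $a_{\hbar}$ to make the integrand (almost) holomorphic in $w$ and the Gaussian decay to kill both the $\overline{\partial}$-error and the boundary term at infinity. To identify the symbol one computes the critical point $(x_c,y_c,\xi_c)$ of the auxiliary phase — a brief linear-algebra computation — and checks that, after restriction to $\Gamma(z)$, i.e.\ $\zeta=-\Im\!\left(\tfrac{z+w}{2}\right)$, the point at which $a_{\hbar}$ is evaluated equals exactly $\kappa_{\phi_1}^{-1}\!\left(\tfrac{z+w}{2},\zeta\right)$ with $\kappa_{\phi_1}(x,\xi)=(x-i\xi,\xi)$; equivalently, this is dictated by the exact covariance of the Weyl calculus under the affine complex canonical transformation $\kappa_{\phi_1}$, of which $T_{\phi_1}$ is the metaplectic-type intertwiner, so no error term appears. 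Hence $b_{\hbar}=a_{\hbar}\circ\kappa_{\phi_1}^{-1}$. As a check on the constants one may run the whole computation on plane-wave symbols, for which $\Op^w(a_{\hbar})$ is a constant multiple of a Weyl--Heisenberg translation operator $\hat T(p,q)$ as in Definition~\ref{defi_quantif_weyl_tore_somme}, and every integral involved is elementary.

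I expect the principal obstacle to be the rigorous control of these manoeuvres for a symbol that is only in $S(\mathbb{R}^2)$, not analytic or compactly supported: one must work with an almost-holomorphic extension of $a_{\hbar}$ throughout, verify that the $\overline{\partial}$-contributions generated by the complex stationary-phase evaluation and by the contour deformation cancel exactly — which they do precisely because the phase is purely quadratic — and invoke the uniform bound from part~(1) together with the Gaussian weights to dominate all tails. Apart from that, and the careful accounting of the constant $c_{\phi_1}$ and the powers of $\hbar$, the argument is routine.
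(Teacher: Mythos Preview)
The paper does not give its own proof of this proposition: it is stated as a known result with a reference to the Hitrik--Sj\"ostrand mini-course \cite{HJ} (and implicitly \cite[Chapter~13]{MR2952218}). Your sketch follows precisely the standard argument found in those references --- Calder\'on--Vaillancourt plus unitarity of $T_{\phi_1}$ for part~(1), and composition of the three integral kernels followed by exact quadratic stationary phase and contour deformation for part~(2) --- so there is nothing to compare and your proposal is in line with the cited literature.
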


We study now the action of the Bargmann transform on the Schwartz space in the spirit of the article of Valentine Bargmann \cite{MR0201959}, except that in our case, we introduce a semi-classical parameter and a different weight function. Therefore, for the sake of completeness, we recall the theory. To do so, we introduce some new notations. \\

\noindent \texttt{Notation:}
\begin{itemize}
\item for $ j \in \mathbb{N}$:
$$ \mathcal{S}^j( \mathbb{R}) := \left\lbrace \phi \in \mathcal{C}^j( \mathbb{R}); \| \phi \|_j := \max_{m \leq j} \left( \sup_{x \in \mathbb{R}} | (1+x^2)^{(j-m)/2} \partial^m_x \phi(x) | \right) < + \infty \right\rbrace ;$$
thus, the Schwartz space can be rewritten as follows:
$$ \mathcal{S}( \mathbb{R}) = \bigcap_{j=0}^{ \infty} \mathcal{S}^j( \mathbb{R}) = \left\lbrace \phi \in \mathcal{C}^{ \infty}( \mathbb{R}); \forall j \in \mathbb{N}, \| \phi \|_j < + \infty \right\rbrace ;$$
\item for $j \in \mathbb{N}$:
$$ \mathfrak{S}^j( \mathbb{C}) := \left\lbrace \psi \in \Hol( \mathbb{C}); | \psi|_j := \sup_{z \in \mathbb{C}} \left( \left(1+ |z|^2 \right)^{j/2} e^{- \Phi_1(z)/ \hbar} | \psi(z)| \right) < + \infty \right\rbrace;$$
\item we finally define:
$$ \mathfrak{S}( \mathbb{C}) := \bigcap_{j=0}^{ \infty} \mathfrak{S}^j( \mathbb{C}) = \left\lbrace \psi \in \Hol( \mathbb{C}); \forall j \in \mathbb{N}, | \psi |_j < + \infty \right\rbrace.$$
\end{itemize}

\begin{prop} \label{prop_Tphi_envoie_S(R)_sur_SS(C)} $ $
\begin{enumerate}
\item Let $j \in \mathbb{N}$, let $ \phi \in \mathcal{S}^j( \mathbb{R})$, then, for all $z \in \mathbb{C}$, we have the following estimate: 
\begin{equation} \label{equation_estimation_Tphi1_sur_Sj(R)}
| T_{ \phi_1} \phi(z) | \leq a_j^{ \hbar} \left( 1 + |z|^2 \right)^{-j/2} e^{ \Phi_1(z)/ \hbar} \| \phi \|_j,
\end{equation}
where $a_j^{ \hbar}$ is a constant depending on $j$ and on the semi-classical parameter $ \hbar$. As a result: $T_{ \phi_1} \mathcal{S}^j( \mathbb{R}) \subset \mathfrak{S}^j( \mathbb{C})$.
\item $T_{ \phi_1} \mathcal{S}( \mathbb{R}) \subset \mathfrak{S}( \mathbb{C})$.
\end{enumerate}
\end{prop}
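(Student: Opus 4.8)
The plan is to estimate the oscillatory integral defining $T_{\phi_1}\phi$ directly, handling the decay in the real and imaginary directions of $z=p+iq$ separately. Expanding $\phi_1(z,x)=\frac{i}{2}(z-x)^2$ with $z=p+iq$ gives $\Re\big((z-x)^2\big)=(p-x)^2-q^2$ and $\Im\big((z-x)^2\big)=2q(p-x)$, and since $\Phi_1(z)=\frac12 q^2$ one obtains
$$ e^{-\Phi_1(z)/\hbar}\,T_{\phi_1}\phi(z)\;=\;c_{\phi_1}\hbar^{-3/4}\int_{\mathbb{R}} e^{-(p-x)^2/(2\hbar)}\,e^{-iq(p-x)/\hbar}\,\phi(x)\,dx . $$
Writing $\langle t\rangle:=(1+t^2)^{1/2}$ and using the elementary inequality $1+|z|^2\le 2\max\big(\langle p\rangle^2,\langle q\rangle^2\big)$, it suffices to bound $\langle p\rangle^j$ and $\langle q\rangle^j$ times the modulus of the right-hand side above, each uniformly in the other variable, by a constant depending on $j$ and $\hbar$ times $\|\phi\|_j$.

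For the decay in $p$, I would take absolute values inside the integral, use that $\|\phi\|_j$ controls $\langle x\rangle^j|\phi(x)|$, and apply Peetre's inequality $\langle p\rangle^j\le 2^{j/2}\langle x\rangle^j\langle p-x\rangle^j$ to transfer the weight onto the Gaussian; this leaves the convergent integral $\int_{\mathbb{R}}\langle t\rangle^j e^{-t^2/(2\hbar)}\,dt$, so that $\langle p\rangle^j e^{-\Phi_1(z)/\hbar}|T_{\phi_1}\phi(z)|$ is bounded by a constant times $\|\phi\|_j$, uniformly in $q$. For the decay in $q$, I would integrate by parts $j$ times in $x$, using $\partial_x e^{-iq(p-x)/\hbar}=(iq/\hbar)e^{-iq(p-x)/\hbar}$; the Gaussian kills the boundary terms, and $(q/\hbar)^j$ times the integral becomes $(-1)^j\int_{\mathbb{R}} e^{-iq(p-x)/\hbar}\,\partial_x^j\big(e^{-(p-x)^2/(2\hbar)}\phi(x)\big)\,dx$. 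Expanding $\partial_x^j$ by the Leibniz rule, each factor $\partial_x^m e^{-(p-x)^2/(2\hbar)}$ equals $\hbar^{-m/2}$ times a polynomial of degree $m$ in $(p-x)/\sqrt\hbar$ times the Gaussian, hence is dominated (for fixed $\hbar$) by a constant times $e^{-(p-x)^2/(4\hbar)}$, while the remaining factor $\partial_x^{j-m}\phi(x)$ is bounded by $\|\phi\|_j$ since $j-m\le j$. Integrating the Gaussian then gives $|q|^j e^{-\Phi_1(z)/\hbar}|T_{\phi_1}\phi(z)|$ bounded by a constant times $\|\phi\|_j$, uniformly in $p$; combined with the $p$-estimate at $q=0$ this controls $\langle q\rangle^j$ times the left-hand side.

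Putting the two bounds together yields \eqref{equation_estimation_Tphi1_sur_Sj(R)} with a constant $a_j^\hbar$ depending on $j$ and $\hbar$, hence $T_{\phi_1}\mathcal{S}^j(\mathbb{R})\subset\mathfrak{S}^j(\mathbb{C})$. For the second assertion, recall $\mathcal{S}(\mathbb{R})=\bigcap_j\mathcal{S}^j(\mathbb{R})$ and $\mathfrak{S}(\mathbb{C})=\bigcap_j\mathfrak{S}^j(\mathbb{C})$; since $T_{\phi_1}\phi\in\Hol(\mathbb{C})$ (Proposition \ref{prop_ecriture_integrale_Pi_Phi1}, or by differentiation under the integral sign) and lies in $\mathfrak{S}^j(\mathbb{C})$ for every $j$ when $\phi\in\mathcal{S}(\mathbb{R})$, it lies in $\mathfrak{S}(\mathbb{C})$.

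The computations are routine; the one point needing care is the bookkeeping in the $q$-direction, namely that the seminorm $\|\phi\|_j$ alone suffices. This is exactly why the Leibniz expansion above must be organised so that the at most $j$ derivatives landing on $\phi$ never exceed order $j$, and why the Gaussian is used to absorb the Hermite-type polynomials at the admissible price of an $\hbar$-dependent constant. The elementary splitting $1+|z|^2\le 2\max(\langle p\rangle^2,\langle q\rangle^2)$ is what makes it unnecessary to establish a joint weighted estimate in $p$ and $q$.
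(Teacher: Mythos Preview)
Your argument is correct and complete. The key identity
\[
e^{-\Phi_1(z)/\hbar}\,T_{\phi_1}\phi(z)=c_{\phi_1}\hbar^{-3/4}\int_{\mathbb{R}} e^{-(p-x)^2/(2\hbar)}e^{-iq(p-x)/\hbar}\phi(x)\,dx
\]
is right, the Peetre step for $\langle p\rangle^j$ works exactly as you say, and the $j$-fold integration by parts against the oscillatory factor $e^{-iq(p-x)/\hbar}$ produces only derivatives of order at most $j$ on $\phi$, which is all that $\|\phi\|_j$ controls. The splitting $(1+|z|^2)^{j/2}\le 2^{j/2}\max(\langle p\rangle^j,\langle q\rangle^j)$ then recombines the two estimates.

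Your route is genuinely different from the paper's. The paper follows Bargmann's original argument: it Taylor-expands $F(\tau)=(T_{\phi_1}\phi)(\tau z)$ around $\tau=0$, splitting $T_{\phi_1}\phi(z)=p_j(z)+r_j(z)$ into a polynomial part and an integral remainder, bounds each using derivative estimates on $T_{\phi_1}\phi$ (this is where Hermite-type polynomials appear in the paper), and arrives at two separate bounds --- one with factor $(1+|z|^2)^{j/2}(1+\Im(z)^2)^{-j}$, the other with $(1+\Re(z)^2)^{-j/2}$ --- which are then merged. Your approach replaces the Taylor expansion in $z$ by integration by parts in $x$, exploiting the oscillation $e^{-iq(p-x)/\hbar}$ directly; this is the standard oscillatory-integral mechanism and is both shorter and more transparent for readers used to microlocal methods. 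The paper's Bargmann-style proof, by contrast, stays entirely on the holomorphic side and never differentiates $\phi$ explicitly until it invokes the bound $|\partial^m\phi|\le\|\phi\|_j$. Both methods ultimately split the decay into real and imaginary directions; yours just gets there faster.
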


\begin{proof}
We give a sketch of the proof ; for more details, see the article of Valentine Bargmann  \cite{MR0201959} where the argument can be adapted to the new weight. \\
\textbf{Step 1:} we prove using simple integral estimates that for $j=0$ and for $ \phi \in \mathcal{S}^0( \mathbb{R})$, there exists a constant $a_0^{ \hbar}$ such that, for all $z \in \mathbb{C}$, we have:
$$ | T_{ \phi_1} \phi(z) | \leq a_0^{ \hbar} e^{ \Phi_1(z)/ \hbar} \| \phi \|_0 .$$
\textbf{Step 2:} we prove that for $j \geq 1$ and for $ \phi \in \mathcal{S}^j( \mathbb{R})$, there exists a constant $a_j^{ \hbar}$ such that, for all $z \in \mathbb{C}$, we have:
$$ | T_{ \phi_1} \phi(z) | \leq a_j^{ \hbar} \left( 1 + |z|^2 \right)^{-j/2} e^{ \Phi_1(z)/ \hbar} \| \phi \|_j.$$
This step can be divided into 7 steps.
\begin{itemize}
\item \textbf{Step 2.1:} it follows from the definition of $\| \phi \|_j$ that:
\begin{enumerate}
\item[(a)] $ | \partial^m_x \phi(x) | \leq \| \phi \|_j$ for $m \leq j$;
\item[(b)] $ | \phi(x)| \leq \| \phi \|_j (1 + x^2)^{-j/2}$.
\end{enumerate}
\item \textbf{Step 2.2:} for $z \in \mathbb{C}$ and for $ \tau \in \mathbb{R}$, let: $F( \tau) = (T_{ \phi_1} \phi) ( \tau z)$, thus: $F(1) = (T_{ \phi_1} \phi)(z)$ and we can use the function $F$ to decompose $(T_{ \phi_1} \phi) (z)$ into two functions:
$$ F(1) = p_j(z) + r_j(z) ,$$
where:
$$ 
\left\lbrace
\begin{split}
p_j(z) & = \sum_{l=0}^{j-1} \dfrac{F^{(l)}(0)}{l!}, \\
r_j(z) & = \int_0^1 \dfrac{(1- \tau)^{j-1}}{(j-1)!} F^{(j)}( \tau) d \tau.
\end{split}
\right.$$
Then, we deduce the following estimates:
$$ 
\left\lbrace
\begin{split}
 \left|p_j(z) \right| & \leq \sum_{l=0}^{j-1} |z|^l \eta_l(0), \\
 \left| r_j(z) \right| & \leq \int_0^1 \dfrac{(1- \tau)^{j-1}}{(j-1)!} |z|^j \eta_j( \tau z) d \tau,
\end{split}
\right.$$
where $ \eta_l(z)$ is a bound on $ \partial^l (T_{ \phi_1} \phi) (z)$. \\
\item \textbf{Step 2.3:} we prove that the function $ \eta_l$ satisfies the following equality for $z \in \mathbb{C}$:
$$ \eta_l(z) = \beta e^{ \Phi_1(z)/ \hbar} \quad \text{for $l \leq j$},$$
where $ \beta = (\pi \hbar)^{-1/4} \| \phi \|_j$ using Lebesgue's theorem, Step 2.1 (a) and integral estimates. \\
\item \textbf{Step 2.4:} we deduce from Step 2.2 and Step 2.3 that, for $z \in \mathbb{C}$, we have the following estimates:
$$
\left\lbrace
\begin{split}
|p_j(z)| & \leq \beta \sum_{l=0}^{j-1} |z|^l, \\
|r_j(z)| & \leq \beta |z|^j (2 \hbar)^j e^{1/2 \hbar} (1+ \Im(z)^2)^{-j} e^{\Phi_1(z)/ \hbar}.
\end{split}
\right.$$
\item \textbf{Step 2.5:} we prove using Step 2.4 that, for $z \in \mathbb{C}$, we have:
$$
| T_{ \phi_1} \phi (z) | \leq \rho'_{ \hbar} \| \phi \|_j (1+|z|^2)^{j/2} (1+\Im(z)^2)^{-j} e^{ \Phi_1(z)/ \hbar} \quad \text{where $\rho'_{ \hbar}$ is a constant.}$$
\item \textbf{Step 2.6:} we prove using Step 2.1 (b) that, for $z \in \mathbb{C}$, we have:
$$
| T_{ \phi_1} \phi(z) |  \leq \rho_{ \hbar}'' \| \phi \|_j (1+\Re(z)^2)^{-j/2} e^{\Phi_1(z)/ \hbar} \quad \text{where $\rho_{ \hbar}''$ is an other constant.}$$
\item \textbf{Step 2.7:} we compare the estimates of Step 2.5 and Step 2.6 and we deduce that, for $z \in \mathbb{C}$, we have:
$$
| T_{ \phi_1} \phi(z) | \leq \rho_{ \hbar} \| \phi \|_j (1+|z|^2)^{-j/2} e^{\Phi_1(z)/ \hbar} \quad \text{where $ \rho_{ \hbar} = \max (2^j \rho_{ \hbar}', 2^{j/2} \rho_{ \hbar}'')$.}  $$
\end{itemize}
\textbf{Step 3:} the fact that $T_{ \phi_1} \mathcal{S}^j( \mathbb{R}) \subset \mathfrak{S}^j( \mathbb{C})$ is a corollary of Equation \eqref{equation_estimation_Tphi1_sur_Sj(R)} and $T_{ \phi_1} \mathcal{S}( \mathbb{R}) \subset \mathfrak{S}( \mathbb{C})$ can be deduced from the first assertion of the proposition and the definitions of the spaces $ \mathcal{S}( \mathbb{R})$ and $ \mathfrak{S}(\mathbb{C})$.
\end{proof}

\begin{rema} \label{rema_S(C)_inclus_dans_H(C,Phi1)}
Since $ \mathcal{S}( \mathbb{R}) \subset L^2( \mathbb{R})$, then according to Propositions  \ref{prop_ecriture_integrale_Pi_Phi1} and \ref{prop_Tphi_envoie_S(R)_sur_SS(C)}, we have $ \mathfrak{S}( \mathbb{C}) \subset H_{ \hbar}( \mathbb{C}, \Phi_1)$.
\end{rema}

Conversely, we have the following proposition.

\begin{prop} \label{prop_T_(phi1)*_envoie_S(C)_sur_S(R)} $ $ 
\begin{enumerate}
\item Let $ \mu = 1+j+ \tau$ with $j \in \mathbb{N}$ and $ \tau \in \mathbb{N}^*$, then, for all $ \psi \in \mathfrak{S}^{ \mu}( \mathbb{C})$, we have the following estimate:
\begin{equation} \label{equation_estimation_Tphi1*(SSmu(C))}
 \| T_{ \phi_1}^* \psi \|_j \leq a^{ \hbar}_{j, \tau} | \psi |_{ \mu},
\end{equation}
where $a^{ \hbar}_{j, \tau}$ is a constant depending on the semi-classical parameter $ \hbar$ and on the integers $j$ and $ \tau$. As a result: $ T_{ \phi_1}^* \mathfrak{S}^{ \mu}( \mathbb{C}) \subset \mathcal{S}^j( \mathbb{R})$.
\item $T_{ \phi_1}^* \mathfrak{S}( \mathbb{C}) \subset \mathcal{S}( \mathbb{R})$.
\end{enumerate} 
\end{prop}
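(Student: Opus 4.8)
The plan is to establish the pointwise estimate \eqref{equation_estimation_Tphi1*(SSmu(C))} directly from the integral representation of $T_{\phi_1}^*$ given in Proposition~\ref{prop_ecriture_integrale_Pi_Phi1}, following Bargmann's argument \cite{MR0201959} and Proposition~\ref{prop_Tphi_envoie_S(R)_sur_SS(C)} but now with the roles of $\mathcal{S}(\mathbb{R})$ and $\mathfrak{S}(\mathbb{C})$ exchanged; the whole proof reduces to bounding the modulus of the kernel pointwise and then to a few elementary Gaussian integral estimates, and I will only highlight where the hypothesis $\mu=1+j+\tau$ with $\tau\in\mathbb{N}^*$ is used.

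First I would start from the integral formula for $T_{\phi_1}^*$. Writing $z=p+iq$, one has $L(dz)=dp\,dq$ and
$$\bigl|e^{-(1/2\hbar)(\overline{z}-x)^2}\,e^{-2\Phi_1(z)/\hbar}\bigr| = e^{-(p-x)^2/(2\hbar)}\,e^{-q^2/(2\hbar)},$$
so the modulus of the integral kernel carries Gaussian decay in both $p-x$ and $q$. For $\psi\in\mathfrak{S}^{\mu}(\mathbb{C})$ the defining inequality reads $|\psi(z)|\leq|\psi|_{\mu}(1+|z|^2)^{-\mu/2}e^{q^2/(2\hbar)}$; inserting it, the growth $e^{q^2/(2\hbar)}$ of $\psi$ is exactly cancelled by the $e^{-q^2/(2\hbar)}$ of the kernel, leaving an absolutely convergent integral since $\mu=1+j+\tau\geq2$. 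The same domination (keeping the Gaussian in $p$) justifies differentiating $m$ times under the integral sign; as only the kernel depends on $x$, $\partial_x^m$ produces $\hbar^{-m/2}$ times a Hermite polynomial of degree $m$ in $(\overline{z}-x)/\sqrt{\hbar}$, hence
$$\bigl|\partial_x^m(T_{\phi_1}^*\psi)(x)\bigr| \leq C_{m,\hbar}\,|\psi|_{\mu}\int_{\mathbb{R}^2} e^{-(p-x)^2/(2\hbar)}\,(1+|p-x|+|q|)^m\,(1+p^2+q^2)^{-\mu/2}\,dp\,dq .$$

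The core estimate is then to bound this right-hand side times $(1+x^2)^{(j-m)/2}$ uniformly in $x$ for all $m\leq j$, which I would do by bringing $(1+x^2)^{(j-m)/2}$ inside the integral and then: (i) splitting $(1+|p-x|+|q|)^m\leq(1+|p-x|)^m(1+|q|)^m$; (ii) applying Peetre's inequality $(1+x^2)^{(j-m)/2}\leq 2^{(j-m)/2}(1+(x-p)^2)^{(j-m)/2}(1+p^2)^{(j-m)/2}$ and absorbing the resulting powers of $(1+(p-x)^2)$ into the Gaussian at the cost of an $\hbar$-dependent constant, which leaves the $p$-weight $(1+p^2)^{(j-m)/2}$; (iii) bounding $(1+|q|)^m(1+p^2+q^2)^{-\mu/2}\lesssim(1+p^2+q^2)^{-(\mu-m)/2}$ and integrating in $q$ — legitimate because $\mu-m\geq1+\tau\geq2>1$ — to produce $C\,(1+p^2)^{-(\mu-m-1)/2}$, which combined with the leftover $(1+p^2)^{(j-m)/2}$ is precisely $(1+p^2)^{-\tau/2}\leq1$; (iv) integrating the remaining $e^{-(p-x)^2/(4\hbar)}$ in $p$ to a constant independent of $x$. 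Taking the maximum over $m\leq j$ yields \eqref{equation_estimation_Tphi1*(SSmu(C))} with a constant $a^{\hbar}_{j,\tau}$ depending only on $j$, $\tau$ and $\hbar$, hence $T_{\phi_1}^*\mathfrak{S}^{\mu}(\mathbb{C})\subset\mathcal{S}^j(\mathbb{R})$. Assertion~(2) follows at once: if $\psi\in\mathfrak{S}(\mathbb{C})$ then $\psi\in\mathfrak{S}^{j+2}(\mathbb{C})$ for every $j\in\mathbb{N}$, so $\|T_{\phi_1}^*\psi\|_j<+\infty$ for all $j$, i.e.\ $T_{\phi_1}^*\psi\in\mathcal{S}(\mathbb{R})$.

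The only delicate point, and hence the main obstacle, is the bookkeeping of the polynomial weights in steps (iii)--(iv): the loss $\mu=1+j+\tau$ must be taken just large enough that the $q$-integral converges for \emph{every} $m\leq j$ (which forces $\mu-m>1$, i.e.\ $\tau\geq1$ — this is why $\tau\in\mathbb{N}^*$) and that the residual power of $(1+p^2)$ is non-positive before the last $p$-integration; the Gaussian factor $e^{-(p-x)^2/c}$ is precisely what makes that final integral bounded uniformly in $x$ with no extra decay required. Everything else is a routine transcription of \cite{MR0201959} to the present weight, which I would omit.
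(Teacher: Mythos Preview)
Your proposal is correct and follows essentially the same strategy as the paper's sketch: start from the integral formula for $T_{\phi_1}^*$, differentiate under the integral sign to produce Hermite polynomials in $(\overline{z}-x)/\sqrt{\hbar}$, insert the defining bound $|\psi(z)|\leq |\psi|_\mu(1+|z|^2)^{-\mu/2}e^{\Phi_1(z)/\hbar}$, and reduce to elementary Gaussian/power estimates. Your organisation of the weight bookkeeping via Peetre's inequality and an explicit $q$-integration (using $\mu-m\geq 1+\tau>1$) is a bit more detailed than the paper's Step~1.2--1.3, but the content and the place where the hypothesis $\tau\in\mathbb{N}^*$ enters are identical.
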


\begin{proof}
We give a sketch of the proof, for more details see \cite{MR0201959}. \\
\textbf{Step 1:} we give an estimate on $ \partial^m T_{ \phi_1}^* \psi$ for $m \leq j$ by following these steps.
\begin{itemize}
\item \textbf{Step 1.1:} we prove using Lebesgue's theorem that, for $x \in \mathbb{R}$, we have:
$$ \left| \partial^m_x (T_{ \phi_1}^* \psi(x)) \right| \leq |c_{ \phi_1}| \hbar^{-3/4} \int_{ \mathbb{C}} B_m(z, x) L(dz), $$ 
where for $(z, x) \in \mathbb{C} \times \mathbb{R}$:
$$B_m(z, x) = \left| \partial^m_x \left( e^{-(1/2 \hbar)( \overline{z}-x)^2} \right) e^{-2 \Phi_1(z)/ \hbar} \psi(z) \right|.$$
\item \textbf{Step 1.2:} we prove that, for $(z, x) \in \mathbb{C} \times \mathbb{R}$ and for $m \leq j$, we have:
$$ B_m(z, x) \leq \delta_{\hbar}^j 2^m  \left( 1 + \dfrac{1}{2 \hbar}(\Re(z)-x)^2 \right)^{m/2} e^{-(1 /2 \hbar)(\Re(z)-x)^2} (1+|z|^2)^{(m- \mu)/2} | \psi |_{ \mu}, $$
where $ \delta_{\hbar}^j$ is a constant depending on  $j$ and $ \hbar$. To do so, we use estimates on Hermite polynomials for the term $ \left| \partial^m_x \left( e^{-(1/2 \hbar)( \overline{z}-x)^2} \right) \right| $ and the following estimate,  for $z \in \mathbb{C}$: 
$$| \psi(z) | \leq (1+|z|^2)^{- \mu/2} e^{ \Phi_1(z)/ \hbar} | \psi |_{ \mu} ,$$
resulting from the fact that $ \psi \in \mathfrak{S}^{ \mu}( \mathbb{C})$.
\item \textbf{Step 1.3:} according to Step 1.1 and Step 1.2, for $x \in \mathbb{R}$, we have:
\begin{align*}
| \partial_x^m (T_{\phi_1}^* \psi(x))|
& \leq a_{j, \tau}^{ \hbar} (1+x^2)^{(m-j)/2} | \psi |_{ \mu}, 
\end{align*}
where $a_{j, \tau}^{ \hbar}$ is a constant depending on $j$, $ \tau$ and $ \hbar$. 
\end{itemize}
\textbf{Step 2:} according to Step 1, for $x \in \mathbb{R}$ and $ \mu = 1+j + \tau$, we have:
\begin{align*}
(1+x^2)^{(j-m)/2} | \partial_x^m(T_{\phi_1}^* \psi(x))| \leq a_{j, \tau}^{ \hbar} | \psi |_{ \mu} .
\end{align*}
Thus:
$$ \sup_{x \in \mathbb{R}} \left( (1+x^2)^{(j-m)/2} | \partial_x^m(T_{\phi_1}^* \psi(x))| \right) \leq a_{j, \tau}^{ \hbar} | \psi |_{ \mu}.$$
Consequently, we have:
$$ \| T_{\phi_1}^* \psi \|_j = \max_{m \leq j} \left( \sup_{x \in \mathbb{R}} \left( (1+x^2)^{(j-m)/2} | \partial_x^m(T_{\phi_1}^* \psi(x))| \right) \right) \leq a_{j, \tau}^{ \hbar} | \psi |_{ \mu}.$$
\textbf{Step 3:} the fact that $T_{ \phi_1}^* \mathfrak{S}^{ \mu}( \mathbb{C}) \subset \mathcal{S}^j( \mathbb{R})$ is a corollary of Equation \eqref{equation_estimation_Tphi1*(SSmu(C))} and $T_{ \phi_1}^* \mathfrak{S}( \mathbb{C}) \subset \mathcal{S}( \mathbb{R})$ can be deduced from the first assertion of the proposition and the definitions of the spaces $ \mathcal{S}( \mathbb{R})$ and $ \mathfrak{S}(\mathbb{C})$.
\end{proof}

We are interested now in the action of the Bargmann transform on the tempered distributions space. Here again we can adapt the techniques of \cite{MR0201959}.\\

\noindent \texttt{Notation:}
\begin{itemize}
\item $ \mathfrak{S}'( \mathbb{C})$ denotes the dual of the space $ \mathfrak{S}( \mathbb{C})$ (equipped with the topology of the semi-norms $|\cdot |_j$) \textit{i.e.} the space of continuous linear functionals on $ \mathfrak{S}( \mathbb{C})$;
\item $ \langle \cdot, \cdot \rangle_{ \mathfrak{S}', \mathfrak{S}}$ denotes the duality bracket between $ \mathfrak{S}'( \mathbb{C})$ and $ \mathfrak{S}( \mathbb{C})$;
\item for $f, g \in \Hol( \mathbb{C})$, we denote by $ \langle g, f \rangle$ the following product:
$$ \langle g, f \rangle = \int_{ \mathbb{C}} \overline{g(z)} f(z) e^{- 2 \Phi_1(z)/ \hbar} L(dz),$$
when this integral converges.
\end{itemize}
 
\begin{rema} \label{rem_crochet_defini_sur_SSj(C)} $ $
\begin{itemize}
\item The bracket defined above coincides with the scalar product $ \langle g, f \rangle_{L^2_{ \hbar}( \mathbb{C}, \Phi_1)}$ when $g, f \in H_{ \hbar}( \mathbb{C}, \Phi_1)$. 
\item If $g \in \mathfrak{S}^{ \rho}( \mathbb{C})$ and $f \in \mathfrak{S}^{ \sigma}( \mathbb{C})$ with $ \rho + \sigma > 2$, then the bracket $ \langle g, f \rangle$ is well-defined.
\end{itemize}
\end{rema}

We use this bracket to describe the elements of the space $ \mathfrak{S}'( \mathbb{C})$ similarly to the article of Valentine Bargmann \cite{MR0201959}.

\begin{prop} \label{prop_ecriture_elements_de_SS'(C)}
Every continuous linear functional $L$ on $ \mathfrak{S}( \mathbb{C})$ can be written, for all $f \in \mathfrak{S}( \mathbb{C})$, as follows:
$$ L(f) = \langle g, f \rangle = \int_{ \mathbb{C}} \overline{g(z)} f(z) e^{-2 \Phi_1(z)/ \hbar} L(dz),$$
where $g$ is a function in $ \mathfrak{S}^{-l}( \mathbb{C})$ for $l \in \mathbb{N}$ and is uniquely defined, for all $a \in \mathbb{C}$, by $g(a) = \overline{L(e_a)}$ (where for all $z \in \mathbb{C}$, $e_a(z) = e^{-( \overline{a}-z)^2/(4 \hbar)}$). \\
Conversely, every functional of the form $L(f) = \langle g, f \rangle$ with $g \in \mathfrak{S}^{-l}( \mathbb{C})$ defines a continuous linear functional on $ \mathfrak{S}( \mathbb{C})$.
\end{prop}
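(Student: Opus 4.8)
The plan is to prove the two implications separately. The converse is a short integrability estimate: if $g\in\mathfrak{S}^{-l}(\mathbb{C})$ and $f\in\mathfrak{S}(\mathbb{C})\subset\mathfrak{S}^{l+3}(\mathbb{C})$, then $|g(z)|\le|g|_{-l}(1+|z|^2)^{l/2}e^{\Phi_1(z)/\hbar}$ and $|f(z)|\le|f|_{l+3}(1+|z|^2)^{-(l+3)/2}e^{\Phi_1(z)/\hbar}$, so the integrand of $\langle g,f\rangle$ obeys $|\overline{g(z)}f(z)|e^{-2\Phi_1(z)/\hbar}\le|g|_{-l}|f|_{l+3}(1+|z|^2)^{-3/2}$, which is $L(dz)$-integrable over $\mathbb{C}$ (this is the second point of Remark \ref{rem_crochet_defini_sur_SSj(C)}); moreover $|\langle g,f\rangle|\le 2\pi\,|g|_{-l}\,|f|_{l+3}$, so $f\mapsto\langle g,f\rangle$ is continuous on $\mathfrak{S}(\mathbb{C})$. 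The direct statement is the substantial part, and its engine is the reproducing formula for $H_{\hbar}(\mathbb{C},\Phi_1)$ together with the fact that the functions $e_a$ lie in $\mathfrak{S}(\mathbb{C})$ with well-controlled seminorms; I would follow the scheme of Bargmann \cite{MR0201959}, adapted to the weight $\Phi_1$.

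For the direct statement, first one checks that $e_a\in\mathfrak{S}(\mathbb{C})$ for every $a$, so that $L(e_a)$ makes sense: the elementary identity $e^{-\Phi_1(z)/\hbar}|e_a(z)|=e^{\Phi_1(a)/\hbar}e^{-|z-a|^2/(4\hbar)}$, combined with $1+|z|^2\le 2(1+|a|^2)(1+|z-a|^2)$, gives $|e_a|_j\le C_{j,\hbar}(1+|a|^2)^{j/2}e^{\Phi_1(a)/\hbar}$ for all $j$, with $C_{j,\hbar}=2^{j/2}\sup_{w}(1+|w|^2)^{j/2}e^{-|w|^2/(4\hbar)}<+\infty$. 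Set $g(a):=\overline{L(e_a)}$. Since $e_a(z)=e^{-(\overline a-z)^2/(4\hbar)}$ depends anti-holomorphically on $a$ and the difference quotients of $a\mapsto e_a$ converge in every seminorm $|\cdot|_j$ (again the Gaussian estimate), the continuity and linearity of $L$ force $a\mapsto L(e_a)$ to be anti-holomorphic, hence $g\in\Hol(\mathbb{C})$. Next, $\mathfrak{S}(\mathbb{C})$ is a Fréchet space with increasing seminorms $|\cdot|_j$, so continuity of $L$ yields $C>0$ and $N\in\mathbb{N}$ with $|L(f)|\le C|f|_N$ for all $f$; then $|g(a)|=|L(e_a)|\le C\,C_{N,\hbar}(1+|a|^2)^{N/2}e^{\Phi_1(a)/\hbar}$, i.e. $g\in\mathfrak{S}^{-N}(\mathbb{C})$.

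It remains to show $L(f)=\langle g,f\rangle$ for $f\in\mathfrak{S}(\mathbb{C})$. By Remark \ref{rema_S(C)_inclus_dans_H(C,Phi1)}, $f\in H_{\hbar}(\mathbb{C},\Phi_1)$, and by Proposition \ref{prop_ecriture_integrale_Pi_Phi1} the reproducing kernel of $H_{\hbar}(\mathbb{C},\Phi_1)$ is, up to the normalization of Definition \ref{defi_Bargmann_transform}, the function $(z,w)\mapsto e_w(z)$, because $2\psi_1(z,\overline w)/\hbar=-(z-\overline w)^2/(4\hbar)$; thus $f$ is a superposition $f(z)=c_{\hbar}\int_{\mathbb{C}}e_w(z)\,f(w)\,e^{-2\Phi_1(w)/\hbar}\,L(dw)$. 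One then approximates this integral by Riemann sums $\sum_j e_{w_j}(\cdot)\,f(w_j)\,e^{-2\Phi_1(w_j)/\hbar}\,\Delta_j$, which are finite $\mathbb{C}$-linear combinations of elements $e_{w_j}\in\mathfrak{S}(\mathbb{C})$; using the Gaussian decay of $e_w(z)$ in $|z-w|$ together with the rapid decay of $f$, these sums converge to $f$ not merely pointwise but in the Fréchet topology of $\mathfrak{S}(\mathbb{C})$, i.e. in every $|\cdot|_j$. Applying the continuous linear functional $L$ and passing to the limit gives $L(f)=c_{\hbar}\int_{\mathbb{C}}L(e_w)\,f(w)\,e^{-2\Phi_1(w)/\hbar}\,L(dw)=c_{\hbar}\int_{\mathbb{C}}\overline{g(w)}\,f(w)\,e^{-2\Phi_1(w)/\hbar}\,L(dw)=\langle g,f\rangle$ once the constant is matched. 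Uniqueness of $g$ is then immediate, since $g(a)=\overline{L(e_a)}$ recovers $g$ from $L$.

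\emph{Main obstacle.} The delicate point is the last step: proving that the reproducing integral is approximated by finite linear combinations of the $e_w$ in the Fréchet topology of $\mathfrak{S}(\mathbb{C})$, uniformly in the seminorm index. This needs the uniform tail and Riemann-sum estimates for the Gaussian kernel done in \cite{MR0201959} for the $|z|^2$-weight, redone here for $\Phi_1(z)=\frac{1}{2}\Im(z)^2$; it is also where the normalizing constants of Definition \ref{defi_Bargmann_transform} and Proposition \ref{prop_ecriture_integrale_Pi_Phi1}, and the accompanying complex conjugations, must be tracked so as to land precisely on $g(a)=\overline{L(e_a)}$. A tidier alternative for this step: combining Propositions \ref{prop_ecriture_integrale_Pi_Phi1}, \ref{prop_Tphi_envoie_S(R)_sur_SS(C)} and \ref{prop_T_(phi1)*_envoie_S(C)_sur_S(R)}, the Bargmann transform $T_{\phi_1}$ restricts to a topological isomorphism $\mathcal{S}(\mathbb{R})\to\mathfrak{S}(\mathbb{C})$ with inverse $T_{\phi_1}^*$, so $L$ transports to $\psi:=L\circ T_{\phi_1}\in\mathcal{S}'(\mathbb{R})$, and taking $g$ to be the (extended) Bargmann transform of $\psi$ one obtains $\langle g,f\rangle=\langle\psi,T_{\phi_1}^*f\rangle=L(T_{\phi_1}T_{\phi_1}^*f)=L(f)$ via the $L^2$-adjunction extended to distributions (with the usual care about conjugations).
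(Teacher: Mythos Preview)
Your proposal is correct and follows essentially the same scheme as the paper's (sketched) proof, which is Bargmann's argument adapted to the weight $\Phi_1$: obtain the continuity bound $|L(f)|\le C|f|_N$, set $g(a)=\overline{L(e_a)}$ and check $g\in\mathfrak{S}^{-N}(\mathbb{C})$, then identify $L$ with $\langle g,\cdot\rangle$ via the reproducing property of the $e_a$ and the density of their span in $\mathfrak{S}(\mathbb{C})$. The only cosmetic difference is in the last step: the paper verifies $L_1(e_a)=L(e_a)$ and then invokes density, whereas you make the density explicit through Riemann sums of the reproducing integral---these are the same argument, and your alternative via the topological isomorphism $T_{\phi_1}:\mathcal{S}(\mathbb{R})\to\mathfrak{S}(\mathbb{C})$ is a legitimate shortcut not used in the paper at this point.
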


\begin{proof}
We give a sketch of the proof, for more details see \cite{MR0201959}. \\
\textbf{Step 1:} for all $L \in \mathfrak{S}'( \mathbb{C})$, there exists $C>0$ and $l \in \mathbb{N}$ such that: $|L(f)| \leq C |f|_l$, for all $f \in \mathfrak{S}( \mathbb{C})$. \\
\textbf{Step 2:} for $a \in \mathbb{C}$, let $g$ be the function defined by $g(a) = \overline{L(e_a)}$. We prove using Step 1 that $e_a \in \mathfrak{S}( \mathbb{C})$ and $g \in \mathfrak{S}^{-l} ( \mathbb{C})$ ($e_a$ is a reproducing kernel for the space $H_{ \hbar}( \mathbb{C}, \Phi_1)$ and for $ \mathfrak{S}( \mathbb{C})$). \\
\textbf{Step 3:} let $L_1$ be the continuous linear functional defined, for $f \in \mathfrak{S}( \mathbb{C})$, by $L_1(f) = \langle g, f \rangle$. We show that, for all $a \in \mathbb{C}$, we have $ L_1(e_a) = L(e_a)$ then we deduce that $L=L_1$ using the density of the set of finite linear combinations of elements of $ \mathcal{B} = \lbrace e_a, a \in \mathbb{C} \rbrace$ in $ \mathfrak{S}( \mathbb{C})$.
\end{proof}

As in the article of Valentine Bargmann \cite{MR0201959}, we prove that the Bargmann transform $T_{ \phi_1}$ and its adjoint $T_{ \phi_1}^*$ act on the spaces $ \mathcal{S}'( \mathbb{R})$ and $ \mathfrak{S}'( \mathbb{C})$ respectively.

\begin{prop} \label{prop_Tphi_envoie_S'(R)_sur_SS'(C)}$ $ 
\begin{enumerate}
\item $T_{ \phi_1}$ extends to an operator: $\mathcal{S}'( \mathbb{R}) \longrightarrow \mathfrak{S}'( \mathbb{C})$, which satisfies for $v \in \mathcal{S}'( \mathbb{R})$ and $f \in \mathfrak{S}( \mathbb{C})$:
$$ \langle T_{ \phi_1} v, f \rangle_{ \mathfrak{S}', \mathfrak{S}} = \langle v, T_{ \phi_1}^* f \rangle_{ \mathcal{S}', \mathcal{S}} .$$
\item $T_{ \phi_1}^*$ extends to an operator: $\mathfrak{S}'( \mathbb{C}) \longrightarrow \mathcal{S}'( \mathbb{R})$, which satisfies for $L \in \mathfrak{S}'( \mathbb{C})$ and $ \phi \in \mathcal{S}( \mathbb{R})$:
$$ \langle T_{ \phi_1}^* L, \phi \rangle_{ \mathcal{S}', \mathcal{S}} = \langle L, T_{ \phi_1} \phi \rangle_{ \mathfrak{S}', \mathfrak{S}}.$$
\end{enumerate}
\end{prop}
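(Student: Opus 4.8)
Both assertions are instances of the standard transpose construction, and the plan is to reduce them to the mapping properties and quantitative estimates already established for $T_{\phi_1}$ and $T_{\phi_1}^*$ between the Schwartz-type spaces.

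For assertion~(1), given $v \in \mathcal{S}'(\mathbb{R})$ I would \emph{define} $T_{\phi_1}v$ to be the functional $f \mapsto \langle v, T_{\phi_1}^* f\rangle_{\mathcal{S}',\mathcal{S}}$ on $\mathfrak{S}(\mathbb{C})$; this is meaningful because $T_{\phi_1}^* f \in \mathcal{S}(\mathbb{R})$ for $f \in \mathfrak{S}(\mathbb{C})$ by Proposition~\ref{prop_T_(phi1)*_envoie_S(C)_sur_S(R)}, and it is visibly linear in $f$. To see it is continuous on $\mathfrak{S}(\mathbb{C})$, I would use that $v$, being a tempered distribution, is bounded by a single Schwartz seminorm, say $|\langle v, \phi\rangle_{\mathcal{S}',\mathcal{S}}| \leq C \|\phi\|_j$ for all $\phi$ (the seminorms $\|\cdot\|_j$ being increasing in $j$), and then insert the estimate \eqref{equation_estimation_Tphi1*(SSmu(C))} of Proposition~\ref{prop_T_(phi1)*_envoie_S(C)_sur_S(R)} with $\tau = 1$, namely $\|T_{\phi_1}^* f\|_j \leq a_{j,1}^{\hbar}\,|f|_{j+2}$, to obtain $|\langle T_{\phi_1}v, f\rangle_{\mathfrak{S}',\mathfrak{S}}| \leq C a_{j,1}^{\hbar}\,|f|_{j+2}$. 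Hence $T_{\phi_1}v \in \mathfrak{S}'(\mathbb{C})$, and the map $v \mapsto T_{\phi_1}v$ is linear and weak-$*$ continuous by construction.

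It then remains to check that this operator really extends the Bargmann transform of Definition~\ref{defi_Bargmann_transform}. For $v = \phi \in \mathcal{S}(\mathbb{R}) \subset L^2(\mathbb{R})$, Proposition~\ref{prop_Tphi_envoie_S(R)_sur_SS(C)} and Remark~\ref{rema_S(C)_inclus_dans_H(C,Phi1)} give $T_{\phi_1}\phi \in \mathfrak{S}(\mathbb{C}) \subset H_{\hbar}(\mathbb{C},\Phi_1)$, so $T_{\phi_1}\phi$ is also an element of $\mathfrak{S}'(\mathbb{C})$ via the bracket of Proposition~\ref{prop_ecriture_elements_de_SS'(C)}. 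Using that $T_{\phi_1}$ is unitary from $L^2(\mathbb{R})$ onto $H_{\hbar}(\mathbb{C},\Phi_1)$ and that $T_{\phi_1}^*$ is its Hilbert-space adjoint (Proposition~\ref{prop_ecriture_integrale_Pi_Phi1}), together with Remark~\ref{rem_crochet_defini_sur_SSj(C)}, which identifies $\langle\cdot,\cdot\rangle$ with the scalar product of $L^2_{\hbar}(\mathbb{C},\Phi_1)$ on $H_{\hbar}(\mathbb{C},\Phi_1)$, I get for all $f \in \mathfrak{S}(\mathbb{C})$ that $\langle T_{\phi_1}\phi, f\rangle = \langle T_{\phi_1}\phi, f\rangle_{L^2_{\hbar}(\mathbb{C},\Phi_1)} = \langle \phi, T_{\phi_1}^* f\rangle_{L^2(\mathbb{R})}$, which coincides with the defining formula once $\mathcal{S}(\mathbb{R})$ and $H_{\hbar}(\mathbb{C},\Phi_1)$ are embedded in their duals via the respective $L^2$-pairings. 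This proves assertion~(1).

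Assertion~(2) is entirely symmetric: for $L \in \mathfrak{S}'(\mathbb{C})$ I would set $T_{\phi_1}^* L$ equal to the functional $\phi \mapsto \langle L, T_{\phi_1}\phi\rangle_{\mathfrak{S}',\mathfrak{S}}$ on $\mathcal{S}(\mathbb{R})$, which is legitimate because $T_{\phi_1}\phi \in \mathfrak{S}(\mathbb{C})$ by Proposition~\ref{prop_Tphi_envoie_S(R)_sur_SS(C)}. Continuity on $\mathcal{S}(\mathbb{R})$ follows by bounding $L$ by one of the seminorms $|\cdot|_l$ of $\mathfrak{S}(\mathbb{C})$ (see Step~1 of the proof of Proposition~\ref{prop_ecriture_elements_de_SS'(C)}) and then using the pointwise estimate \eqref{equation_estimation_Tphi1_sur_Sj(R)}, which yields $|T_{\phi_1}\phi|_l \leq a_l^{\hbar}\|\phi\|_l$ and hence $|\langle T_{\phi_1}^* L, \phi\rangle_{\mathcal{S}',\mathcal{S}}| \leq C a_l^{\hbar}\|\phi\|_l$, so $T_{\phi_1}^* L \in \mathcal{S}'(\mathbb{R})$; the verification that this extends $T_{\phi_1}^*$ is the same $L^2$-pairing computation read in the other direction. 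I do not expect a serious obstacle here: the only points requiring care are the loss of derivatives ($\mu = 1 + j + \tau$) in the estimate for $T_{\phi_1}^*$, and fixing the conjugation conventions in the embeddings $\mathcal{S}(\mathbb{R}) \hookrightarrow \mathcal{S}'(\mathbb{R})$ and $H_{\hbar}(\mathbb{C},\Phi_1) \hookrightarrow \mathfrak{S}'(\mathbb{C})$ so that the word ``extends'' is literally true; the rest is the routine duality argument.
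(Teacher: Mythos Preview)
Your proof is correct and follows essentially the same duality/transpose strategy as the paper. The only cosmetic difference is that the paper phrases the extension as $L(T_{\phi_1}\phi) := \langle v,\phi\rangle_{\mathcal{S}',\mathcal{S}}$ (using $T_{\phi_1}^{-1}$ rather than $T_{\phi_1}^*$, which coincide here by unitarity) and then invokes Proposition~\ref{prop_ecriture_elements_de_SS'(C)} to represent $L$ by a function $g\in\mathfrak{S}^{-l}(\mathbb{C})$, whereas you bypass that representation and verify continuity directly from the seminorm estimates \eqref{equation_estimation_Tphi1_sur_Sj(R)} and \eqref{equation_estimation_Tphi1*(SSmu(C))}; your route is slightly more self-contained for what the proposition actually asserts.
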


\begin{proof}
We give a sketch of the proof, for more details see \cite{MR0201959}. \\
Let $v \in \mathcal{S}'( \mathbb{R})$, let $L(f)$ be the functional defined by:
$$ L(f) = \langle v, \phi \rangle_{ \mathcal{S}', \mathcal{S}},$$
where $f = T_{ \phi_1} \phi \in \mathfrak{S}( \mathbb{C})$, then $L(f)$ is a continuous linear functional on $ \mathfrak{S}( \mathbb{C})$. Conversely if $L \in \mathfrak{S}'( \mathbb{C})$ and if we define $v$ by:
$$ v( \phi) = \langle v, \phi \rangle_{ \mathcal{S}', \mathcal{S}} = L(f) \quad \text{where $f =T_{ \phi_1} \phi$},$$
then $v$ is a continuous linear functional on $ \mathcal{S}( \mathbb{R})$. Then, according to Proposition \ref{prop_ecriture_elements_de_SS'(C)}, for $l \in \mathbb{N}$, there exists $g \in \mathfrak{S}^{-l}( \mathbb{C})$ such that:
$$ L(f) = \langle g, f \rangle .$$
Thus, for all $v \in \mathcal{S}'( \mathbb{R})$ and for all $ \phi \in \mathcal{S}( \mathbb{R})$, we have:
$$ \langle v, \phi \rangle_{ \mathcal{S}', \mathcal{S}} = \langle g, T_{ \phi_1} \phi \rangle .$$
This equality gives a bijection between the spaces $ \mathcal{S}'( \mathbb{R})$ and $ \mathfrak{S}'( \mathbb{C})$ with:
$$ g := T_{ \phi_1} v \quad \text{and} \quad v =: T_{ \phi_1}^* g .$$
\end{proof}

\begin{rema} \label{prop_reecriture_action_Tphi1_sur_S'(R)}
For $\psi \in \mathcal{S}'( \mathbb{R})$, we can rewrite $T_{ \phi_1} \psi$ as follows (see for example \cite{HJ} or \cite{MR2952218}):
$$ T_{ \phi_1} \psi(z) = \left\langle \psi, c_{ \phi_1} \hbar^{-3/4} e^{-(1/ 2 \hbar)(z-.)^2} \right\rangle_{ \mathcal{S}', \mathcal{S}} .$$ 
\end{rema}

We are now looking at the range by the Bargmann transform of the space $ \mathcal{L}_k$ and we prove that this range is the space $ \mathcal{H}_k$, where we recall that:
\begin{align*}
\mathcal{L}_k & = \left\lbrace \psi \in \mathcal{S}'( \mathbb{R}); \quad \tau_{2 \pi} \psi  = u^k \psi, \quad \tau_1 \mathcal{F}_{ \hbar}( \psi) = v^{-k} \mathcal{F}_{ \hbar}( \psi)  \right\rbrace, \\
\mathcal{H}_k & =  \left\lbrace g \in \Hol( \mathbb{C}); \quad g(p+2 \pi, q) = u^k g(p, q), \quad g(p, q+1) = v^k e^{-i(p+iq)k+k/2} g(p, q) \right\rbrace.
\end{align*}
To our knowledge, this result is new in the literature and it constitutes a fundamental step in our proof of Theorem \ref{theoA}.

\begin{prop} \label{prop_formule_C}
Let $k \geq 1$. Then, we have:
\begin{enumerate}
\item $T_{ \phi_1}: \mathcal{L}_k \longrightarrow \mathcal{H}_k $;
\item $T_{ \phi_1}^*: \mathcal{H}_k \longrightarrow \mathcal{L}_k $.
\end{enumerate}

\end{prop}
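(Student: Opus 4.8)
The plan is to verify directly that the Bargmann transform $T_{\phi_1}$ and its adjoint $T_{\phi_1}^*$ carry the two Floquet (quasi-periodicity) conditions defining $\mathcal{L}_k$ into the two defining $\mathcal{H}_k$, and conversely. Since the elements of $\mathcal{L}_k$ and $\mathcal{H}_k$ are genuinely tempered distributions (resp.\ elements of $\mathfrak{S}'(\mathbb{C})$) rather than $L^2$-objects, everything takes place in the distributional framework of Propositions~\ref{prop_Tphi_envoie_S'(R)_sur_SS'(C)} and~\ref{prop_ecriture_elements_de_SS'(C)}.

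\emph{Assertion (1).} Let $\psi\in\mathcal{L}_k$. By Proposition~\ref{prop_Tphi_envoie_S'(R)_sur_SS'(C)}, $T_{\phi_1}\psi\in\mathfrak{S}'(\mathbb{C})$, and by Proposition~\ref{prop_ecriture_elements_de_SS'(C)} this functional is represented by an honest holomorphic function of $z$, given pointwise by the formula of Remark~\ref{prop_reecriture_action_Tphi1_sur_S'(R)}, $T_{\phi_1}\psi(z)=\langle\psi,\,c_{\phi_1}\hbar^{-3/4}e^{-(1/2\hbar)(z-\cdot)^2}\rangle_{\mathcal{S}',\mathcal{S}}$; holomorphy in $z$ holds because $z\mapsto c_{\phi_1}\hbar^{-3/4}e^{-(1/2\hbar)(z-\cdot)^2}$ is holomorphic as a map into $\mathcal{S}(\mathbb{R})$. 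For the first periodicity, the kernel depends on $x$ only through $z-x$, so a change of variables gives $T_{\phi_1}(\tau_{2\pi}\psi)(z)=T_{\phi_1}\psi(z+2\pi)$; combined with $\tau_{2\pi}\psi=u^k\psi$ this yields $T_{\phi_1}\psi(z+2\pi)=u^k T_{\phi_1}\psi(z)$. For the second, I first rewrite the Fourier-side condition as a modulation condition on $\psi$ itself: since $\mathcal{F}_\hbar^{-1}\tau_1\mathcal{F}_\hbar$ is multiplication by $e^{-ikx}$ (with $k=1/\hbar$), the relation $\tau_1\mathcal{F}_\hbar\psi=v^{-k}\mathcal{F}_\hbar\psi$ amounts to $e^{ikx}\psi=v^{k}\psi$ in $\mathcal{S}'(\mathbb{R})$. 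Expanding $(z+i-x)^2=(z-x)^2+2i(z-x)-1$ gives the pointwise identity $e^{-(1/2\hbar)(z+i-x)^2}=e^{k/2}e^{-ikz}\,e^{-(1/2\hbar)(z-x)^2}e^{ikx}$, hence $T_{\phi_1}\psi(z+i)=e^{k/2}e^{-ikz}\,\langle e^{ikx}\psi,\,c_{\phi_1}\hbar^{-3/4}e^{-(1/2\hbar)(z-\cdot)^2}\rangle_{\mathcal{S}',\mathcal{S}}=v^{k}e^{-ikz+k/2}T_{\phi_1}\psi(z)$, which (with $z=p+iq$) is exactly the second defining relation of $\mathcal{H}_k$. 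Finiteness of the weighted $L^2$-norm over $[0,2\pi]\times[0,1]$ is then automatic, since under these two relations $|T_{\phi_1}\psi(z)|^2e^{-kq^2}$ is continuous and genuinely bi-periodic. Thus $T_{\phi_1}\psi\in\mathcal{H}_k$.

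\emph{Assertion (2).} The cleanest route is to combine (1) with the bijectivity of the Bargmann transform. By the last line of the proof of Proposition~\ref{prop_Tphi_envoie_S'(R)_sur_SS'(C)}, $T_{\phi_1}\colon\mathcal{S}'(\mathbb{R})\to\mathfrak{S}'(\mathbb{C})$ is a bijection with inverse $T_{\phi_1}^*$; in particular $T_{\phi_1}^*T_{\phi_1}=\id$ on $\mathcal{S}'(\mathbb{R})$, and $T_{\phi_1}$ is injective on $\mathcal{L}_k$ with image contained in $\mathcal{H}_k$ by (1). Since both $\mathcal{L}_k$ and $\mathcal{H}_k$ are $k$-dimensional, with explicit bases $(\epsilon_l)_{0\le l\le k-1}$ (Equation~\eqref{eq_base_epsilon_l}) and $(e_l)_{0\le l\le k-1}$ (Equation~\eqref{eq_base_el}), injectivity forces $T_{\phi_1}(\mathcal{L}_k)=\mathcal{H}_k$. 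Hence every $g\in\mathcal{H}_k$ can be written $g=T_{\phi_1}\psi$ with $\psi\in\mathcal{L}_k$, and $T_{\phi_1}^*g=T_{\phi_1}^*T_{\phi_1}\psi=\psi\in\mathcal{L}_k$. One can also argue directly, in parallel with (1): the naive integral defining $T_{\phi_1}^*g$ diverges on $\mathcal{H}_k$, so one uses the duality $\langle T_{\phi_1}^*g,\phi\rangle_{\mathcal{S}',\mathcal{S}}=\langle g,T_{\phi_1}\phi\rangle$ of Proposition~\ref{prop_Tphi_envoie_S'(R)_sur_SS'(C)} and transports the two quasi-periodicities of $g$ via the substitutions $z\mapsto z-2\pi$ and $z\mapsto z-i$ together with $T_{\phi_1}(\tau_a\phi)(z)=T_{\phi_1}\phi(z+a)$.

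\emph{Main obstacle.} The genuine work is the bookkeeping of the Floquet factors — recognizing that the Fourier-side condition in $\mathcal{L}_k$ becomes a modulation condition on $\psi$, and propagating the Gaussian factor $e^{k/2}e^{-ikz}$ correctly through $T_{\phi_1}$ so that precisely the factor $v^k e^{-ikz+k/2}$ of $\mathcal{H}_k$ appears — together with the preliminary point that $T_{\phi_1}\psi$ is represented by an honest holomorphic function, so that the quasi-periodicities can be read off pointwise. The rest is routine changes of variables.
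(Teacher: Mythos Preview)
Your proof is correct, and it takes a genuinely different route from the paper's. The paper proves the proposition by an explicit basis computation: it shows $T_{\phi_1}\epsilon_l = c_k^l\,e_l$ and $T_{\phi_1}^*e_l = \tilde c_k^l\,\epsilon_l$ for each $l\in\{0,\dots,k-1\}$, evaluating the Gaussian integrals by hand. You instead verify the two Floquet relations of $\mathcal{H}_k$ directly on $T_{\phi_1}\psi$ (reading the Fourier-side condition on $\psi$ as the modulation identity $e^{ikx}\psi=v^k\psi$, then propagating it through the kernel), and for $T_{\phi_1}^*$ you appeal to the bijectivity $T_{\phi_1}\colon\mathcal{S}'(\mathbb{R})\to\mathfrak{S}'(\mathbb{C})$ together with $\dim\mathcal{L}_k=\dim\mathcal{H}_k=k$. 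Your argument is shorter and more conceptual; in fact it yields Proposition~\ref{prop_TphiTphi*=id_sur_Hk_et_Tphi*Tphi=id_sur_Lk} for free, since $T_{\phi_1}^*T_{\phi_1}=\id$ on $\mathcal{S}'(\mathbb{R})$ restricts to $\mathcal{L}_k$ and $T_{\phi_1}T_{\phi_1}^*=\id$ on $\mathcal{H}_k$ follows from bijectivity. What the paper's approach buys in exchange is the explicit diagonal coefficients $c_k^l,\tilde c_k^l$, which make the subsequent identity $c_k^l\tilde c_k^l=1$ a one-line check and may be useful elsewhere. One small quibble: your sentence ``$\mathcal{F}_\hbar^{-1}\tau_1\mathcal{F}_\hbar$ is multiplication by $e^{-ikx}$'' and the conclusion ``$e^{ikx}\psi=v^k\psi$'' are consistent only under the paper's convention $\tau_a f(t)=f(t+a)$ (pullback), which is indeed what is used here; it would be worth making that explicit to avoid confusion.
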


\begin{proof}
According to Proposition \ref{prop_Tphi_envoie_S'(R)_sur_SS'(C)}, $T_{ \phi_1}: \mathcal{S}'( \mathbb{R}) \longrightarrow \mathfrak{S}'( \mathbb{C})$. Since $ \mathcal{L}_k \subset \mathcal{S}'( \mathbb{R})$, then $T_{ \phi_1}$ is well-defined on this space. Let's prove that the Bargmann transform $T_{ \phi_1}$ sends the basis $( \epsilon_l)_{l \in \mathbb{Z}/ k \mathbb{Z}}$ of $ \mathcal{L}_k$ (see Equation \eqref{eq_base_epsilon_l}) on the basis $( e_l)_{l \in \mathbb{Z}/ k \mathbb{Z}}$ of $ \mathcal{H}_k$ (see Equation \eqref{eq_base_el}). Let $c$ be the real number such that $u = e^{ic}$. Let $l \in \lbrace 0, \ldots, k-1 \rbrace$, using Remark \ref{prop_reecriture_action_Tphi1_sur_S'(R)}, we have:
\begin{align*}
T_{ \phi_1} \epsilon_l(z) & = \left\langle \epsilon_l, c_{ \phi_1} \hbar^{-3/4} e^{-(1/ 2 \hbar)(z-.)^2} \right\rangle_{ \mathcal{S}', \mathcal{S}}, \\
& = \left\langle u^{k./(2 \pi)} \sum_{j \in \mathbb{Z}} \left(v^{-k}\right)^j e^{i(l+jk).}, c_{ \phi_1} \hbar^{-3/4} e^{-(1/ 2 \hbar)(z-.)^2}  \right\rangle_{ \mathcal{S}', \mathcal{S}}, \\
& = c_{ \phi_1} k^{3/4} \sum_{j \in \mathbb{Z}} \left(v^{-k} \right)^j \left\langle u^{k./(2 \pi)} e^{i(l+jk).},  e^{-(k/ 2)(z-.)^2} \right\rangle_{ \mathcal{S}', \mathcal{S}}  \quad \text{since $k = \dfrac{1}{\hbar}$}, \\
& = c_{ \phi_1} k^{3/4} u^{kz/(2 \pi)} \sum_{j \in \mathbb{Z}} \left(v^{-k} \right)^j e^{i(l+jk)z} \left\langle u^{k./(2 \pi)} e^{i(l+jk).} , e^{-(k/2) (.)^2} \right\rangle_{ \mathcal{S}', \mathcal{S}}, \\
& = c_{ \phi_1} k^{3/4} u^{kz/(2 \pi)} \sum_{j \in \mathbb{Z}} \left( v^{-k} \right)^j e^{i(l+jk)z} \sqrt{\dfrac{2 \pi}{k}} \exp \left(- \dfrac{1}{2k}  \left( \dfrac{ck}{2 \pi}+l+jk \right)^2\right).
\end{align*}
By a simple computation, we obtain:
$$ \dfrac{1}{2 k}\left( \dfrac{ck}{2 \pi}+l+jk \right)^2 = \dfrac{1}{2k} \left( \dfrac{ck}{2 \pi} +l \right)^2 + \dfrac{j^2k}{2} + jl + \dfrac{jck}{2 \pi} .$$
Therefore, we have:
\begin{align*}
T_{ \phi_1} \epsilon_l(z) & = c_{ \phi_1} k^{3/4} u^{kz/(2 \pi)} \sum_{j \in \mathbb{Z}} \left(v^{-k} \right)^j e^{i(l+jk)z} \sqrt{\dfrac{2 \pi}{k}} \exp \left(- \dfrac{ 1}{2k} \left( \dfrac{ck}{2 \pi}+l+jk \right)^2 \right), \\
& = c_k^l u^{kz/(2 \pi)} \sum_{j \in \mathbb{Z}} \left(v^{-k} e^{-jk/2-l} u^{ik/(2 \pi)} \right)^j e^{i(l+jk)z}, \\
& = c_k^l e_l(z),
\end{align*}
where $c_k^l$ is given by the following equality:
$$ c_k^l = c_{ \phi_1} \sqrt{2\pi} k^{1/4} \exp \left( - \dfrac{1}{2k} \left( \dfrac{ck}{2 \pi} +l \right)^2 \right) .$$
Conversely, we compute $T_{ \phi_1}^* e_l$. First, since $ \mathcal{H}_k \subset \mathfrak{S}^0( \mathbb{C})$ and since the function $z \longmapsto e^{-(1/ 2 \hbar)(z-x)^2}$ belongs to the space $\mathfrak{S}( \mathbb{C})$, then for $v \in \mathcal{H}_k$, we have (according to Remark \ref{rem_crochet_defini_sur_SSj(C)}):
$$ \left\langle c_{ \phi_1} \hbar^{-3/4} e^{-(1/ 2 \hbar)(.-x)^2}, v \right\rangle =  c_{ \phi_1} \hbar^{-3/4} \int_{ \mathbb{C}} e^{-1/(2 \hbar)( \overline{z} - x)^2} e^{-2 \Phi_1(z)/ \hbar} e_l(z) L(dz) < + \infty .$$
Let $l \in \lbrace 0, 1, \ldots, k-1 \rbrace$, then we have:
\begin{align*}
& T_{ \phi_1}^* e_l(x) \\
&  = c_{ \phi_1} \hbar^{-3/4} \int_{ \mathbb{C}} e^{-1/(2 \hbar)( \overline{z} - x)^2} e^{-2 \Phi_1(z)/ \hbar} e_l(z) L(dz), \\
& = c_{ \phi_1} k^{3/4} \int_{ \mathbb{C}} e^{-(k/2)( \overline{z} -x)^2} e^{-k ( \Im z)^2} e_l(z) L(dz) \quad \text{because $ \Phi_1(z) = \dfrac{1}{2} ( \Im z)^2$ and $k = \dfrac{1}{\hbar}$ }, \\
& =c_{ \phi_1} k^{3/4} \int_{ \mathbb{C}} e^{-(k/2)( \overline{z} -x)^2} e^{-k ( \Im z)^2} u^{kz/(2 \pi)} \sum_{j \in \mathbb{Z}} \left(v^{-k} e^{-l-jk/2} u^{ik/(2 \pi)} \right)^j e^{i(l+jk)z} L(dz), \\
& = c_{ \phi_1} k^{3/4} \sum_{j \in \mathbb{Z}} \left(v^{-k} e^{-l-jk/2} u^{ik/(2 \pi)} \right)^j \int_{ \mathbb{C}} e^{-(k/2)(\overline{z})^2} e^{-k ( \Im (z+x))^2} u^{k(z+x)/(2 \pi)} e^{i(l+jk)(z+x)} L(dz), \\
& = c_{ \phi_1} k^{3/4} u^{kx/(2 \pi)} \sum_{j \in \mathbb{Z}} \left(v^{-k} e^{-l-jk/2} u^{ik/(2 \pi)} \right)^j e^{i(l+jk)x} \int_{ \mathbb{C}} e^{-(k/2)(\overline{z})^2} e^{-k ( \Im z)^2} e^{iz(l+jk+ck/(2 \pi))} L(dz).
\end{align*}
We have to compute the following integral (after the change of variables $z=p+iq$):
\begin{align*}
& \int_{ \mathbb{R}} \int_{ \mathbb{R}} e^{-(k/2)(p-iq)^2} e^{-kq^2} e^{i(p+iq)(l+jk+ck/(2 \pi))} dp dq \\
& = \int_{ \mathbb{R}} \int_{ \mathbb{R}} e^{-k p^2/2} e^{-kq^2/2} e^{ikpq} e^{ip(l+jk+ck/(2 \pi))} e^{-q(l+jk+ck/(2 \pi))} dp dq.
\end{align*}
By a simple computation, we obtain:
\begin{align*}
\int_{ \mathbb{R}} e^{-k p^2/2} e^{ikpq} e^{ip(l+jk+ck/(2 \pi))} dp = \sqrt{\dfrac{2\pi}{k}} \exp \left( - \dfrac{1}{2k} \left( l+jk+ \dfrac{ck}{2 \pi} \right)^2 - \dfrac{kq^2}{2} - q \left( l+jk+ \dfrac{ck}{2 \pi} \right) \right).
\end{align*}
Thus, by an other simple computation, we obtain:
\begin{align*}
\int_{ \mathbb{R}} \int_{ \mathbb{R}} e^{-kp^2/2} e^{-kq^2/2} e^{ikpq} e^{ip(l+jk+ck/(2 \pi))} e^{-q(l+jk+ck/(2 \pi))} dp dq  = \sqrt{2} \dfrac{\pi}{k} e^{(l+jk+ck/(2 \pi))^2/(2k)}.
\end{align*}
Consequently, we obtain:
\begin{align*}
& T_{ \phi_1}^* e_l(x) \\ 
& = c_{ \phi_1} k^{3/4} \sqrt{2} \dfrac{\pi}{k} u^{kx/(2 \pi)} \sum_{j \in \mathbb{Z}} \left(v^{-k} e^{-l-jk/2} u^{ik/(2 \pi)} \right)^j e^{i(l+jk)x}  e^{(l+jk+ck/(2 \pi))^2/(2k)}, \\
& = c_{ \phi_1} k^{-1/4} \sqrt{2} \pi e^{(l+ck/(2 \pi))^2/(2k)} u^{kx/(2 \pi)} \sum_{j \in \mathbb{Z}} \left(v^{-k} \right)^j e^{i(l+jk)x}, \\
& = \tilde{c}_k^l \epsilon_l(x),
\end{align*}
where $\tilde{c}_k^l = c_{ \phi_1} k^{-1/4} \sqrt{2} \pi e^{(l+ck/(2 \pi))^2/(2k)}$.
\end{proof}

Since the Bargmann transform is a unitary transformation between the spaces $L^2( \mathbb{R})$ and $H_{ \hbar}( \mathbb{C}, \Phi_1)$, we study this feature between the spaces $ \mathcal{L}_k$ and $ \mathcal{H}_k$.

\begin{prop} \label{prop_TphiTphi*=id_sur_Hk_et_Tphi*Tphi=id_sur_Lk}  $ $
\begin{enumerate}
\item $T_{ \phi_1}^* T_{ \phi_1} = \id$ on $\mathcal{L}_k$.
\item $T_{ \phi_1} T_{ \phi_1}^*= \id$ on $ \mathcal{H}_k$.
\end{enumerate}
\end{prop}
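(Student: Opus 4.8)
The plan is to reduce everything to the explicit action of $T_{\phi_1}$ and $T_{\phi_1}^*$ on the bases of $\mathcal{L}_k$ and $\mathcal{H}_k$ already computed in the proof of Proposition \ref{prop_formule_C}. Since $\mathcal{L}_k$ and $\mathcal{H}_k$ are finite-dimensional, of dimension $k$, with basis $(\epsilon_l)_{l \in \{0,\dots,k-1\}}$ and $(e_l)_{l \in \{0,\dots,k-1\}}$ respectively (see Equations \eqref{eq_base_epsilon_l} and \eqref{eq_base_el}), and since the compositions $T_{\phi_1}^* T_{\phi_1}$ and $T_{\phi_1} T_{\phi_1}^*$ are linear, it suffices to check that they act as the identity on these basis vectors; no continuity or density argument is needed.

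Recall from the proof of Proposition \ref{prop_formule_C} that $T_{\phi_1}\epsilon_l = c_k^l\, e_l$ and $T_{\phi_1}^* e_l = \tilde c_k^l\, \epsilon_l$ for $l \in \{0,\dots,k-1\}$, with
$$ c_k^l = c_{\phi_1}\sqrt{2\pi}\,k^{1/4}\exp\!\left(-\frac{1}{2k}\Bigl(\frac{ck}{2\pi}+l\Bigr)^2\right), \qquad \tilde c_k^l = c_{\phi_1}\sqrt{2}\,\pi\,k^{-1/4}\exp\!\left(\frac{1}{2k}\Bigl(l+\frac{ck}{2\pi}\Bigr)^2\right). $$
Composing, one obtains $T_{\phi_1}^* T_{\phi_1}\epsilon_l = c_k^l\,\tilde c_k^l\,\epsilon_l$ and $T_{\phi_1} T_{\phi_1}^* e_l = c_k^l\,\tilde c_k^l\, e_l$, so both statements reduce to the single scalar identity $c_k^l\,\tilde c_k^l = 1$. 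To verify it, note that the two exponential factors are mutually inverse and cancel, as do the factors $k^{1/4}$ and $k^{-1/4}$; what remains is $c_{\phi_1}^2\cdot\sqrt{2\pi}\cdot\sqrt{2}\,\pi = 2\pi^{3/2}\,c_{\phi_1}^2$. Since $c_{\phi_1} = 2^{-1/2}\pi^{-3/4}$ by \eqref{formule_constante_c_phi}, we have $c_{\phi_1}^2 = (2\pi^{3/2})^{-1}$, hence $c_k^l\,\tilde c_k^l = 1$ for every $l$, and both assertions follow by linearity.

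I do not expect a genuine obstacle here beyond the bookkeeping of the normalizing constants; the only real content is that $c_k^l$ and $\tilde c_k^l$ turn out to be reciprocal, which is not visible a priori but is forced by the fact that $T_{\phi_1}$ should be, after the identifications of Subsections \ref{subsection_quantif_toeplitz} and \ref{subsection_complex_weyl_quantization}, a unitary isomorphism $\mathcal{L}_k \to \mathcal{H}_k$ — here it simply drops out of the explicit formulas. An alternative would be to invoke the unitarity of $T_{\phi_1}\colon L^2(\mathbb{R})\to H_{\hbar}(\mathbb{C},\Phi_1)$ together with $\Pi_{\Phi_1,\hbar} = T_{\phi_1}T_{\phi_1}^*$ from Proposition \ref{prop_ecriture_integrale_Pi_Phi1}, but since the elements of $\mathcal{L}_k$ are tempered distributions and not $L^2$ functions this does not apply directly, so the basis computation above is the cleanest route.
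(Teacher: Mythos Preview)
Your proof is correct and follows essentially the same approach as the paper: both reduce the statement to the scalar identity $c_k^l\,\tilde c_k^l = 1$ via the explicit basis computations from Proposition~\ref{prop_formule_C}, and verify it using $c_{\phi_1}^2 = (2\pi^{3/2})^{-1}$. The paper phrases this in terms of diagonal matrices $\mathcal{C}$ and $\mathcal{C}^*$ but the content is identical.
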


\begin{proof}
Let $c$ be the real number such that $u = e^{ic}$. According to the proof of Proposition \ref{prop_formule_C}, we have, for $l \in \lbrace 0, \ldots, k-1 \rbrace$:
$$ T_{ \phi_1} \epsilon_l = c_k^l e_l \quad \text{with $c_k^l = c_{ \phi_1} \sqrt{2\pi} k^{1/4} e^{-( ck/(2 \pi) +l)^2/(2k)}$}.$$
Let $ \mathcal{C} = \diag (c_k^0, \ldots, c_k^{k-1})$ be the matrix of the operator $T_{\phi_1}$ in the basis $(e_l)_{l \in \mathbb{Z}/ k \mathbb{Z}}$. According to the proof of Proposition \ref{prop_formule_C}, we also have, for $l \in \lbrace 0, \ldots, k-1 \rbrace$:
$$ T_{\phi_1}^* e_l = \tilde{c}_k^l \epsilon_l \quad \text{with $\tilde{c}_k^l = c_{ \phi_1} k^{-1/4} \sqrt{2} \pi e^{(l+ck/(2 \pi))^2/(2k)}$}.$$
Let $ \mathcal{C}^* = \diag( \tilde{c}_k^0, \ldots, \tilde{c}_k^{k-1}) $ be the matrix of the operator $T_{ \phi_1}^*$ in the basis $(\epsilon_l)_{l \in \mathbb{Z}/ k \mathbb{Z}}$. We want to prove that: $ \mathcal{C} \mathcal{C}^* = \mathcal{C}^* \mathcal{C} = I_k$. Let $k \geq 1$ and let $l \in \lbrace 0, 1, \ldots, k-1 \rbrace$, we have:
\begin{align*}
c_k^l \tilde{c}_k^l & = c_{ \phi_1} \sqrt{2\pi} k^{1/4} e^{-( ck/(2 \pi) +l)^2/(2k)} c_{ \phi_1} k^{-1/4} \sqrt{2} \pi e^{(l+ck/(2 \pi))^2/(2k)}, \\
& = c_{ \phi_1}^2 2 \pi^{3/2}, \\
& = \left( \dfrac{1}{2^{1/2} \pi^{3/4}} \right)^2 2 \pi^{3/2} \quad \text{according to Definition \ref{defi_Bargmann_transform}}, \\
& = 1, \\
& = \tilde{c}_k^l c_k^l.
\end{align*}
Therefore, we have: $ \mathcal{C} \mathcal{C}^* = \mathcal{C}^* \mathcal{C} = I_k $.
\end{proof}

\subsubsection{Complex Weyl quantization of the torus}
In this paragraph, we define the complex Weyl quantization of a symbol on the torus. As in the  classical Weyl quantization case, we have two definitions for the complex Weyl quantization. With an Egorov theorem analogous to Proposition \ref{prop_transfor_Bargmann_et_qauntif_weyl} in the $ \mathbb{R}^2$-case, we exhibit the notion of complex Weyl quantization of the torus. First, we introduce a new class of symbols. Recall that $ \Lambda_{ \Phi_1}$ denotes the following space:
$$ \Lambda_{ \Phi_1} = \left\lbrace \left( z, \dfrac{2}{i} \dfrac{\partial \Phi_1}{\partial z}(z) \right); z \in \mathbb{C} \right\rbrace = \left\lbrace (z, - \Im(z)); z \in \mathbb{C} \right\rbrace.$$
And that the canonical transformation $ \kappa_{ \phi_1}$ is defined as follows:
\begin{align*}
\kappa_{ \phi_1}: \mathbb{R}^2 & \longrightarrow \Lambda_{ \Phi_1} \\
(x, y) & \longmapsto (z, w) := (x-iy, y).
\end{align*}	
Notice that, if $a_{k} \in \mathcal{C}^{ \infty}_k( \mathbb{R}^2)$ is a function such that, for all $(x, y) \in \mathbb{R}^2$, we have:
$$ a_{k}(x + 2 \pi, y) = a_{k}(x, y) = a_{k}(x, y+1) ;$$
and if $b_{k}$ is the function defined by the following relation, for $(z, w) \in \Lambda_{ \Phi_1}$:
$$ b_{k} (z, w) := a_{k} \circ \kappa_{ \phi_1}^{-1}(z, w).$$
Then $b_{k} \in \mathcal{C}^{ \infty}_k( \Lambda_{ \Phi_1})$ is a function such that, for $(z, w) \in \Lambda_{ \Phi_1}$, we have:
$$ b_{k}(z+2 \pi, w) = b_{k}(z, w) = b_{k}(z+i, w-1).$$
Besides, thanks to the identification of $ \Lambda_{ \Phi_1}$ with $ \mathbb{C}$, we can rewrite the symbol $b_{k}$ as a convergent series, for $z \in \mathbb{C} \simeq \Lambda_{ \Phi_1}$:
\begin{equation} \label{equation_ecriture_somme_symbole_sur_LambdaPhi1}
b_{k}(z) = \sum_{(m, n) \in \mathbb{Z}^2} b_{m,n}^{ k} e^{in \Re(z)} e^{2 i \pi m \Im(z)},
\end{equation}
where $\left(b_{m,n}^{k} \right)_{(m,n) \in \mathbb{Z}^2}$ is defined by the following formula:
$$ b_{m,n}^{k} = a_{m,n}^{k} \quad \text{where for $(x, y) \in \mathbb{R}^2$,} \quad a_{k}(x, y) = \sum_{(m,n) \in \mathbb{Z}^2} a_{m,n}^{k} e^{inx} e^{-2 i \pi my} .$$

\begin{rema}
Since $ \Lambda_{ \Phi_1} \simeq \mathbb{C}$, then the class of symbols $S( \Lambda_{ \Phi_1})$ can be identified with $S( \mathbb{C}) \simeq S( \mathbb{R}^2)$.
\end{rema}

We can now deduce the following Egorov theorem.

\begin{prop} \label{prop_quantif_complexe_somme_definie_sur_S(C)} $ $ \\
Let $a_{k} \in \mathcal{C}^{ \infty}_k( \mathbb{R}^2)$ be a function such that, for all $(x, y) \in \mathbb{R}^2$, we have:
$$ a_{k}(x+2 \pi, y) = a_{k}(x, y) = a_{k}(x, y+1).$$
Then, we have:
$$ T_{ \phi_1} \Op^w_k(a_{k}) = \Op^w_{ \Phi_1, k}(a_{k} \circ \kappa_{ \phi_1}^{-1}) T_{ \phi_1} \quad \text{on $ \mathcal{S}( \mathbb{R})$},$$
where $ \Op^w_{ \Phi_1, k}$ is defined by the following formula, for $u \in \mathfrak{S}( \mathbb{C})$:
$$ \Op_{\Phi_1, k}^w(a_{k} \circ \kappa_{ \phi_1}^{-1}) u(z) = \sum_{(m, n) \in \mathbb{Z}^2} a_{m,n}^k e^{-i \pi mn/k} e^{-n^2/2k} e^{inz} u\left( z- \dfrac{2 \pi m}{k} + \dfrac{in}{k} \right) ,$$
where $\left(a_{m,n}^k \right)_{(m, n) \in \mathbb{Z}^2}$ is the sequence of coefficients defined in Equation \eqref{equation_ecriture_somme_symbole_tore}.
\end{prop}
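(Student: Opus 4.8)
The plan is to reduce the identity, by linearity, to an intertwining relation for a single Weyl--Heisenberg translation operator $\hat T$, and to establish that relation by an explicit Gaussian computation. By Definition~\ref{defi_quantif_weyl_tore_somme},
$$ \Op^w_k(a_{k}) = \sum_{(m,n) \in \mathbb{Z}^2} a_{m,n}^{k}\, \hat T\!\left( \frac{2\pi m}{k}, \frac{n}{k} \right), $$
and since $a_{k} \in \mathcal{C}^{\infty}_k(\mathbb{R}^2)$ is $2\pi\mathbb{Z}\times\mathbb{Z}$-periodic, its Fourier coefficients $\left(a_{m,n}^{k}\right)$ decay faster than any power of $(1+|m|+|n|)^{-1}$; this makes the series converge in the strong operator topology on $\mathcal{S}(\mathbb{R})$ and, at the end of the argument, legitimizes exchanging $\sum_{(m,n)}$ with the continuous operator $T_{\phi_1}: \mathcal{S}(\mathbb{R}) \to \mathfrak{S}(\mathbb{C})$ of Proposition~\ref{prop_Tphi_envoie_S(R)_sur_SS(C)}, as well as the convergence of the series defining $\Op^w_{\Phi_1,k}(a_k \circ \kappa_{\phi_1}^{-1})$ on $\mathfrak{S}(\mathbb{C})$. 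Hence it suffices to prove, for fixed $(p,q)$ and with $\hbar = 1/k$, the ``magic translation formula''
$$ T_{\phi_1}\!\left( \hat T(p,q)\phi \right)(z) = e^{-ipqk/2}\, e^{ikqz - (k/2)q^2}\, (T_{\phi_1}\phi)(z - p + iq), \qquad \phi \in \mathcal{S}(\mathbb{R}), $$
and then to specialize $(p,q) = \left(2\pi m/k,\, n/k\right)$.

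For this formula I would start from the integral definition of $T_{\phi_1}$ in Definition~\ref{defi_Bargmann_transform}, insert $\hat T(p,q)\phi(x) = e^{-iqpk/2}e^{ixqk}\phi(x-p)$, and perform the change of variables $x \mapsto x+p$. One is then left, up to explicit scalar prefactors, with a Gaussian integral whose exponent is $-\frac{k}{2}(z - p - x)^2 + ikqx$. Completing the square in $x$ rewrites this exponent as $-\frac{k}{2}\big((z-p+iq) - x\big)^2 + ikq(z-p) - \frac{k}{2}q^2$; recognizing the remaining integral as $c_{\phi_1}^{-1}\hbar^{3/4}(T_{\phi_1}\phi)(z-p+iq)$ and collecting the scalar factors — the $p,q$-dependent ones combining to $e^{-ipqk/2}$ — yields the displayed identity. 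Substituting $p = 2\pi m/k$ and $q = n/k$ then gives
$$ T_{\phi_1}\!\left( \hat T\!\left( \tfrac{2\pi m}{k}, \tfrac{n}{k} \right)\phi \right)(z) = e^{-i\pi mn/k}\, e^{inz}\, e^{-n^2/(2k)}\, (T_{\phi_1}\phi)\!\left( z - \tfrac{2\pi m}{k} + \tfrac{in}{k} \right), $$
which is exactly the $(m,n)$-th term of $\Op^w_{\Phi_1,k}(a_{k}\circ\kappa_{\phi_1}^{-1})$ applied to $T_{\phi_1}\phi$, once one recalls that the Fourier coefficient of $a_{k}\circ\kappa_{\phi_1}^{-1}$ in the expansion~\eqref{equation_ecriture_somme_symbole_sur_LambdaPhi1} is $b_{m,n}^{k} = a_{m,n}^{k}$.

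Finally I would multiply by $a_{m,n}^{k}$, sum over $(m,n) \in \mathbb{Z}^2$, and push $T_{\phi_1}$ through the sum on the left, obtaining $T_{\phi_1}\Op^w_k(a_{k})\phi = \Op^w_{\Phi_1,k}(a_{k}\circ\kappa_{\phi_1}^{-1})\, T_{\phi_1}\phi$ for every $\phi \in \mathcal{S}(\mathbb{R})$. I expect the only genuine difficulty to lie in the bookkeeping around these interchanges: one must verify that the series for $\Op^w_{\Phi_1,k}$ converges in the topology of the semi-norms $|\cdot|_j$ on $\mathfrak{S}(\mathbb{C})$ and that $\Op^w_{\Phi_1,k}(a_{k}\circ\kappa_{\phi_1}^{-1})$ maps $\mathfrak{S}(\mathbb{C})$ into itself, both of which follow from the rapid decay of $\left(a_{m,n}^{k}\right)$ together with the elementary fact that $|(T_{\phi_1}\phi)(z+iq)|$ is controlled by $e^{\Phi_1(z)/\hbar}$ times a shifted semi-norm of $\phi$ (so that holomorphic translations act continuously on $\mathfrak{S}(\mathbb{C})$), but which require a little care to state cleanly. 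The Gaussian computation at the heart of the proof is otherwise entirely routine, and the result is the natural torus analogue of Proposition~\ref{prop_transfor_Bargmann_et_qauntif_weyl}.
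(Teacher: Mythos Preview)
Your proposal is correct and follows essentially the same approach as the paper: both compute $T_{\phi_1}$ applied to the Weyl--Heisenberg translations via completing the square in the Gaussian exponent and recognizing the translated Bargmann transform. The only organizational difference is that you isolate the intertwining identity for a single $\hat T(p,q)$ first and then sum, whereas the paper carries the $\sum_{(m,n)}$ through the entire computation in one pass; the underlying manipulations are identical.
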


\begin{proof}
According to Definition \ref{defi_quantif_weyl_tore_somme}, $ \Op^w_k(a_k) : \mathcal{S}( \mathbb{R}) \longrightarrow \mathcal{S}( \mathbb{R})$. Let $\phi \in \mathcal{S}( \mathbb{R})$, then we have:
\begin{align*}
& T_{ \phi_1} ( \Op^w_k(a_k) \phi)(z) \\
& = c_{ \phi_1}  \hbar^{-3/4} \int_{ \mathbb{R}} e^{-(1/2 \hbar)(z-x)^2} (\Op^w_k (a_k) \phi)(x) dx, \\
& = c_{ \phi_1}  \hbar^{-3/4} \int_{ \mathbb{R}} e^{-(1/2 \hbar)(z-x)^2} \sum_{(m, n) \in \mathbb{Z}^2} a_{m,n}^k e^{-i \pi mn/k} e^{ixn} \phi \left( x - \dfrac{2 \pi m}{k} \right) dx, \\
& = c_{ \phi_1}  \hbar^{-3/4} \int_{ \mathbb{R}} \sum_{(m, n) \in \mathbb{Z}^2} a_{m,n}^k e^{-i \pi mn/k} e^{izn} e^{-(1/2 \hbar)(z-x)^2} e^{-i(z-x)n} \phi \left( x - \dfrac{2 \pi m}{k} \right) dx, \\
& = c_{ \phi_1}  \hbar^{-3/4} \int_{ \mathbb{R}} \sum_{(m, n) \in \mathbb{Z}^2} a_{m,n}^k e^{-i \pi mn/k} e^{izn} e^{-(1/2 \hbar)(z-x+in \hbar)^2} e^{-n^2 \hbar/2} \phi \left( x - \dfrac{2 \pi m}{k} \right) dx, \\
& = c_{ \phi_1}  \hbar^{-3/4} \int_{ \mathbb{R}} \sum_{(m, n) \in \mathbb{Z}^2} a_{m,n}^k e^{-i \pi mn/k} e^{izn} e^{-(1/2 \hbar)(z-x+in \hbar-2 \pi m \hbar)^2} e^{-n^2 \hbar/2} \phi \left( x \right) dx, \\
& =\sum_{(m, n) \in \mathbb{Z}^2} a_{m,n}^k e^{-i \pi mn/k} e^{-n^2 \hbar/2} e^{izn} c_{ \phi_1}  \hbar^{-3/4} \int_{ \mathbb{R}}  e^{-(1/2 \hbar)(z-x+in \hbar-2 \pi m \hbar)^2} \phi \left( x \right) dx, \\
& = \sum_{(m, n) \in \mathbb{Z}^2} a_{m,n}^k e^{-i \pi mn/k} e^{-n^2/2k} e^{izn} (T_{ \phi_1} \phi) \left( z - \dfrac{2 \pi m}{k} + \dfrac{in}{k} \right) \quad \text{because $ \hbar = \dfrac{1}{k}$}.
\end{align*}
Therefore, for $u \in \mathfrak{S}( \mathbb{C})$, we define $ \Op_{ \Phi_1,k}^w(a_k \circ \kappa_{ \phi_1}^{-1})$ as follows:
$$ \Op_{ \Phi_1, k}^w(a_k \circ \kappa_{ \phi_1}^{-1}) u(z) = \sum_{(m, n) \in \mathbb{Z}^2} a_{m,n}^k e^{-i \pi mn/k} e^{-n^2/2k} e^{izn} u \left( z - \dfrac{2 \pi m}{k} + \dfrac{in}{k} \right).$$
\end{proof}

The previous proposition leads us to define the notion of complex Weyl quantization of a symbol on the torus as follows. 

\begin{defi}[First definition of the complex Weyl quantization of the torus] \label{defi_weyl_complex_tore_1}
Let $b_k \in \mathcal{C}^{ \infty}_k( \Lambda_{ \Phi_1})$ be a function such that, for all $(z, w) \in \Lambda_{ \Phi_1}$, we have:
$$ b_k(z+2 \pi, w) = b_k(z, w) = b_k(z+i, w-1).$$
Define the complex Weyl quantization of the symbol $b_k$, denoted by $ \Op^w_{ \Phi_1,k}(b_k)$, by the following formula, for $u \in \mathfrak{S}( \mathbb{C})$:
$$ \Op_{\Phi_1, k}^w(b_k) u(z) = \sum_{(m, n) \in \mathbb{Z}^2} b_{m,n}^k e^{-i \pi mn/k} e^{-n^2/2k} e^{inz} u\left( z- \dfrac{2 \pi m}{k} + \dfrac{in}{k} \right) ,$$
where the sequence $\left(b_{m,n}^k \right)_{(m,n) \in \mathbb{Z}^2}$ is given by Equation \eqref{equation_ecriture_somme_symbole_sur_LambdaPhi1}.
\end{defi}

Let's prove a basic property on this notion of quantization. The operator $ \Op^w_{\Phi_1,k}(b_k)$ defined above acts on the space $ \mathfrak{S}( \mathbb{C})$. We are now going to show that it also acts on the space $ \mathcal{H}_k$ (as expected since the Bargmann transform sends the space $ \mathcal{L}_k$ on the space $ \mathcal{H}_k$).

\begin{prop} \label{prop_Op_Weyl_k_agit_sur_Hk}
Let $b_k \in \mathcal{C}^{ \infty}_k( \Lambda_{ \Phi_1})$ be a function such that, for all $(z, w) \in \Lambda_{ \Phi_1}$, we have:
$$ b_k(z+ 2 \pi, w) = b_k(z, w) = b_k(z+i, w-1) .$$ 
Then $ \Op^w_{\Phi_1, k}(b_k)$ can be extended into an operator which sends the space $\mathcal{H}_k$ on itself.
\end{prop}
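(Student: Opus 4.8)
I would extend $\Op^w_{\Phi_1,k}(b_k)$ to $\mathcal{H}_k$ by applying the very series of Definition~\ref{defi_weyl_complex_tore_1} to an element $h\in\mathcal{H}_k$: such an $h$ is entire, so the translated values $h\bigl(z-\tfrac{2\pi m}{k}+\tfrac{in}{k}\bigr)$ are well defined, and it suffices to show that the series converges to a holomorphic function and that its sum satisfies the two Floquet relations defining $\mathcal{H}_k$. One could alternatively define the extension by conjugation with the Bargmann transform, $T_{\phi_1}\,\Op^w_k(b_k\circ\kappa_{\phi_1})\,T_{\phi_1}^*$, which maps $\mathcal{H}_k$ into itself by Propositions~\ref{prop_qauntif_weyl_Lk_avec_somme}, \ref{prop_formule_C} and \ref{prop_TphiTphi*=id_sur_Hk_et_Tphi*Tphi=id_sur_Lk}; I would mention this as a consistency check and carry out the more self-contained direct argument.

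\textbf{Convergence.} Recall from the proof of Proposition~\ref{prop_formule_C} that $\mathcal{H}_k\subset\mathfrak{S}^0(\mathbb{C})$, so every $h\in\mathcal{H}_k$ obeys $|h(w)|\le|h|_0\,e^{\Phi_1(w)/\hbar}=|h|_0\,e^{k(\Im w)^2/2}$. Moreover, for each fixed $k$ the symbol $b_k$ is smooth and periodic with respect to $2\pi\mathbb{Z}\oplus i\mathbb{Z}$, so the Fourier coefficients $\bigl(b^k_{m,n}\bigr)_{(m,n)\in\mathbb{Z}^2}$ of \eqref{equation_ecriture_somme_symbole_sur_LambdaPhi1} decrease rapidly. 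The general term of the series is then controlled by
$$\left| b^k_{m,n}\,e^{-i\pi mn/k}\,e^{-n^2/2k}\,e^{inz}\,h\bigl(z-\tfrac{2\pi m}{k}+\tfrac{in}{k}\bigr)\right|\ \le\ \bigl|b^k_{m,n}\bigr|\;|h|_0\;e^{k(\Im z)^2/2},$$
because the real shift $-2\pi m/k$ does not change $\Im z$, while the factors $e^{-n^2/2k}$, $|e^{inz}|=e^{-n\Im z}$ and $e^{\Phi_1(z+in/k)/\hbar}=e^{k(\Im z)^2/2}e^{n\Im z}e^{n^2/2k}$ combine to leave only $e^{k(\Im z)^2/2}$. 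Summing the rapidly decreasing $\bigl|b^k_{m,n}\bigr|$ over $\mathbb{Z}^2$ gives absolute convergence, uniformly on every horizontal strip; hence the sum is holomorphic on $\mathbb{C}$ and the extended operator maps $\mathcal{H}_k$ boundedly into $\mathfrak{S}^0(\mathbb{C})$.

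\textbf{Periodicity and conclusion.} Write $g=\Op^w_{\Phi_1,k}(b_k)h$. Replacing $z$ by $z+2\pi$, using $e^{in(z+2\pi)}=e^{inz}$ and the relation $h(w+2\pi)=u^k h(w)$ with $w=z-\tfrac{2\pi m}{k}+\tfrac{in}{k}$, one gets $g(z+2\pi)=u^k\,g(z)$ term by term. Replacing $z$ by $z+i$, using $e^{in(z+i)}=e^{-n}e^{inz}$ and $h(w+i)=v^k e^{-iwk+k/2}h(w)$, and noting that $e^{-i(z-2\pi m/k+in/k)k+k/2}=e^{-izk}e^{2\pi i m}e^{n}e^{k/2}=e^{-izk}e^{n}e^{k/2}$, the factors $e^{-n}$ and $e^{n}$ cancel and one obtains $g(z+i)=v^k e^{-izk+k/2}\,g(z)$. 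Hence $g\in\mathcal{H}_k$, which proves the proposition. The only genuinely delicate step is the convergence estimate, whose content is that the inclusion $\mathcal{H}_k\subset\mathfrak{S}^0(\mathbb{C})$ already suffices: the Gaussian factor $e^{-n^2/2k}$ built into the quantization exactly absorbs the growth of holomorphic elements of $\mathcal{H}_k$ along the imaginary translates, and no growth in $m$ arises since real shifts leave $\Phi_1$ unchanged; the Floquet verification is then purely formal bookkeeping.
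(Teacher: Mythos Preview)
Your proof is correct and takes a genuinely different route from the paper's. The paper proceeds by duality: it first shows, by a direct computation with the series on $\mathfrak{S}(\mathbb{C})$, that the formal adjoint of $\Op^w_{\Phi_1,k}(b_k)$ with respect to the sesquilinear pairing $\langle\cdot,\cdot\rangle$ is $\Op^w_{\Phi_1,k}(\overline{b}_k)$, which again preserves $\mathfrak{S}(\mathbb{C})$; this yields an extension to all of $\mathfrak{S}'(\mathbb{C})$ by the formula $\langle \Op^w_{\Phi_1,k}(b_k)u,v\rangle_{\mathfrak{S}',\mathfrak{S}}:=\langle u,\Op^w_{\Phi_1,k}(\overline{b}_k)v\rangle_{\mathfrak{S}',\mathfrak{S}}$, and the restriction to $\mathcal{H}_k\subset\mathfrak{S}'(\mathbb{C})$ gives the desired operator, after which the Floquet relations are checked exactly as you do. Your direct argument bypasses $\mathfrak{S}'(\mathbb{C})$ entirely: you plug $h\in\mathcal{H}_k$ into the series and exploit the exact cancellation in the exponent (the factor $e^{-n^2/2k}$ together with $|e^{inz}|$ precisely absorbs the growth of $e^{\Phi_1/\hbar}$ under the shift $z\mapsto z+in/k$) to get a uniform bound on horizontal strips. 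This is more self-contained and makes transparent why the quantization is tailored to $\mathcal{H}_k$. The paper's duality route, on the other hand, automatically furnishes the extension to the larger space $\mathfrak{S}'(\mathbb{C})$ and the same scheme is reused verbatim for the contour-integral version $\Op^w_{\Phi_1}(b_\hbar)$ in Proposition~\ref{prop_action_quantif_weyl_complexe_sur_SS'(C)}, so it has structural advantages within the paper. Both approaches end with the same term-by-term Floquet verification.
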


\begin{proof}
According to Proposition \ref{prop_quantif_complexe_somme_definie_sur_S(C)}, $ \Op^w_{\Phi_1, k}(b_k) : \mathfrak{S}( \mathbb{C}) \longrightarrow \mathfrak{S}( \mathbb{C})$. Let $u, v \in \mathfrak{S}( \mathbb{C})$, then we have:
\begin{align*}
& \langle \Op^w_{\Phi_1, k}(b_k) u, v \rangle_{ \mathfrak{S}, \mathfrak{S}} \\
& = \int_{ \mathbb{C}} \overline{\Op^w_{\Phi_1, k}(b_k) u(z)} v(z) e^{-2 \Phi_1(z)/ \hbar} L(dz), \\
& = \int_{ \mathbb{C}} \sum_{(m, n) \in \mathbb{Z}^2} \overline{b_{m,n}^k} e^{i \pi mn/k} e^{-n^2/2k} e^{-in\overline{z}} \overline{u\left( z- \dfrac{2 \pi m}{k} + \dfrac{in}{k} \right) } v(z) e^{-2 \Phi_1(z)/ \hbar} L(dz), \\
& = \int_{ \mathbb{C}} \sum_{(m, n) \in \mathbb{Z}^2} \overline{b_{m,n}^k} e^{-i \pi mn/k} e^{-n^2/2k} e^{-in\overline{z}} \overline{u\left( z + \dfrac{in}{k} \right) } v \left( z + \dfrac{2 \pi m}{k} \right) e^{-2 \Phi_1(z)/ \hbar} L(dz), \\
& = \int_{ \mathbb{C}} \overline{u\left( z \right) } \sum_{(m, n) \in \mathbb{Z}^2} \overline{b_{m,n}^k} e^{-i \pi mn/k} e^{-n^2/2k} e^{-inz}  v \left( z + \dfrac{2 \pi m}{k} - \dfrac{in}{k} \right) e^{-2 \Phi_1(z)/ \hbar} L(dz), \\
& = \int_{ \mathbb{C}} \overline{u\left( z \right) } \sum_{(m, n) \in \mathbb{Z}^2} \overline{b_{-m,n}^k} e^{i \pi mn/k} e^{-n^2/2k} e^{-inz}  v \left( z - \dfrac{2 \pi m}{k} - \dfrac{in}{k} \right) e^{-2 \Phi_1(z)/ \hbar} L(dz) \quad \text{via $m \longmapsto -m$}, \\
& = \int_{ \mathbb{C}} \overline{u\left( z \right) } \sum_{(m, n) \in \mathbb{Z}^2} \overline{b_{-m,-n}^k} e^{-i \pi mn/k} e^{-n^2/2k} e^{inz}  v \left( z - \dfrac{2 \pi m}{k} + \dfrac{in}{k} \right) e^{-2 \Phi_1(z)/ \hbar} L(dz) \quad \text{via $n \longmapsto -n$}, \\
& = \int_{ \mathbb{C}} \overline{u\left( z \right) } \Op^w_{ \Phi_1,k}( \overline{b}_k) v(z) e^{-2 \Phi_1(z)/ \hbar} L(dz), \\
& = \langle u, \Op^w_{ \Phi_1,k}( \overline{b}_k) v \rangle_{ \mathfrak{S}, \mathfrak{S}},
\end{align*}
where $ \overline{b}_k \in \mathcal{C}^{ \infty}_k( \Lambda_{ \Phi_1})$ is defined, for $(z, w) \in \Lambda_{ \Phi_1}$, by:
\begin{align*}
\overline{b}_k(z, w) & = \sum_{(m, n)\in \mathbb{Z}^2} \overline{b_{-m, -n}^k} e^{in(z+iw)} e^{-2i \pi m w}, \\
& = \sum_{(m, n)\in \mathbb{Z}^2} \overline{b_{m, n}^k} e^{-in(z+iw)} e^{2i \pi m w} \quad \text{via $(m,n) \longmapsto (-m,-n)$}, \\
& = \sum_{(m, n)\in \mathbb{Z}^2} \overline{b_{m, n}^k} e^{-in\Re(z)} e^{-2i \pi m \Im(z)} \quad \text{because $(z, w) \in \Lambda_{ \Phi_1}$, thus $w = - \Im(z)$},\\
& = \overline{b_k(z, w)}.
\end{align*}
Since $v \in \mathfrak{S}( \mathbb{C})$ and $ \overline{b}_k \in \mathcal{C}^{ \infty}_k( \Lambda_{ \Phi_1})$, then $ \Op^w_{ \Phi_1,k}( \overline{b}_k) v \in \mathfrak{S}( \mathbb{C})$ and the complex Weyl quantization $ \Op^w_{ \Phi_1, k}(b_k)$ is well-defined on $ \mathfrak{S}'( \mathbb{C})$ by the following formula, for $u \in \mathfrak{S}'( \mathbb{C})$ and for $v \in \mathfrak{S}( \mathbb{C})$:
$$ \langle \Op^w_{ \Phi_1,k}(b_k) u, v \rangle_{ \mathfrak{S}', \mathfrak{S}} : = \langle u, \Op^w_{ \Phi_1, k}( \overline{b}_k) v \rangle_{ \mathfrak{S}', \mathfrak{S}}.$$
Afterwards, since $ \mathcal{H}_k \subset \mathfrak{S}'( \mathbb{C})$, then for $g \in \mathcal{H}_k$, $ \Op^w_{ \Phi_1,k}(b_k) g \in \Hol( \mathbb{C})$. Moreover, for $g \in \mathcal{H}_k$, using simple computations we prove that:
$$
\left\lbrace
\begin{split}
& \Op^w_{ \Phi_1,k}(b_k) g(z+ 2 \pi) = u^k \Op^w_{ \Phi_1, k}(b_k) g(z), \\
& \Op^w_{ \Phi_1,k}(b_k)  g(z+i) = v^k e^{-ikz+k/2}  \Op^w_{ \Phi_1, k}(b_k) g(z).
\end{split}
\right.$$
\end{proof}

Let's give a second definition of the complex Weyl quantization of the torus. This notion is analogous to the already existing one in the $ \mathbb{R}^2$-case (see for example \cite{HJ} or \cite{MR2952218}). We believe that it is the first time that such a contour integral is used in a context of the quantization of a compact phase space.

\begin{defi}[Second definition of the complex Weyl quantization of the torus] \label{defi_weyl_complex_tore_2}
Let $b_{ \hbar} \in \mathcal{C}^{ \infty}_{ \hbar}( \Lambda_{ \Phi_1})$ be a function such that, for all $(z, w) \in \Lambda_{ \Phi_1}$, we have:
$$ b_{\hbar}(z+2 \pi, w) = b_{\hbar}(z, w) = b_{\hbar}(z+i, w-1).$$
Define the complex Weyl quantization of the symbol $b_{ \hbar}$, denoted by $ \Op^w_{ \Phi_1}(b_{\hbar})$, by the following formula, for $u \in \mathfrak{S}( \mathbb{C})$:
$$ \Op_{ \Phi_1}^w(b_{ \hbar}) u(z) = \dfrac{1}{2 \pi \hbar} \int \! \! \! \int_{ \Gamma(z)} e^{(i/ \hbar)(z-w) \zeta} b_{ \hbar} \left( \dfrac{z+w}{2}, \zeta \right) u(w) dw d \zeta,$$
where the contour integral is the following:
$$ \Gamma(z) = \left\lbrace (w, \zeta) \in \mathbb{C}^2; \zeta =  \dfrac{2}{i} \dfrac{\partial \Phi_1}{\partial z} \left( \dfrac{z+w}{2} \right) = - \Im \left( \dfrac{z+w}{2} \right) \right\rbrace.$$
\end{defi}

This second definition expresses the fact that the complex Weyl quantization of $ \mathbb{R}^2$ (seen in Proposition \ref{prop_transfor_Bargmann_et_qauntif_weyl}) can be extended to a symbol defined on the torus. Similarly to Proposition \ref{prop_Op_Weyl_k_agit_sur_Hk}, we have the following property.

\begin{prop} \label{prop_action_quantif_weyl_complexe_sur_SS'(C)}
Let $b_{ \hbar} \in \mathcal{C}^{ \infty}_{ \hbar}( \Lambda_{ \Phi_1})$ be a function such that, for all $(z, w) \in \Lambda_{ \Phi_1}$, we have: 
$$b_{ \hbar}(z + 2 \pi, w) = b_{ \hbar}(z, w) = b_{ \hbar}(z+i, w-1).$$
Then, $ \Op^w_{ \Phi_1}(b_{ \hbar})$ can be extended into an operator which sends $\mathcal{H}_k$ on itself.
\end{prop}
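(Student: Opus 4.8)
The plan is to reduce the statement to Proposition \ref{prop_Op_Weyl_k_agit_sur_Hk}, which already establishes the same conclusion for the series form $\Op^w_{\Phi_1,k}(b_k)$ of Definition \ref{defi_weyl_complex_tore_1}; the only thing missing is to identify the contour-integral operator of Definition \ref{defi_weyl_complex_tore_2} with $\Op^w_{\Phi_1,k}(b_k)$ on a space on which both are unambiguously defined, and then to transport the extension. First I would write $b_{\hbar} = a_{\hbar} \circ \kappa_{\phi_1}^{-1}$, where $a_{\hbar} \in \mathcal{C}^{\infty}_{\hbar}(\mathbb{R}^2)$ is the $2\pi\mathbb{Z}\times\mathbb{Z}$-periodic pull-back of $b_{\hbar}$ (the periodicity $a_{\hbar}(x+2\pi,y) = a_{\hbar}(x,y) = a_{\hbar}(x,y+1)$ follows from $b_{\hbar}(z+2\pi,w) = b_{\hbar}(z,w) = b_{\hbar}(z+i,w-1)$ and the affine form of $\kappa_{\phi_1}$). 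By Remark \ref{rema_symbole_perio_dans_S(R2)} we have $a_{\hbar} \in S(\mathbb{R}^2)$, so Proposition \ref{prop_transfor_Bargmann_et_qauntif_weyl} applies and identifies the operator of Definition \ref{defi_weyl_complex_tore_2} with $T_{\phi_1}\,\Op^w(a_{\hbar})\,T_{\phi_1}^*$ on $H_{\hbar}(\mathbb{C},\Phi_1)$, and in particular on $\mathfrak{S}(\mathbb{C})$ by Remark \ref{rema_S(C)_inclus_dans_H(C,Phi1)}.

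Next I would check that $\Op^w_{\Phi_1}(b_{\hbar}) = \Op^w_{\Phi_1,k}(b_k)$ on $\mathfrak{S}(\mathbb{C})$. For this, record that $\mathfrak{S}(\mathbb{C}) = T_{\phi_1}\mathcal{S}(\mathbb{R})$: the inclusion $\subseteq$ comes from Proposition \ref{prop_Tphi_envoie_S(R)_sur_SS(C)}, and the reverse one from Proposition \ref{prop_T_(phi1)*_envoie_S(C)_sur_S(R)} together with the fact that $T_{\phi_1}T_{\phi_1}^* = \Pi_{\Phi_1,\hbar}$ acts as the identity on the holomorphic space $\mathfrak{S}(\mathbb{C})$ (Proposition \ref{prop_ecriture_integrale_Pi_Phi1}). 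Then, for $f = T_{\phi_1}\phi$ with $\phi \in \mathcal{S}(\mathbb{R})$, on one side $\Op^w_{\Phi_1}(b_{\hbar})f = T_{\phi_1}\Op^w(a_{\hbar})T_{\phi_1}^* T_{\phi_1}\phi = T_{\phi_1}\Op^w(a_{\hbar})\phi$ since $T_{\phi_1}^* T_{\phi_1} = \id$ on $L^2(\mathbb{R})$, while Proposition \ref{prop_quantif_complexe_somme_definie_sur_S(C)} gives $\Op^w_{\Phi_1,k}(b_k)f = T_{\phi_1}\Op^w_k(a_k)\phi$. Hence the two operators agree on $\mathfrak{S}(\mathbb{C})$ once one knows $\Op^w(a_{\hbar})\phi = \Op^w_k(a_k)\phi$ for every $\phi \in \mathcal{S}(\mathbb{R})$; this is exactly the computation already carried out in the proof of Proposition \ref{prop_quantif_complexe_somme_definie_sur_S(C)}, run directly on $\mathbb{R}$ rather than after composition with $T_{\phi_1}$ — i.e. that the Weyl quantization of a periodic symbol is its expansion into Weyl--Heisenberg translations, the $\mathcal{S}(\mathbb{R})$-counterpart of the remark following Definition \ref{defi_quantif_weyl_tore_somme}.

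Having $\Op^w_{\Phi_1}(b_{\hbar}) = \Op^w_{\Phi_1,k}(b_k)$ on $\mathfrak{S}(\mathbb{C})$, I would take as the desired extension the operator $\mathcal{H}_k \to \mathcal{H}_k$ furnished by Proposition \ref{prop_Op_Weyl_k_agit_sur_Hk} (built through the $\mathfrak{S}'(\mathbb{C})$-pairing, using $\mathcal{H}_k \subset \mathfrak{S}'(\mathbb{C})$, and shown there to preserve the Floquet relations defining $\mathcal{H}_k$). Equivalently, and more concretely, given $u \in \mathcal{H}_k$ one can set $\psi := T_{\phi_1}^* u \in \mathcal{L}_k$ (Proposition \ref{prop_formule_C}), use $T_{\phi_1}T_{\phi_1}^* = \id$ on $\mathcal{H}_k$ (Proposition \ref{prop_TphiTphi*=id_sur_Hk_et_Tphi*Tphi=id_sur_Lk}) to write $u = T_{\phi_1}\psi$, and define $\Op^w_{\Phi_1}(b_{\hbar})u := T_{\phi_1}(\Op^w(a_{\hbar})\psi)$, which lies in $\mathcal{H}_k$ because $\Op^w(a_{\hbar})$ maps $\mathcal{L}_k$ into $\mathcal{L}_k$ and $T_{\phi_1}$ maps $\mathcal{L}_k$ into $\mathcal{H}_k$ (Proposition \ref{prop_formule_C}); one then checks this coincides with the contour integral on $\mathfrak{S}(\mathbb{C})$. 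The main obstacle is precisely this matter of interpretation: elements of $\mathcal{H}_k$ only lie in $\mathfrak{S}^0(\mathbb{C})$ and are quasi-periodic in $\Re z$, so the contour integral over $\Gamma(z)$ is not absolutely convergent on $\mathcal{H}_k$ itself and must be read through one of the two mechanisms above; verifying that they yield the same operator and that it respects the quasi-periodicity conditions cutting out $\mathcal{H}_k$ is the delicate part, but it runs entirely parallel to the proof of Proposition \ref{prop_Op_Weyl_k_agit_sur_Hk}.
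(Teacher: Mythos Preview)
Your approach is correct but takes a different route from the paper. The paper does not reduce to Proposition \ref{prop_Op_Weyl_k_agit_sur_Hk}; instead it works directly with the contour-integral operator of Definition \ref{defi_weyl_complex_tore_2}: it first observes (via $a_{\hbar}=b_{\hbar}\circ\kappa_{\phi_1}$ and Proposition \ref{prop_transfor_Bargmann_et_qauntif_weyl}) that $\Op^w_{\Phi_1}(b_{\hbar})$ maps $\mathfrak{S}(\mathbb{C})$ to itself, then carries out an explicit computation on the contour integral to show $\langle \Op^w_{\Phi_1}(b_{\hbar})u,v\rangle=\langle u,\Op^w_{\Phi_1}(\overline{b}_{\hbar})v\rangle$ for $u,v\in\mathfrak{S}(\mathbb{C})$, extends to $\mathfrak{S}'(\mathbb{C})$ by this duality, and finally checks the quasi-periodicity relations defining $\mathcal{H}_k$ directly. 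The identification $\Op^w_{\Phi_1}(b_{\hbar})=\Op^w_{\Phi_1,k}(b_k)$ on $\mathcal{H}_k$ is established only afterwards, as a separate proposition that invokes both Proposition \ref{prop_Op_Weyl_k_agit_sur_Hk} and the present one.

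Your argument is more economical in that it recycles the adjoint computation and periodicity check already done in Proposition \ref{prop_Op_Weyl_k_agit_sur_Hk}, at the cost of having to establish the identification on $\mathfrak{S}(\mathbb{C})$ first; in effect you merge Proposition \ref{prop_action_quantif_weyl_complexe_sur_SS'(C)} and the proposition following it into a single step. The one place where you should be a little more careful is the claim $\Op^w(a_{\hbar})\phi=\Op^w_k(a_k)\phi$ for $\phi\in\mathcal{S}(\mathbb{R})$: the remark after Definition \ref{defi_quantif_weyl_tore_somme} only asserts this on $\mathcal{L}_k$, and the proof of Proposition \ref{prop_quantif_complexe_somme_definie_sur_S(C)} composes with $T_{\phi_1}$ throughout, so you do need to spell out (or cite) the short Fourier-series computation showing that the integral and series Weyl quantizations of a periodic symbol agree on $\mathcal{S}(\mathbb{R})$. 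Once that is in place, your reduction goes through.
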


\begin{proof}
Let $a_{ \hbar}:= b_{ \hbar} \circ \kappa_{ \phi_1}$, then $a_{ \hbar} \in  S( \mathbb{R}^2)$ and $ \Op^w(a_{ \hbar}): \mathcal{S}( \mathbb{R}) \longrightarrow \mathcal{S}( \mathbb{R})$.
According to Proposition \ref{prop_Tphi_envoie_S(R)_sur_SS(C)}, $ T_{ \phi_1} :\mathcal{S}( \mathbb{R}) \longrightarrow \mathfrak{S}(  \mathbb{C})$ and according to Proposition \ref{prop_transfor_Bargmann_et_qauntif_weyl}, we have:
$$ \Op^w_{ \Phi_1}(b_{ \hbar}) = T_{ \phi_1} \Op^w(a_{ \hbar}) T_{ \phi_1}^* : \mathfrak{S}( \mathbb{C}) \longrightarrow \mathfrak{S}( \mathbb{C}) .$$
Afterwards, let $u, v \in \mathfrak{S}( \mathbb{C})$, then we have:
\begin{align*}
& \langle \Op^w_{ \Phi_1}(b_{ \hbar}) u, v \rangle_{ \mathfrak{S}, \mathfrak{S}} \\
& = \int_{ \mathbb{C}} \overline{(\Op^w_{ \Phi_1} (b_{ \hbar}) u)(z)} v(z) e^{-2 \Phi_1(z)/ \hbar} L(dz), \\
& = \int_{ \mathbb{C}} \dfrac{1}{2 \pi \hbar} \int \! \! \! \int_{ \Gamma(z)} e^{(-i/ \hbar) \overline{(z-w) \zeta}} \overline{ b_{ \hbar} \left( \dfrac{z+w}{2}, \zeta \right)} \overline{ u(w)} dw d \zeta v(z) e^{-2 \Phi_1(z)/ \hbar} L(dz), \\
& = \int_{ \mathbb{C}} \dfrac{C}{2 \pi \hbar} \int_{ \mathbb{C}} e^{(-i/ \hbar) \overline{(z-w)(- \Im(z+w/2))}} \overline{ b_{ \hbar} \left( \dfrac{z+w}{2}, - \Im \left( \dfrac{z+w}{2} \right) \right)} \overline{ u(w)} L(dw) v(z) e^{-2 \Phi_1(z)/ \hbar} L(dz),
\end{align*}
where we used the definition of the contour integral $ \Gamma(z)$ and where $C>0$ is a constant. Then, for $z \in \mathbb{C} \simeq \Lambda_{ \Phi_1}$, we have:
$$
b_{ \hbar}(z, - \Im(z)) = \sum_{(m, n) \in \mathbb{Z}^2} b_{m,n}^{ \hbar} e^{in \Re(z)} e^{2i \pi m \Im(z)}.$$
Thus, for $z \in \Lambda_{ \Phi_1}$, we obtain:
$$
\overline{b_{ \hbar}(z, - \Im(z))} = \sum_{(m, n) \in \mathbb{Z}^2} \overline{b_{m,n}^{ \hbar}} e^{-in \Re(z)} e^{-2i \pi m \Im(z)} = \overline{b}_{ \hbar}(z, - \Im(z)) .$$
Therefore, we can rewrite the integral as follows, for $u, v \in \mathfrak{S}( \mathbb{C})$:
\begin{align*}
& \langle \Op^w_{ \Phi_1}(b_{ \hbar}) u, v \rangle_{ \mathfrak{S}, \mathfrak{S}} \\
& = \dfrac{C}{2 \pi \hbar} \int_{ \mathbb{C}} \int_{ \mathbb{C}} e^{(i/ \hbar) ( \overline{z}- \overline{w})\Im(z+w/2)} \overline{b}_{ \hbar} \left( \dfrac{z+w}{2}, - \Im \left( \dfrac{z+w}{2} \right) \right) \overline{ u(w)} v(z) e^{-2 \Phi_1(z)/ \hbar}  L(dw) L(dz).
\end{align*}
With a short computation, we prove the following equality:
\begin{align*}
\dfrac{i}{\hbar} \left( \overline{z} - \overline{w} \right) \Im \left( \dfrac{z+w}{2} \right) - \dfrac{2}{\hbar} \Phi_1(z)
& = \dfrac{i}{h}(z-w) \Im \left( \dfrac{z+w}{2} \right) - \dfrac{2}{\hbar} \Phi_1(w).
\end{align*}
Consequently, for all $u, v \in \mathfrak{S}( \mathbb{C})$, we have:
\begin{align*}
& \langle \Op^w_{ \Phi_1}(b_{ \hbar}) u, v \rangle_{ \mathfrak{S}, \mathfrak{S}} \\
& = \int_{ \mathbb{C}} \overline{u(w)} \dfrac{C}{2 \pi \hbar} \int_{ \mathbb{C}} e^{(i/ \hbar) (w-z)(-\Im(z+w/2))} \overline{b}_{ \hbar} \left( \dfrac{z+w}{2}, - \Im \left( \dfrac{z+w}{2} \right) \right) v(z) L(dz) e^{-2 \Phi_1(w)/ \hbar}  L(dw) , \\
& = \int_{ \mathbb{C}} \overline{u(w)} \dfrac{1}{2 \pi \hbar} \int \! \! \! \int_{ \Gamma(w)} e^{(i/ \hbar) (w-z)\zeta} \overline{b}_{ \hbar} \left( \dfrac{z+w}{2}, \zeta \right) v(z) L(dz) e^{-2 \Phi_1(w)/ \hbar}  L(dw) , \\
& = \int_{ \mathbb{C}} \overline{u(w)} (\Op^w_{ \Phi_1}( \overline{b}_{ \hbar})v)(w) e^{-2 \Phi_1(w)/ \hbar} L(dw), \\
& = \langle u, \Op^w_{ \Phi_1}( \overline{b}_{ \hbar}) v \rangle_{ \mathfrak{S}, \mathfrak{S}},
\end{align*}
where $ \Gamma(w) = \left\lbrace (z, \zeta) \in \mathbb{C}^2; \zeta = - \Im \left( \dfrac{z+w}{2} \right) \right\rbrace$. Since, for $v \in \mathfrak{S}( \mathbb{C})$, $ \Op^w_{ \Phi_1}( \overline{b}_{ \hbar}) v \in \mathfrak{S}( \mathbb{C})$, then the operator $ \Op^w_{ \Phi_1}(b_{ \hbar})$ is well-defined on $\mathfrak{S}'( \mathbb{C})$ by the following equality, for $u \in \mathfrak{S}'( \mathbb{C})$ and $v \in \mathfrak{S}( \mathbb{C})$:
$$ \langle \Op^w_{ \Phi_1}(b_{ \hbar}) u, v \rangle_{ \mathfrak{S}', \mathfrak{S}} = \langle u, \Op^w_{ \Phi_1}( \overline{b}_{ \hbar}) v \rangle_{ \mathfrak{S}', \mathfrak{S}}.$$
As a result, the operator $ \Op^w_{ \Phi_1}(b_{\hbar})$ is well-defined on $ \mathcal{H}_k$ because it is a subspace of $ \mathfrak{S}'( \mathbb{C})$. Afterwards, since for $ g \in \mathcal{H}_k$, $ \Op^w_{ \Phi_1}(b_{ \hbar}) g \in \mathfrak{S}'( \mathbb{C})$, then we have $ \Op^w_{ \Phi_1}(b_{ \hbar}) g \in \Hol( \mathbb{C})$. Besides using simple computations, for $g \in \mathcal{H}_k$, we prove that:
$$
\left\lbrace
\begin{split}
& \Op^w_{ \Phi_1}(b_{\hbar}) g(z+ 2 \pi)  = u^k \Op^w_{ \Phi_1}(b_{ \hbar}) g(z), \\
& \Op^w_{ \Phi_1}(b_{ \hbar}) g(z+i) = v^k e^{-ikz+k/2}  \Op^w_{ \Phi_1}(b_{ \hbar}) g(z).
\end{split}
\right.$$
\end{proof}

To conclude this paragraph, we link Definition \ref{defi_weyl_complex_tore_1} and Definition \ref{defi_weyl_complex_tore_2}.

\begin{prop} Let $b_{ \hbar} =b_k \in \mathcal{C}^{ \infty}_{ \hbar}( \Lambda_{ \Phi_1})$ be a function such that, for all $(z, w) \in \Lambda_{ \Phi_1}$, we have:
$$b_{ \hbar}(z+ 2 \pi, w) = b_{ \hbar}(z, w) = b_{ \hbar}(z+i, w-1).$$ 
Then:
$$ \Op^w_{ \Phi_1}(b_{ \hbar}) = \Op^w_{\Phi_1,k}(b_k) \quad \text{on $ \mathcal{H}_k$.} $$
\end{prop}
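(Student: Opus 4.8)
The plan is to transport both operators, via the Bargmann transform $T_{\phi_1}$, to the classical Weyl quantization on $\mathbb{R}$ and then to observe that on $\mathcal{L}_k$ the two presentations of that quantization (Definitions~\ref{defi_quantif_weyl_tore_integrale} and~\ref{defi_quantif_weyl_tore_somme}) coincide. First I would set $a_{\hbar}=a_k:=b_{\hbar}\circ\kappa_{\phi_1}=b_k\circ\kappa_{\phi_1}$, where $\kappa_{\phi_1}(x,y)=(x-iy,y)$: the periodicity of $b_{\hbar}$ translates into $a_{\hbar}(x+2\pi,y)=a_{\hbar}(x,y)=a_{\hbar}(x,y+1)$, so by Remark~\ref{rema_symbole_perio_dans_S(R2)} this is a periodic symbol in $S(\mathbb{R}^2)$ admitting an asymptotic expansion in powers of $\hbar=1/k$, and by construction $b_{\hbar}=a_{\hbar}\circ\kappa_{\phi_1}^{-1}$ and $b_k=a_k\circ\kappa_{\phi_1}^{-1}$.

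Next I would fix $g\in\mathcal{H}_k$ and set $\psi:=T_{\phi_1}^{*}g$, which lies in $\mathcal{L}_k$ by Proposition~\ref{prop_formule_C} and satisfies $T_{\phi_1}\psi=T_{\phi_1}T_{\phi_1}^{*}g=g$ by Proposition~\ref{prop_TphiTphi*=id_sur_Hk_et_Tphi*Tphi=id_sur_Lk}. On one side, Proposition~\ref{prop_transfor_Bargmann_et_qauntif_weyl} gives $\Op^w_{\Phi_1}(b_{\hbar})=T_{\phi_1}\,\Op^w(a_{\hbar})\,T_{\phi_1}^{*}$, hence $\Op^w_{\Phi_1}(b_{\hbar})\,g=T_{\phi_1}\,\Op^w(a_{\hbar})\,\psi$. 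On the other side, the Egorov-type intertwining $T_{\phi_1}\,\Op^w_k(a_k)=\Op^w_{\Phi_1,k}(a_k\circ\kappa_{\phi_1}^{-1})\,T_{\phi_1}$ of Proposition~\ref{prop_quantif_complexe_somme_definie_sur_S(C)}, composed on the right with $T_{\phi_1}^{*}$ and using $T_{\phi_1}T_{\phi_1}^{*}=\id$ on $\mathcal{H}_k$, gives $\Op^w_{\Phi_1,k}(b_k)=T_{\phi_1}\,\Op^w_k(a_k)\,T_{\phi_1}^{*}$ on $\mathcal{H}_k$, hence $\Op^w_{\Phi_1,k}(b_k)\,g=T_{\phi_1}\,\Op^w_k(a_k)\,\psi$. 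Since $\psi\in\mathcal{L}_k$, the Remark following Proposition~\ref{prop_qauntif_weyl_Lk_avec_somme} yields $\Op^w(a_{\hbar})\,\psi=\Op^w_k(a_k)\,\psi$, and therefore $\Op^w_{\Phi_1}(b_{\hbar})\,g=\Op^w_{\Phi_1,k}(b_k)\,g$; as $g\in\mathcal{H}_k$ was arbitrary, the two operators agree on $\mathcal{H}_k$. In short, both sides are nothing but $T_{\phi_1}$ conjugating the (single, unambiguous) classical Weyl quantization of $a_{\hbar}$ on $\mathcal{L}_k$.

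The only delicate point is that Propositions~\ref{prop_transfor_Bargmann_et_qauntif_weyl} and~\ref{prop_quantif_complexe_somme_definie_sur_S(C)} are stated on the spaces $H_{\hbar}(\mathbb{C},\Phi_1)$ and $\mathcal{S}(\mathbb{R})$, whereas above they are applied to $\psi\in\mathcal{L}_k\subset\mathcal{S}'(\mathbb{R})$ and to $g\in\mathcal{H}_k\subset\mathfrak{S}'(\mathbb{C})$. I would justify this by transposition: all the operators in play ($T_{\phi_1}$, $T_{\phi_1}^{*}$, $\Op^w(a_{\hbar})$, $\Op^w_k(a_k)$, and both complex Weyl quantizations) have already been extended to the distribution spaces as transposes of continuous maps between the test-function spaces — via the common sesquilinear recipe $\langle\Op(c)u,v\rangle=\langle u,\Op(\overline c)v\rangle$ used in Propositions~\ref{prop_Tphi_envoie_S'(R)_sur_SS'(C)}, \ref{prop_Op_Weyl_k_agit_sur_Hk} and~\ref{prop_action_quantif_weyl_complexe_sur_SS'(C)} — so the two operator identities above, valid on a weak-$*$ dense subspace, persist on $\mathcal{L}_k$ and $\mathcal{H}_k$. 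Alternatively, and bypassing any density argument, one tests the identity on the finite bases $(\epsilon_l)_{0\le l<k}$ of $\mathcal{L}_k$ and $(e_l)_{0\le l<k}$ of $\mathcal{H}_k$ from Equations~\eqref{eq_base_epsilon_l}--\eqref{eq_base_el}: with $T_{\phi_1}\epsilon_l=c_k^l e_l$, $T_{\phi_1}^{*}e_l=\tilde c_k^l\epsilon_l$ and $c_k^l\tilde c_k^l=1$ (from the proofs of Propositions~\ref{prop_formule_C} and~\ref{prop_TphiTphi*=id_sur_Hk_et_Tphi*Tphi=id_sur_Lk}), the claim $\Op^w_{\Phi_1}(b_{\hbar})e_l=\Op^w_{\Phi_1,k}(b_k)e_l$ collapses to $\Op^w(a_{\hbar})(\tilde c_k^l\epsilon_l)=\Op^w_k(a_k)(\tilde c_k^l\epsilon_l)$, which holds since $\tilde c_k^l\epsilon_l\in\mathcal{L}_k$. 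Thus I expect the genuine work to be purely bookkeeping about which space each identity lives on; once that is pinned down, the algebra is immediate.
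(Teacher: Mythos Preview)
Your argument is correct but follows a different route from the paper. The paper proves the identity \emph{directly on the complex side}: for $u\in\mathfrak{S}(\mathbb{C})$ it expands $b_\hbar$ as the Fourier series of Equation~\eqref{equation_ecriture_somme_symbole_sur_LambdaPhi1}, plugs this into the contour integral defining $\Op^w_{\Phi_1}(b_\hbar)u(z)$, and then performs two changes of variables on $\Gamma(z)$ (a shift $w\mapsto w+2\pi m/k$ to absorb the factor $e^{-2i\pi m\zeta}$, and a shift $(w,\zeta)\mapsto(w+in/k,\zeta-n/2k)$ to absorb $e^{in((z+w)/2+i\zeta)}$) to recognise the series of Definition~\ref{defi_weyl_complex_tore_1}. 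This yields $\Op^w_{\Phi_1}(b_\hbar)=\Op^w_{\Phi_1,k}(b_k)$ on $\mathfrak{S}(\mathbb{C})$; duality then gives it on $\mathfrak{S}'(\mathbb{C})$ and by restriction on $\mathcal{H}_k$.

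By contrast you transport everything back through $T_{\phi_1}$ to $\mathcal{L}_k$ and invoke the known equivalence $\Op^w(a_\hbar)=\Op^w_k(a_k)$ on $\mathcal{L}_k$ (the Remark after Proposition~\ref{prop_qauntif_weyl_Lk_avec_somme}). This is more structural and arguably cleaner, since it explains the identity as the Bargmann image of an already-established fact rather than as a computation. The price is the bookkeeping you flag: the intertwining relations of Propositions~\ref{prop_transfor_Bargmann_et_qauntif_weyl} and~\ref{prop_quantif_complexe_somme_definie_sur_S(C)} are stated on $H_\hbar(\mathbb{C},\Phi_1)$ and $\mathcal{S}(\mathbb{R})$, and you need them on $\mathcal{H}_k$ and $\mathcal{L}_k$. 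Your duality justification is fine (and is in fact the same mechanism the paper uses at the end of its own proof), but note that your ``alternative'' basis argument does not really bypass it: applying the intertwinings to $\epsilon_l$ or $e_l$ still presupposes their validity on $\mathcal{S}'(\mathbb{R})$ or $\mathfrak{S}'(\mathbb{C})$, so the two justifications are not independent. Either way, the argument goes through.
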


\begin{proof}
According to Equation \eqref{equation_ecriture_somme_symbole_sur_LambdaPhi1}, for $(z, w) \in \Lambda_{ \Phi_1}$, we can rewrite $b_{ \hbar}$ as follows:
$$ b_{ \hbar}(z, w) = \sum_{(m, n) \in \mathbb{Z}^2} b_{m,n}^{ \hbar} e^{in(z+iw)} e^{-2 i \pi m w}.$$
Consequently, we obtain, for $u \in \mathfrak{S}( \mathbb{C})$:
\begin{align*}
& \Op_{ \Phi_1}^w(b_{ \hbar}) u(z) \\
& =\dfrac{1}{2 \pi \hbar} \int \! \! \!\int_{ \Gamma(z)} e^{(i/ \hbar)(z-w) \zeta} \sum_{(m,n) \in \mathbb{Z}^2} b_{m,n}^{ \hbar} e^{in((z+w)/2+i \zeta)} e^{-2i \pi m \zeta} u(w) dw d \zeta, \\
& = \dfrac{1}{2 \pi \hbar} \int \! \! \!\int_{ \Gamma(z)} \sum_{(m,n) \in \mathbb{Z}^2} e^{(i/ \hbar)(z-w) \zeta}  b_{m,n}^{ \hbar} e^{in((z+w)/2 +i \zeta)}  e^{-i\pi m n/k} u \left( w - \dfrac{2 \pi m}{k} \right) dw d \zeta, \\
& =\dfrac{1}{2 \pi \hbar} \int \! \! \!\int_{ \Gamma(z)} e^{(i/ \hbar)(z-w) \zeta} \sum_{(m,n) \in \mathbb{Z}^2}   b_{m,n}^{ \hbar} e^{-i\pi m n/k} e^{-n^2/2k} u \left( w + \dfrac{in}{k} - \dfrac{2 \pi m}{k} \right) dw d \zeta, \\
& = \dfrac{1}{2 \pi \hbar} \int \! \! \!\int_{ \Gamma(z)} e^{(i/ \hbar)(z-w) \zeta} (\Op^w_{\Phi_1,k} (b_k) u)(w) dw d \zeta, \\
& = \Op^w_{\Phi_1,k} (b_k) u(z),
\end{align*}
where we used the change of variables:
$ \Gamma(z) \ni (w, \zeta) \longmapsto  \left( w+ \dfrac{in}{k}, \zeta - \dfrac{n}{2k} \right) \in \Gamma(z)$. Therefore, for $u \in \mathfrak{S}'( \mathbb{C})$ and $v \in \mathfrak{S}( \mathbb{C})$, we have:
\begin{align*}
\langle \Op^w_{ \Phi_1}(b_{ \hbar}) u, v \rangle_{ \mathfrak{S}', \mathfrak{S}} & = \langle u, \Op^w_{ \Phi_1}( \overline{b}_{ \hbar}) v \rangle_{ \mathfrak{S}', \mathfrak{S}}, \\
& = \langle u, \Op^w_{ \Phi_1,k}( \overline{b}_k) v \rangle_{ \mathfrak{S}', \mathfrak{S}}, \\
& = \langle \Op^w_{ \Phi_1,k}(b_k) u, v \rangle_{ \mathfrak{S}', \mathfrak{S}}.
\end{align*}
In others words, we have:
$$ \Op^w_{ \Phi_1}(b_{ \hbar}) = \Op^w_{ \Phi_1,k}(b_k) \quad \text{on $ \mathfrak{S}'( \mathbb{C})$}.$$
Finally, since $ \mathcal{H}_k \subset \mathfrak{S}'( \mathbb{C})$, then by restriction and according to Propositions \ref{prop_Op_Weyl_k_agit_sur_Hk} and \ref{prop_action_quantif_weyl_complexe_sur_SS'(C)}, we obtain the result.
\end{proof}

\subsection{Connections between the quantizations of the torus $ \mathbb{T}^2$} \label{subsection_links_between_quantizations}
In this paragraph, we relate the different notions of quantization of the torus. To do so, we follow these steps:
\begin{enumerate}
\item[1.] we recall the connection between a Berezin-Toeplitz operator and a complex pseudo-differential operator of the complex plane;
\item[2.] within the Berezin-Toeplitz setting, we relate the quantization of the torus to the quantization of the complex plane;
\item[3.] we establish a correspondence between the Berezin-Toeplitz and the complex Weyl quantizations of the torus.
\end{enumerate}

\subsubsection{Berezin-Toeplitz and complex Weyl quantizations of the complex plane}

First, we recall the definition of the Berezin-Toeplitz quantization of a symbol on the complex plane and then the definition of the complex Weyl quantization of a symbol on $ \Lambda_{ \Phi_1}$ (see for example \cite{MR2952218}). 

\begin{defi}[Berezin-Toeplitz quantization of the complex plane] \label{defi_quantif_BT_du_plan_complexe}
Let $f_k \in S( \mathbb{C})$ be a function admitting an asymptotic expansion in powers of $1/k$. Define the Berezin-Toeplitz quantization of $f_k$ by the sequence of operators $T_{f_k} := (T_k)_{ k \geq 1}$, where for $k \geq 1$, $T_k$ is defined by:
$$ T_k = \Pi_{ \Phi_1,k} M_{f_k} \Pi_{ \Phi_1,k} ,$$
where $M_{f_k}: L^2_k( \mathbb{C}, \Phi_1) \longrightarrow  L^2_k( \mathbb{C}, \Phi_1)$ is the multiplication operator by the function $f_k$ and where we recall that $ \Pi_{ \Phi_1, k}$ is the orthogonal projection of the space $L^2_k( \mathbb{C}, \Phi_1)$ on $H_k( \mathbb{C}, \Phi_1)$ defined in Proposition \ref{prop_ecriture_integrale_Pi_Phi1}. \\
We call $f_k$ the symbol of the Berezin-Toeplitz operator $T_{f_k}$. 
\end{defi}

\begin{defi}[Complex Weyl quantization of the complex plane]
Let $b_{ \hbar} \in S( \Lambda_{ \Phi_1})$ be a function admitting an asymptotic expansion in powers of $ \hbar$. Define the complex Weyl quantization of $b_{ \hbar}$, denoted by $ \Op^w_{ \Phi_1}(b_{ \hbar})$, by the following formula, for $u \in H_{ \hbar}( \mathbb{C}, \Phi_1)$:
$$ \Op_{ \Phi_1}^w(b_{ \hbar}) u(z) = \dfrac{1}{2 \pi \hbar} \int \! \! \! \int_{ \Gamma(z)} e^{(i/ \hbar)(z-w) \zeta} b_{ \hbar} \left( \dfrac{z+w}{2}, \zeta \right) u(w) dw d \zeta,$$
where the contour integral is the following:
$$ \Gamma(z) = \left\lbrace (w, \zeta) \in \mathbb{C}^2; \zeta =  \dfrac{2}{i} \dfrac{\partial \Phi_1}{\partial z} \left( \dfrac{z+w}{2} \right) = - \Im \left( \dfrac{z+w}{2} \right) \right\rbrace.$$
\end{defi}

Recall now the result relating these two quantizations (see for example \cite[Chapter 13]{MR2952218}).

\begin{prop} \label{prop_Toeplitz=pseudo_H(Phi1)}  $ $
\begin{enumerate}
\item Let $f_k \in S( \mathbb{C})$ be a function admitting an asymptotic expansion in powers of $1/k$. Let $T_{f_k} = (T_k)_{k \geq 1}$ be the Berezin-Toeplitz operator of symbol $f_k$. Then, for $k \geq 1$, we have: 
$$ T_k = \Op^w_{ \Phi_1}( b_{ \hbar}) \quad \text{on $H_k( \mathbb{C}, \Phi_1)$},$$
where $b_{ \hbar} \in S( \Lambda_{ \Phi_1})$ is a function admitting an asymptotic expansion in powers of $ \hbar$ given by the following formula, for all $z \in \Lambda_{ \Phi_1} \simeq \mathbb{C}$:
$$ b_{ \hbar}(z) = \exp \left( \dfrac{1}{k} \partial_z \partial_{ \overline{z}} \right) ( f_k(z)).$$
\item Let $b_{ \hbar} \in S( \Lambda_{ \Phi_1})$ be a function admitting an asymptotic expansion in powers of $ \hbar$. Then, there exists $f_k \in S( \mathbb{C})$ a function admitting an asymptotic expansion in powers of $1/k$ such that for $k \geq 1$:
$$ \Op^w_{ \Phi_1}(b_{ \hbar}) = T_k + \mathcal{O}(k^{- \infty}) \quad \text{on $H_{ \hbar}( \mathbb{C}, \Phi_1)$} ,$$
where $(T_k)_{k \geq 1} = T_{f_k}$ is the Berezin-Toeplitz operator of symbol $f_k$ and where, for all $N \in \mathbb{N}$ and for $z \in \mathbb{C}$, $f_k$ is given by:
$$ f_k(z) = \sum_{j=0}^N \dfrac{\hbar^j}{j!} \left( D_z D_{ \overline{z}} \right)^j (b_{ \hbar}(z)) + \mathcal{O}(\hbar^{N+1}) .$$
\end{enumerate}
\end{prop}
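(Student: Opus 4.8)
The plan is to transport both quantizations to the real line by the Bargmann transform $ T_{ \phi_1}$, where the statement reduces to the classical correspondence between the anti-Wick (Toeplitz) and the Weyl calculi on $ \mathbb{R}^2$, and then to read the result back on $ H_{ \hbar}( \mathbb{C}, \Phi_1)$.

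For the first assertion, I would fix $ u \in H_k( \mathbb{C}, \Phi_1)$, so that $ \Pi_{ \Phi_1, k} u = u$ and hence $ T_k u = \Pi_{ \Phi_1, k}( f_k u)$. Using $ \Pi_{ \Phi_1, k} = T_{ \phi_1} T_{ \phi_1}^*$ from Proposition \ref{prop_ecriture_integrale_Pi_Phi1}, together with $ T_{ \phi_1}^* T_{ \phi_1} = \id$ on $ L^2( \mathbb{R})$ and $ \Pi_{ \Phi_1, k} T_{ \phi_1} = T_{ \phi_1}$ (the Bargmann transform being unitary onto $ H_k( \mathbb{C}, \Phi_1)$), conjugation yields $ T_{ \phi_1}^* T_k T_{ \phi_1} = T_{ \phi_1}^* M_{f_k} T_{ \phi_1}$ on $ L^2( \mathbb{R})$, the anti-Wick operator with contravariant symbol $ f_k \circ \kappa_{ \phi_1}$, where $ \kappa_{ \phi_1}(x, \xi) = (x - i \xi, \xi)$. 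The computational heart of the proof is the classical identity: writing the kernel of $ T_{ \phi_1}^* M_{f_k} T_{ \phi_1}$ as a Gaussian integral over $ \mathbb{C}$ and comparing it with the Weyl kernel shows that it equals $ \Op^w( \tilde{a}_{ \hbar})$ with $ \tilde{a}_{ \hbar} = e^{( \hbar/4)( \partial_x^2 + \partial_{ \xi}^2)}( f_k \circ \kappa_{ \phi_1})$, the heat smoothing of the symbol. Since $ \kappa_{ \phi_1}$ identifies $ (x, \xi)$ with $ z = x - i \xi$ and $ \partial_z \partial_{ \overline{z}} = \frac{1}{4}( \partial_x^2 + \partial_{ \xi}^2)$ under that identification, while $ \hbar = 1/k$, one gets $ \tilde{a}_{ \hbar} \circ \kappa_{ \phi_1}^{-1} = e^{(1/k) \partial_z \partial_{ \overline{z}}} f_k = b_{ \hbar}$. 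Proposition \ref{prop_transfor_Bargmann_et_qauntif_weyl} then identifies $ T_{ \phi_1}( T_{ \phi_1}^* M_{f_k} T_{ \phi_1}) T_{ \phi_1}^*$ with $ \Op^w_{ \Phi_1}(b_{ \hbar})$, and since this operator coincides with $ \Pi_{ \Phi_1, k} M_{f_k} = T_k$ on $ H_k( \mathbb{C}, \Phi_1)$, the first assertion follows; the ODE interpretation is just that $ e^{(1/k) \partial_z \partial_{ \overline{z}}}$ is the time-one heat flow, and the asymptotic expansion of $ b_{ \hbar}$ comes from the Laplace expansion of the Gaussian convolution.

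For the second assertion, I would invert the heat flow formally. Each $ ( D_z D_{ \overline{z}})^j( b_{ \hbar})$ lies in $ S( \mathbb{C})$ and inherits an asymptotic expansion in $ \hbar$ from $ b_{ \hbar}$, so Borel's lemma produces $ f_k \in S( \mathbb{C})$ with $ f_k \sim \sum_{j \geq 0} \frac{ \hbar^j}{j!}( D_z D_{ \overline{z}})^j( b_{ \hbar})$; in particular $ f_k$ satisfies the displayed truncated identity for every $ N$. Applying the first assertion to this $ f_k$ gives $ T_k = \Op^w_{ \Phi_1}( \hat{b}_{ \hbar})$ on $ H_k( \mathbb{C}, \Phi_1)$ with $ \hat{b}_{ \hbar} = e^{(1/k) \partial_z \partial_{ \overline{z}}} f_k$. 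Because $ D_z D_{ \overline{z}} = - \partial_z \partial_{ \overline{z}}$, expanding $ \hat{b}_{ \hbar}$ and substituting the expansion of $ f_k$ makes the two exponentials cancel order by order, so $ \hat{b}_{ \hbar} - b_{ \hbar} = \mathcal{O}( \hbar^{ \infty})$ in every symbol seminorm; the complex Weyl quantization of such a symbol is $ \mathcal{O}(k^{- \infty})$ as an operator on $ H_{ \hbar}( \mathbb{C}, \Phi_1)$ by the boundedness in Proposition \ref{prop_transfor_Bargmann_et_qauntif_weyl}, whence $ \Op^w_{ \Phi_1}(b_{ \hbar}) = T_k + \mathcal{O}(k^{- \infty})$ there.

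The hard part will be the Gaussian kernel computation identifying the conjugated Toeplitz operator with a Weyl operator whose symbol is the heat smoothing of the original symbol; this is precisely the anti-Wick/Weyl symbol calculus of \cite[Chapter 13]{MR2952218} (see also \cite{HJ}), and once it is available everything else is formal manipulation of asymptotic series together with the mapping and boundedness properties of $ T_{ \phi_1}$ recorded above.
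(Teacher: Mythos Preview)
The paper does not prove this proposition: it is stated as a recall of a known result, with the reference ``see for example \cite[Chapter 13]{MR2952218}'' and no argument given. Your sketch is a correct outline of precisely the standard proof one finds in that reference (and in \cite{HJ}): conjugate by $T_{\phi_1}$ to reduce $T_k$ to the anti-Wick operator $T_{\phi_1}^* M_{f_k} T_{\phi_1}$ on $L^2(\mathbb{R})$, compute its Weyl symbol by the Gaussian kernel calculation to obtain the heat smoothing $e^{(\hbar/4)\Delta}$ of $f_k\circ\kappa_{\phi_1}$, translate $\tfrac14(\partial_x^2+\partial_\xi^2)$ into $\partial_z\partial_{\overline z}$ under $z=x-i\xi$, and then invoke Proposition~\ref{prop_transfor_Bargmann_et_qauntif_weyl} to read the identity back on $H_k(\mathbb{C},\Phi_1)$; for the converse direction your Borel construction of $f_k$ and the observation $D_zD_{\overline z}=-\partial_z\partial_{\overline z}$ giving the formal inverse heat flow are exactly what is needed. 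So there is nothing to compare against in the paper itself, and your approach matches the cited source.
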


\subsubsection{Berezin-Toeplitz quantization of the torus and Berezin-Toeplitz quantization of the complex plane}
In this paragraph, we study a Berezin-Toeplitz operator of the complex plane whose symbol is $2 \pi$-periodic with respect to its first variable and $1$-periodic with respect to its second variable. Previously, we looked at the action of a Berezin-Toeplitz operator of the complex plane on the spaces $ \mathfrak{S}( \mathbb{C})$ and $ \mathfrak{S}'( \mathbb{C})$.

\begin{prop} \label{prop_toeplitz_sur_C_defini_sur_S'} Let $f_k \in S( \mathbb{C})$ be a function admitting an asymptotic expansion in powers of $1/k$. Let $T_{f_k} = (T_k)_{k \geq 1}$ be the Berezin-Toeplitz operator of symbol $f_k$. Then, for $k \geq 1$, we have:
\begin{enumerate}
\item $T_{k}$ can be defined as an operator which sends $\mathfrak{S}( \mathbb{C})$ on itself by:
$$ T_k v = \Pi_{ \Phi_1,k}(f_k v) \quad \text{for $v \in \mathfrak{S}( \mathbb{C})$,}$$
where $ \Pi_{\Phi_1,k}$ is seen as an operator which sends $ \mathfrak{S}( \mathbb{C})$ on itself.
\item $T_{k}$ can be extended into an operator which sends $\mathfrak{S}'( \mathbb{C})$ on itself by:
$$ \langle T_k u, v \rangle_{ \mathfrak{S}', \mathfrak{S}} = \langle u, \tilde{T}_k v \rangle_{ \mathfrak{S}', \mathfrak{S}} \quad \text{for $u \in \mathfrak{S}'( \mathbb{C})$ and for $v \in \mathfrak{S}( \mathbb{C})$,}$$
where $ (\tilde{T}_k)_{k \geq 1} =: T_{ \overline{f}_k}$ is the Berezin-Toeplitz operator of symbol $ \overline{f}_k$.
\end{enumerate}
\end{prop}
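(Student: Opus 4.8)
The plan is to transport the problem to the Bargmann side, where the relevant mapping properties are already available, and then to read off the $\mathfrak{S}'(\mathbb{C})$-extension by duality, using that the formal adjoint of a Berezin--Toeplitz operator is again a Berezin--Toeplitz operator.

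\emph{Assertion (1).} By Remark~\ref{rema_S(C)_inclus_dans_H(C,Phi1)}, $\mathfrak{S}(\mathbb{C})\subset H_k(\mathbb{C},\Phi_1)$; hence for $v\in\mathfrak{S}(\mathbb{C})$ one has $\Pi_{\Phi_1,k}v=v$, so that $T_kv=\Pi_{\Phi_1,k}M_{f_k}\Pi_{\Phi_1,k}v=\Pi_{\Phi_1,k}(f_kv)$, the right-hand side being meaningful since $f_k\in S(\mathbb{C})$ is bounded and so $f_kv\in L^2_k(\mathbb{C},\Phi_1)$. It remains to see that $\Pi_{\Phi_1,k}(f_kv)$ lies in $\mathfrak{S}(\mathbb{C})$ and that $v\mapsto T_kv$ is continuous for the seminorms $|\cdot|_j$. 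To this end I would use Proposition~\ref{prop_Toeplitz=pseudo_H(Phi1)}: on $H_k(\mathbb{C},\Phi_1)$ one has $T_k=\Op^w_{\Phi_1}(b_{\hbar})$ with $b_{\hbar}=\exp\!\left(\dfrac{1}{k}\partial_z\partial_{\overline{z}}\right)(f_k)\in S(\Lambda_{\Phi_1})$; setting $a_{\hbar}:=b_{\hbar}\circ\kappa_{\phi_1}\in S(\mathbb{R}^2)$, Proposition~\ref{prop_transfor_Bargmann_et_qauntif_weyl} gives $\Op^w_{\Phi_1}(b_{\hbar})=T_{\phi_1}\,\Op^w(a_{\hbar})\,T_{\phi_1}^*$. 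Then, for $v\in\mathfrak{S}(\mathbb{C})$, I would chain three steps: $T_{\phi_1}^*v\in\mathcal{S}(\mathbb{R})$ by Proposition~\ref{prop_T_(phi1)*_envoie_S(C)_sur_S(R)} and \eqref{equation_estimation_Tphi1*(SSmu(C))}; $\Op^w(a_{\hbar})T_{\phi_1}^*v\in\mathcal{S}(\mathbb{R})$ since $a_{\hbar}\in S(\mathbb{R}^2)$ acts continuously on $\mathcal{S}(\mathbb{R})$ (recalled after Definition~\ref{defi_quantif_weyl_tore_integrale}); and $T_{\phi_1}\bigl(\Op^w(a_{\hbar})T_{\phi_1}^*v\bigr)\in\mathfrak{S}(\mathbb{C})$ by Proposition~\ref{prop_Tphi_envoie_S(R)_sur_SS(C)} and \eqref{equation_estimation_Tphi1_sur_Sj(R)}. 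Composing the three quantitative bounds yields, for each $j$, a constant $C_j^{\hbar}$ and an integer $j'=j'(j)$ with $|T_kv|_j\le C_j^{\hbar}\,|v|_{j'}$, which is the required continuity; and since $\Pi_{\Phi_1,k}$ restricts to the identity on holomorphic elements of $L^2_k(\mathbb{C},\Phi_1)$, this is the asserted action of $\Pi_{\Phi_1,k}$.

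\emph{Assertion (2).} The symbol $\overline{f}_k$ again lies in $S(\mathbb{C})$ and admits an asymptotic expansion in powers of $1/k$ (with components $\overline{f_l}$), so assertion~(1) applied to $\overline{f}_k$ shows that $\tilde T_k:=T_{\overline{f}_k}$ maps $\mathfrak{S}(\mathbb{C})$ continuously into itself. Hence, for $u\in\mathfrak{S}'(\mathbb{C})$, the assignment $v\mapsto\langle u,\tilde T_kv\rangle_{\mathfrak{S}',\mathfrak{S}}$ is a continuous linear functional on $\mathfrak{S}(\mathbb{C})$, so $\langle T_ku,v\rangle_{\mathfrak{S}',\mathfrak{S}}:=\langle u,\tilde T_kv\rangle_{\mathfrak{S}',\mathfrak{S}}$ defines an element $T_ku\in\mathfrak{S}'(\mathbb{C})$, and $u\mapsto T_ku$ an operator on $\mathfrak{S}'(\mathbb{C})$. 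To check that this genuinely extends Definition~\ref{defi_quantif_BT_du_plan_complexe}, I would note that $\Pi_{\Phi_1,k}$ is self-adjoint on $L^2_k(\mathbb{C},\Phi_1)$ and $M_{f_k}^*=M_{\overline{f}_k}$, so $T_k^*=\Pi_{\Phi_1,k}M_{\overline{f}_k}\Pi_{\Phi_1,k}=\tilde T_k$; therefore for $u\in H_k(\mathbb{C},\Phi_1)$ and $v\in\mathfrak{S}(\mathbb{C})\subset H_k(\mathbb{C},\Phi_1)$ one has $\langle T_ku,v\rangle_{L^2_k}=\langle u,\tilde T_kv\rangle_{L^2_k}$, and by Remark~\ref{rem_crochet_defini_sur_SSj(C)} this equals $\langle u,\tilde T_kv\rangle$, which is exactly the value prescribed by the new definition.

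\emph{Main obstacle.} Once assertion~(1) is established, the identity $T_kv=\Pi_{\Phi_1,k}(f_kv)$ and the duality extension in assertion~(2) are essentially bookkeeping; the crux is assertion~(1), i.e. that $\Pi_{\Phi_1,k}M_{f_k}$ preserves $\mathfrak{S}(\mathbb{C})$ continuously. The argument above reduces this to the already proved behaviour of the Bargmann transform on the Schwartz class; alternatively one could estimate the integral kernel of $\Pi_{\Phi_1,k}$ directly, in the spirit of \cite{MR0201959}, balancing the polynomial factors produced by differentiating the reproducing kernel against the Gaussian decay of that kernel. Since $k\ge1$ is fixed throughout, no uniformity in $\hbar$ is needed, which keeps all the estimates elementary.
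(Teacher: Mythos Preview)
Your proof is correct and follows essentially the same route as the paper: both transport $T_k$ to the real side via $T_k=\Op^w_{\Phi_1}(b_\hbar)=T_{\phi_1}\,\Op^w(a_\hbar)\,T_{\phi_1}^*$ (Propositions~\ref{prop_Toeplitz=pseudo_H(Phi1)} and~\ref{prop_transfor_Bargmann_et_qauntif_weyl}) and then chain the mapping properties of $T_{\phi_1}^*$, $\Op^w(a_\hbar)$, and $T_{\phi_1}$ on the Schwartz-type spaces (Propositions~\ref{prop_Tphi_envoie_S(R)_sur_SS(C)}, \ref{prop_T_(phi1)*_envoie_S(C)_sur_S(R)}) to get $T_k:\mathfrak S(\mathbb C)\to\mathfrak S(\mathbb C)$, after which the $\mathfrak S'(\mathbb C)$-extension is duality. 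The only noteworthy difference is in how you identify the adjoint in Assertion~(2): the paper passes through the complex Weyl calculus, computing $\overline{b}_\hbar=\exp\!\bigl(\tfrac1k\partial_z\partial_{\overline z}\bigr)(\overline{f_k})$ and recognising this as the symbol of $T_{\overline f_k}$, whereas you observe directly that $T_k^*=\Pi_{\Phi_1,k}M_{\overline f_k}\Pi_{\Phi_1,k}$ on $L^2_k(\mathbb C,\Phi_1)$, which is a bit more elementary and avoids invoking the symbol formula a second time.
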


\begin{proof}
Since $T_{f_k} = (T_k)_{k \geq 1}$ is a Berezin-Toeplitz operator of the complex plane, then according to Proposition \ref{prop_Toeplitz=pseudo_H(Phi1)}, there exists $b_{ \hbar } \in S( \Lambda_{ \Phi_1})$ such that, for $k \geq 1$:
$$ T_k = \Op^w_{ \Phi_1}(b_{ \hbar}) \quad \text{on $H_k( \mathbb{C}, \Phi_1)$}.$$
Besides, according to Proposition \ref{prop_transfor_Bargmann_et_qauntif_weyl}, we know that:
$$ T_{ \phi_1}^* \Op^w_{ \Phi_1}(b_{ \hbar}) T_{ \phi_1} = \Op^w( b_{ \hbar} \circ \kappa_{ \phi_1}) : L^2( \mathbb{R}) \longrightarrow L^2( \mathbb{R}).$$
Since $b_{ \hbar} \circ \kappa_{ \phi_1} \in S( \mathbb{R}^2)$, then we have:
$$ \Op^w(b_{ \hbar} \circ \kappa_{ \phi_1}) : \mathcal{S}( \mathbb{R}) \longrightarrow \mathcal{S}( \mathbb{R}) \quad \text{and} \quad \Op^w(b_{ \hbar} \circ \kappa_{ \phi_1}) : \mathcal{S}'( \mathbb{R}) \longrightarrow \mathcal{S}'( \mathbb{R}) .$$
Moreover, according to Propositions \ref{prop_Tphi_envoie_S(R)_sur_SS(C)}, \ref{prop_T_(phi1)*_envoie_S(C)_sur_S(R)} and \ref{prop_Tphi_envoie_S'(R)_sur_SS'(C)}, the Bargmann transform and its adjoint satisfy:
$$
\left\lbrace
\begin{split}
& T_{ \phi_1}: \mathcal{S}( \mathbb{R}) \longrightarrow \mathfrak{S}( \mathbb{C}) \quad \text{and} \quad T_{ \phi_1}^*: \mathfrak{S}( \mathbb{C}) \longrightarrow \mathcal{S}( \mathbb{R}), \\
& T_{ \phi_1}: \mathcal{S}'( \mathbb{R}) \longrightarrow \mathfrak{S}'( \mathbb{C}) \quad \text{and} \quad T_{ \phi_1}^*: \mathfrak{S}'( \mathbb{C}) \longrightarrow \mathcal{S}'( \mathbb{R}).
\end{split}
\right.
$$
As a result, for $k \geq 1$, we obtain:
$$ T_k: \mathfrak{S}( \mathbb{C}) \longrightarrow \mathfrak{S}( \mathbb{C}) \quad \text{and} \quad T_k: \mathfrak{S}'( \mathbb{C}) \longrightarrow \mathfrak{S}'( \mathbb{C}) .$$
Then, by definition $ \Pi_{ \Phi_1,k} = T_{ \phi_1} T_{ \phi_1}^*$ and according to Proposition \ref{prop_Tphi_envoie_S(R)_sur_SS(C)}, the operator $ \Pi_{ \Phi_1, k}$ can be extended into an operator which sends $ \mathfrak{S}( \mathbb{C})$ on itself. Since $ \mathfrak{S}( \mathbb{C}) \subset H_{ \hbar}( \mathbb{C}, \Phi_1)$ (see Remark \ref{rema_S(C)_inclus_dans_H(C,Phi1)}), then for $v \in \mathfrak{S}( \mathbb{C})$ and for $f_k \in S( \mathbb{C})$, we have:
$$ \Pi_{ \Phi_1,k} v =v \quad \text{and} \quad \Pi_{ \Phi_1, k}(f_k v) \in \mathfrak{S}( \mathbb{C}).$$ Therefore,  the Berezin-Toeplitz operator $T_{f_k} = (T_k)_{ k \geq 1}$ is defined as follows, for $k \geq 1$ and for $v \in \mathfrak{S}( \mathbb{C})$:
$$ T_k v = \Pi_{ \Phi_1,k} M_{f_k} \Pi_{ \Phi_1,k} v = \Pi_{ \Phi_1,k}(f_k v),$$
where $ \Pi_{ \Phi_1,k}$ is an operator which sends $ \mathfrak{S}( \mathbb{C})$ on itself. Finally, for $v = T_{\phi_1} \psi \in \mathfrak{S}'( \mathbb{C})$ and for $ u = T_{ \phi_1} \phi \in \mathfrak{S}( \mathbb{C})$, we have:
\begin{align*}
\langle T_k v, u \rangle_{ \mathfrak{S}', \mathfrak{S}} & = \langle \Op^w_{ \Phi_1}(b_{ \hbar}) v, u \rangle_{ \mathfrak{S}', \mathfrak{S}} \quad \text{according to Proposition \ref{prop_Toeplitz=pseudo_H(Phi1)}}, \\
& = \langle v, \Op^w_{ \Phi_1}( \overline{b}_{ \hbar}) u \rangle_{ \mathfrak{S}', \mathfrak{S}} \quad \text{according to Proposition \ref{prop_action_quantif_weyl_complexe_sur_SS'(C)}}, \\
& =:  \langle v, \tilde{T}_k u \rangle_{ \mathfrak{S}', \mathfrak{S}}.
\end{align*}
Then, according to Proposition \ref{prop_Toeplitz=pseudo_H(Phi1)}, for $z \in \Lambda_{ \Phi_1} \simeq \mathbb{C}$, we have:
$$b_{ \hbar}(z) = \exp \left( \dfrac{1}{k} \partial_z \partial_{ \overline{z}} \right) ( f_k(z)).$$
Consequently, for $z \in \Lambda_{ \Phi_1} \simeq \mathbb{C}$, we obtain:
$$ \overline{b}_{ \hbar}(z) = \exp \left( \dfrac{1}{k} \partial_z \partial_{ \overline{z}} \right) (\overline{f_k(z)}).$$
In others words, the sequence of operators $( \tilde{T}_k)_{k \geq 1}$ is a Berezin-Toeplitz operator of symbol $ \overline{f}_k$.
\end{proof}

\begin{rema}
Let $f_k \in \mathcal{C}^{ \infty}_k( \mathbb{R}^2)$ be a function such that, for $ (x, y) \in \mathbb{R}^2$, we have:
$$ f_k(x+2 \pi, y) = f_k(x, y) = f_k(x, y+1).$$
Let $T_{f_k} = (T_k)_{k \geq 1}$ be the Berezin-Toeplitz operator of the complex plane of symbol $f_k$. Then, for $k \geq 1$, the operator $T_{k}$ is well-defined on $ \mathcal{H}_k$ according to Proposition \ref{prop_toeplitz_sur_C_defini_sur_S'} since $ \mathcal{H}_k \subset \mathfrak{S}'( \mathbb{C})$.
\end{rema}

The following proposition gives a connection between the orthogonal projection $ \Pi_{ \Phi_1,k}$ which appears in the definition of a Berezin-Toeplitz operator of the complex plane (see Definition \ref{defi_quantif_BT_du_plan_complexe}) and the orthogonal projection $ \Pi_k$ which appears in the definition of a Berezin-Toeplitz operator of the torus (see Definition \ref{defi_quantif_BT_tore}). This proposition is fundamental for understanding the relation between these two quantizations.

\begin{prop} \label{prop_Pi_Phi_bien_defini_sur_Hk} $ $ \\
Let $ \Pi_{ \Phi_1, k}$ be the orthogonal projection of $L^2_k( \mathbb{C}, \Phi_1)$ on $ H_k( \mathbb{C}, \Phi_1)$. Then:
\begin{enumerate}
\item $ \Pi_{ \Phi_1, k}$ can be extended into an operator which sends $\mathcal{G}_k$ on $\mathcal{H}_k$ (defined in Subsection \ref{subsection_context});
\item $ \Pi_{ \Phi_1, k}= \id$ on $ \mathcal{H}_k$. 
\end{enumerate}
Consequently, $ \Pi_{ \Phi_1, k}$ coincides with $ \Pi_k$ on $ \mathcal{G}_k$.
\end{prop}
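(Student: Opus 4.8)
The plan is to exploit the explicit Bergman--type integral kernel of $\Pi_{\Phi_1,k}=T_{\phi_1}T_{\phi_1}^*$ provided by Proposition~\ref{prop_ecriture_integrale_Pi_Phi1}. For the weight $\Phi_1(z)=\tfrac12\Im(z)^2$ one has $\psi_1(z,w)=-\tfrac18(z-w)^2$ and $\hbar=1/k$, so that formula reads
$$ \Pi_{\Phi_1,k}g(z)=\frac{k}{2\pi}\int_{\mathbb{C}}e^{-\frac{k}{4}(z-\overline w)^2}\,g(w)\,e^{-k\,\Im(w)^2}\,L(dw), $$
valid a priori for $g\in L^2_k(\mathbb{C},\Phi_1)$. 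The steps are: (i) extend this integral to $g\in\mathcal{G}_k$ by folding it over the lattice $2\pi\mathbb{Z}+i\mathbb{Z}$; (ii) read off from the folded formula that $\Pi_{\Phi_1,k}g\in\mathcal{H}_k$, which is assertion (1); (iii) get assertion (2) by evaluating the same formula on the basis $(e_l)_{0\le l\le k-1}$; (iv) recognize the folded kernel as the reproducing kernel of $\mathcal{H}_k$ for the $\mathcal{G}_k$-scalar product and conclude.

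For (1), write $\mathbb{C}=\bigsqcup_{(m,n)\in\mathbb{Z}^2}\bigl(D+2\pi m+in\bigr)$ with $D=[0,2\pi)\times[0,1)$, and substitute $w\mapsto w+2\pi m+in$ on the $(m,n)$-th piece. The quasi-periodicity relations defining $\mathcal{G}_k$ let one factor $g(w+2\pi m+in)=u^{km}v^{kn}e^{E_{m,n}(w)}g(w)$ for an explicit exponent $E_{m,n}$; a short computation shows that $e^{E_{m,n}(w)}$ combines with the Gaussian $e^{-\frac{k}{4}(z-\overline w-2\pi m+in)^2}$ and the weight $e^{-k(\Im(w)+n)^2}$ into a summand bounded, uniformly for $w\in D$ and $z$ in compact sets, by $C_z\,e^{-ck(m^2+n^2)}\,|g(w)|$ (Gaussian decay in both lattice directions --- the key point being that the growth $e^{\frac{k}{4}(\Im(w)+n)^2}$ coming from the kernel is overcome by the weight together with $E_{m,n}$). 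Cauchy--Schwarz in $w$ against $g\in L^2(D,e^{-k\Im^2}L)$ and summation over $(m,n)$ give absolute and locally uniform convergence, so $\Pi_{\Phi_1,k}g$ is well defined on $\mathcal{G}_k$ and holomorphic in $z$ (the kernel is entire in $z$, so one differentiates under the integral sign). Membership $\Pi_{\Phi_1,k}g\in\mathcal{H}_k$ is then the verification of $\Pi_{\Phi_1,k}g(z+2\pi)=u^k\Pi_{\Phi_1,k}g(z)$ and $\Pi_{\Phi_1,k}g(z+i)=v^ke^{-ikz+k/2}\Pi_{\Phi_1,k}g(z)$; each follows by translating $w$ by the corresponding period in the now absolutely convergent integral over $\mathbb{C}$ and using the quasi-periodicity of $g$ together with the transformation law of kernel and weight, exactly of the type done for $T_{\phi_1}^*e_l$ in the proof of Proposition~\ref{prop_formule_C}. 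In particular $\mathcal{H}_k\subset\mathcal{G}_k$ is sent into itself.

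For (2), take $g=\sum_{l=0}^{k-1}\alpha_le_l\in\mathcal{H}_k$; by linearity it suffices to treat $g=e_l$, for which the integral above converges absolutely (since $|e_l(w)|e^{-k\Im(w)^2/2}$ is bounded). Substituting the series \eqref{eq_base_el} for $e_l$, exchanging sum and integral, and carrying out the resulting Gaussian integral over $w=p+iq\in\mathbb{R}^2$ exactly as in the proof of Proposition~\ref{prop_formule_C}, the sum reassembles to $e_l(z)$, so $\Pi_{\Phi_1,k}e_l=e_l$. (Alternatively, for $g\in\mathcal{H}_k\subset\mathfrak{S}^0(\mathbb{C})$ one checks by Fubini that the composite $T_{\phi_1}(T_{\phi_1}^*g)$ collapses to the single integral above, and then $\Pi_{\Phi_1,k}e_l=T_{\phi_1}T_{\phi_1}^*e_l=\tilde c_k^l\,c_k^l\,e_l=e_l$ by Propositions~\ref{prop_formule_C} and~\ref{prop_TphiTphi*=id_sur_Hk_et_Tphi*Tphi=id_sur_Lk}.)

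Finally, the \emph{consequently} part. By (1) and (2), $\Pi_{\Phi_1,k}$ is a linear projection of $\mathcal{G}_k$ onto $\mathcal{H}_k$, and the folding in the proof of (1) exhibits it as the integral operator $g\mapsto\int_D\mathcal{B}_k(z,\overline w)\,g(w)\,e^{-k\Im(w)^2}L(dw)$, where $\mathcal{B}_k(z,\overline w)$ is the Heisenberg-twisted periodization of $\frac{k}{2\pi}e^{-\frac{k}{4}(z-\overline w)^2}$. For each fixed $w$ the function $z\mapsto\mathcal{B}_k(z,\overline w)$ lies in $\mathcal{H}_k$ (it is entire by the same convergence estimate, and the two quasi-periodicity relations hold term by term after reindexing the lattice sum); and by (2), $\int_D\mathcal{B}_k(z,\overline w)\,h(w)\,e^{-k\Im(w)^2}L(dw)=\Pi_{\Phi_1,k}h(z)=h(z)$ for every $h\in\mathcal{H}_k$. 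Hence $\mathcal{B}_k$ is a reproducing kernel of the Hilbert space $\mathcal{H}_k$ for the $\mathcal{G}_k$-scalar product, so it is the (unique) Bergman kernel, in particular Hermitian, $\overline{\mathcal{B}_k(z,\overline w)}=\mathcal{B}_k(w,\overline z)$, and therefore $\Pi_{\Phi_1,k}$ is symmetric on $\mathcal{G}_k$. A symmetric idempotent with range the closed subspace $\mathcal{H}_k$ is the orthogonal projection onto $\mathcal{H}_k$, i.e. $\Pi_k$ of Definition~\ref{defi_quantif_BT_tore}; equivalently, $\langle\Pi_{\Phi_1,k}g,e_l\rangle_{\mathcal{G}_k}=\langle g,\Pi_{\Phi_1,k}e_l\rangle_{\mathcal{G}_k}=\langle g,e_l\rangle_{\mathcal{G}_k}$ for every $l$, so $g-\Pi_{\Phi_1,k}g\perp\mathcal{H}_k$. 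This gives $\Pi_{\Phi_1,k}=\Pi_k$ on $\mathcal{G}_k$. The main obstacle is the bookkeeping in step (1): keeping track of the exponential factors produced by the imaginary-direction quasi-periodicity and checking that, combined with the Gaussian weight, they yield genuine decay of the periodized kernel --- which is simultaneously what makes the extension to $\mathcal{G}_k$ legitimate and what forces the kernel to be Hermitian --- together with the (routine but necessary) verification that the single-integral formula, valid a priori only on $L^2_k(\mathbb{C},\Phi_1)$, still computes $T_{\phi_1}T_{\phi_1}^*$ on the non-$L^2$ space $\mathcal{G}_k$.
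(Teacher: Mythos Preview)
Your proof is correct and follows essentially the same route as the paper: fold the integral kernel over the lattice $2\pi\mathbb{Z}+i\mathbb{Z}$, control the resulting series by Cauchy--Schwarz on the fundamental domain to get a well-defined operator $\mathcal{G}_k\to\mathcal{H}_k$, check the two quasi-periodicity relations by translating the integration variable, and invoke Proposition~\ref{prop_TphiTphi*=id_sur_Hk_et_Tphi*Tphi=id_sur_Lk} for $\Pi_{\Phi_1,k}=\id$ on $\mathcal{H}_k$. The one place you go further than the paper is the ``consequently'' clause: the paper simply asserts that (1) and (2) together give $\Pi_{\Phi_1,k}=\Pi_k$, whereas you supply the missing step, namely that the folded kernel is Hermitian (being the periodization of a Hermitian Bergman kernel) so that the extension is a \emph{symmetric} idempotent onto $\mathcal{H}_k$, hence the orthogonal projection. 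Your added argument is a genuine improvement in rigor; without it, knowing only that $\Pi_{\Phi_1,k}$ maps $\mathcal{G}_k$ onto $\mathcal{H}_k$ and fixes $\mathcal{H}_k$ does not by itself force it to be $\Pi_k$.
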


\begin{proof} First, let's prove that $ \Pi_{ \Phi_1,k}$ is well-defined on $ \mathcal{G}_k$. The main difficulty to prove this result comes from the fact that $ \mathcal{G}_k$ is not included in $ \mathfrak{S}'( \mathbb{C})$. Recall the formula defining $ \Pi_{ \Phi_1, k}$ for $g \in L^2_{k}( \mathbb{C}, \Phi_1)$ (see Proposition \ref{prop_ecriture_integrale_Pi_Phi1}):
$$ \Pi_{ \Phi_1, k} g(z) = \int_{ \mathbb{C}} e^{-(1/ 4 \hbar)(z- \overline{w})^2} g(w) e^{-2 \Phi_1(w)/ \hbar} L(dw) .$$
Let $g \in \mathcal{G}_k$, by a simple computation, we notice that, for $(m, n) \in \mathbb{Z}^2$ and for $z \in \mathbb{C}$, we have:
$$ \left\lbrace
\begin{split}
& g(z+ 2 \pi m) = \left( u^k \right)^m g(z), \\
& g(z+in) = \left( v^k e^{-izk+kn/2} \right)^n g(z).
\end{split}
\right.$$
Then, for $g \in \mathcal{G}_k$, we can write an estimate of the integral defining $ \Pi_{ \Phi_1,k}$ as follows:
\begin{align*}
& \int_{ \mathbb{C}} \left| e^{-(1/ 4 \hbar)(z- \overline{w})^2} g(w) e^{-2 \Phi_1(w)/ \hbar} \right| L(dw) , \\
& = \sum_{m \in \mathbb{Z}} \int_{[0, 2 \pi] + i \mathbb{R}} \left| e^{-(1/ 4 \hbar)(z- (\overline{w + 2 \pi m}))^2} g(w + 2 \pi m) e^{-2 \Phi_1(w + 2 \pi m)/ \hbar} \right| L(dw) , \\
& = \sum_{m \in \mathbb{Z}} \int_{[0, 2 \pi] + i \mathbb{R}} \left| e^{-(1/ 4 \hbar)(z- \overline{w} - 2 \pi m)^2} \left( u^k \right)^m g(w) e^{-2 \Phi_1(w)/ \hbar} \right| L(dw), \\
& = \sum_{(m, n) \in \mathbb{Z}^2} \int_{[0, 2 \pi] + i [0, 1]} \left| e^{-(1/ 4 \hbar)(z- (\overline{w +in}) - 2 \pi m)^2} \left( u^k \right)^m g(w+in) e^{-2 \Phi_1(w+in)/ \hbar} \right| L(dw), \\
& = \sum_{(m, n) \in \mathbb{Z}^2} \int_{[0, 2 \pi] + i [0, 1]} \left| e^{-(1/ 4 \hbar)(z- \overline{w} +in - 2 \pi m)^2} \left( u^k \right)^m \left( v^k e^{-iwk+kn/2} \right)^n g(w) \right. \\
& \left. e^{-2 \Phi_1(w)/ \hbar} e^{-n^2/ \hbar} e^{-2n \Im(w) / \hbar} \right| L(dw), \\
& = \sum_{(m, n) \in \mathbb{Z}^2} \int_{[0, 2 \pi] + i [0, 1]} \left| e^{-(1/ 4 \hbar)(z- \overline{w} +in - 2 \pi m)^2} \left( u^k \right)^m \left( v^k \right)^n e^{-kn^2/2} e^{-ink \overline{w}} g(w) e^{-2 \Phi_1(w)/ \hbar} \right| L(dw).
\end{align*}
For all $z \in \mathbb{C}$, we have:
\begin{align*}
& \int_{[0, 2 \pi] + i [0, 1]} \left| e^{-(1/ 4 \hbar)(z- \overline{w} +in - 2 \pi m)^2} \left( u^k \right)^m \left( v^k \right)^n e^{-kn^2/2} e^{-ink \overline{w}} g(w) e^{-2 \Phi_1(w)/ \hbar} \right| L(dw) \\
& = \int_{[0, 2 \pi] + i [0, 1]} \left| e^{-(1/ 4 \hbar)(z- \overline{w} +in - 2 \pi m)^2} e^{-kn^2/2} e^{-ink \overline{w}} g(w) e^{-2 \Phi_1(w)/ \hbar} \right| L(dw) \\
& \leq \| g \|_{ \mathcal{G}_k}^2 \int_{[0, 2 \pi] + i [0, 1]} \left| e^{-(1/ 4 \hbar)(z- \overline{w} +in - 2 \pi m)^2} e^{-kn^2/2} e^{-ink \overline{w}} \right|^2 e^{-2 \Phi_1(w)/ \hbar} L(dw) \\
& \text{using Cauchy-Schwartz in $L^2( [0, 2 \pi]+ i [0,1], e^{-2 \Phi_1(z)/ \hbar} L(dz))$}, \\
& = \| g \|_{ \mathcal{G}_k}^2 \int_{[0, 2 \pi] + i [0, 1]} \left| e^{-(1/ 4 \hbar)(z- \overline{w} +in - 2 \pi m)^2} e^{-ink \overline{w}} \right|^2 e^{-kn^2} e^{-2 \Phi_1(w)/ \hbar} L(dw), \\
& \leq C \| g \|_{ \mathcal{G}_k}^2 e^{- k \Re (z^2)/2} e^{-2 k \pi^2 m^2} e^{-k n^2/2} \max \left( e^{2 \pi m k \Re(z)}, e^{2 \pi (m+1) k \Re(z)}\right) \max \left( e^{n k \Im(z)}, e^{(n+1) k \Im(z)}\right), 
\end{align*}
where $C$ is a constant independent of $k$.
We recognize the general term of a convergent series in $m$ and $n$, thus according to Fubini's theorem, $ \Pi_{\Phi_1, k}$ is well-defined on $ \mathcal{G}_k$ by the following formula, for $g \in \mathcal{G}_k$:
\begin{align*}
& \Pi_{ \Phi_1, k} g(z) = \int_{ \mathbb{C}} e^{-(1/ 4 \hbar)(z- \overline{w})^2} g(w) e^{-2 \Phi_1(w)/ \hbar} L(dw), \\
& = \sum_{(m, n) \in \mathbb{Z}^2} \int_{[0, 2 \pi] + i [0, 1]} e^{-(1/ 4 \hbar)(z- \overline{w} +in - 2 \pi m)^2} \left( u^k \right)^m \left( v^k \right)^n e^{-kn^2/2} e^{-ink \overline{w}} g(w) e^{-2 \Phi_1(w)/ \hbar} L(dw).
\end{align*}
Now, since the range of $ \Pi_{\Phi_1,k}$ consists of holomorphic functions, then for $g \in \mathcal{G}_k$, $ \Pi_{ \Phi_1,k} g \in \Hol( \mathbb{C})$. Then, via the change of variables $w \longmapsto w + 2 \pi$, we prove with simple integral equalities that, for $g \in \mathcal{G}_k$, we have $\Pi_{ \Phi_1, k} g (z+ 2 \pi) =  u^k \Pi_{ \Phi_1, k} g(z)$ and using the change of variables $w \longmapsto w +i$, we also obtain that, for $g \in \mathcal{G}_k$,
$\Pi_{ \Phi_1, k} g(z+i) =v^k e^{-ikz+k/2} \Pi_{ \Phi_1, k} g(z)$. Finally, we recall that $ \Pi_{ \Phi_1, k} = \id$ on $ \mathcal{H}_k$ comes from Proposition \ref{prop_TphiTphi*=id_sur_Hk_et_Tphi*Tphi=id_sur_Lk}.
\end{proof}

We can now define the action of a Berezin-Toeplitz operator of the complex plane on the space $ \mathcal{H}_k$.

\begin{prop} \label{prop_action_toeplitz_sur_C_sur_Hk}
Let $f_k \in \mathcal{C}^{ \infty}_k( \mathbb{C})$ be a function such that, for $ z \in \mathbb{C}$, we have:
$$ f_k(z +2 \pi) = f_k(z) = f_k(z+i).$$
Let $T_{f_k} = (T_k)_{k \geq 1}$ be the Berezin-Toeplitz operator of the complex plane of symbol $f_k$.
Then, for $k \geq 1$ and for $v \in \mathcal{H}_k$, we have:
$$ T_k v = \Pi_{ \Phi_1, k} (f_k v ),$$
where $ \Pi_{ \Phi_1, k}$ is seen as the operator which sends $ \mathcal{G}_k$ on $ \mathcal{H}_k$ (see Proposition \ref{prop_Pi_Phi_bien_defini_sur_Hk}).
\end{prop}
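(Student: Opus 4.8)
The plan is to check that the map $v\mapsto\Pi_{\Phi_1,k}(f_kv)$ is well defined on $\mathcal H_k$ with values in $\mathcal H_k$, and that it coincides with the restriction to $\mathcal H_k\subset\mathfrak S'(\mathbb C)$ of $T_k$ (which is meaningful by Proposition~\ref{prop_toeplitz_sur_C_defini_sur_S'}(2)), the two being compared by testing against $\mathfrak S(\mathbb C)$. First, for $v\in\mathcal H_k$ the function $f_kv$ still satisfies the two quasi-periodicity relations defining $\mathcal G_k$, since $f_k$ obeys $f_k(z+2\pi)=f_k(z)=f_k(z+i)$, so multiplication by $f_k$ preserves those relations; and its weighted $L^2$-norm over $[0,2\pi]\times[0,1]$ is finite because $f_k$ is bounded (Remark~\ref{rema_symbole_perio_dans_S(R2)}) and $v\in\mathcal H_k\subset\mathcal G_k$. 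Hence $f_kv\in\mathcal G_k$, and by Proposition~\ref{prop_Pi_Phi_bien_defini_sur_Hk} the element $\Pi_{\Phi_1,k}(f_kv)$ lies in $\mathcal H_k$ and equals $\Pi_k(f_kv)=\Pi_kM_{f_k}\Pi_kv$ (using $\Pi_kv=v$); thus the right-hand side is precisely the torus Berezin-Toeplitz operator of Definition~\ref{defi_quantif_BT_tore} applied to $v$. Since an element of $\mathfrak S'(\mathbb C)$ is determined by its pairing with $\mathfrak S(\mathbb C)$, it suffices to prove that $\langle T_kv,w\rangle_{\mathfrak S',\mathfrak S}=\langle\Pi_{\Phi_1,k}(f_kv),w\rangle_{\mathfrak S',\mathfrak S}$ for every $w\in\mathfrak S(\mathbb C)$; this will in particular re-establish that $T_kv\in\mathcal H_k$.

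For the computation, fix $v\in\mathcal H_k$ and $w\in\mathfrak S(\mathbb C)$. By Proposition~\ref{prop_toeplitz_sur_C_defini_sur_S'}(2) one has $\langle T_kv,w\rangle_{\mathfrak S',\mathfrak S}=\langle v,\tilde T_kw\rangle_{\mathfrak S',\mathfrak S}$ with $\tilde T_k=T_{\overline f_k}$, and part~(1) of the same proposition gives $\tilde T_kw=\Pi_{\Phi_1,k}(\overline f_kw)$. Since $\mathcal H_k\subset\mathfrak S^0(\mathbb C)$ (see the proof of Proposition~\ref{prop_formule_C}), the pairings of both $v$ and $\Pi_{\Phi_1,k}(f_kv)$ against $\mathfrak S(\mathbb C)$ are represented by the sesquilinear integral bracket $\langle\cdot,\cdot\rangle$ of Remark~\ref{rem_crochet_defini_sur_SSj(C)} (Proposition~\ref{prop_ecriture_elements_de_SS'(C)}); moreover the kernel $e^{-(1/4\hbar)(z-\overline w)^2}$ of $\Pi_{\Phi_1,k}$ is Hermitian for the weight $e^{-2\Phi_1(z)/\hbar}\,L(dz)$, so $\Pi_{\Phi_1,k}$ is formally self-adjoint for this bracket. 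Using this, together with $\Pi_{\Phi_1,k}v=v$ and $\Pi_{\Phi_1,k}w=w$ (Proposition~\ref{prop_Pi_Phi_bien_defini_sur_Hk} and Remark~\ref{rema_S(C)_inclus_dans_H(C,Phi1)}), I would obtain
\begin{align*}
\langle T_kv,w\rangle_{\mathfrak S',\mathfrak S}
&=\langle v,\Pi_{\Phi_1,k}(\overline f_kw)\rangle
=\langle\Pi_{\Phi_1,k}v,\overline f_kw\rangle
=\langle v,\overline f_kw\rangle\\
&=\langle f_kv,w\rangle
=\langle f_kv,\Pi_{\Phi_1,k}w\rangle
=\langle\Pi_{\Phi_1,k}(f_kv),w\rangle,
\end{align*}
the step $\langle v,\overline f_kw\rangle=\langle f_kv,w\rangle$ being the trivial identity $\int_{\mathbb C}\overline v\,\overline f_k\,w\,e^{-2\Phi_1/\hbar}\,L(dz)=\int_{\mathbb C}\overline{f_kv}\,w\,e^{-2\Phi_1/\hbar}\,L(dz)$. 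As $w\in\mathfrak S(\mathbb C)$ is arbitrary, $T_kv=\Pi_{\Phi_1,k}(f_kv)$, which is the claim.

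The main obstacle is making rigorous the formal self-adjointness of $\Pi_{\Phi_1,k}$ across the bracket when one of the two factors (the quasi-periodic function $f_kv$, respectively $\overline f_kw$) is neither in $L^2_k(\mathbb C,\Phi_1)$ nor in $\mathfrak S(\mathbb C)$: interchanging the two integrations in $\langle\Pi_{\Phi_1,k}(f_kv),w\rangle$ and $\langle v,\Pi_{\Phi_1,k}(\overline f_kw)\rangle$ requires an absolute-convergence estimate. This is, however, essentially the lattice-sum-of-Gaussians estimate already carried out in the proof of Proposition~\ref{prop_Pi_Phi_bien_defini_sur_Hk}, now with the additional Schwartz factor $w$ present, which only improves convergence. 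A secondary, purely bookkeeping point is to keep the conjugation conventions straight when identifying $\mathcal H_k\subset\mathfrak S^0(\mathbb C)$ with functionals in $\mathfrak S'(\mathbb C)$ — this is exactly the content of Proposition~\ref{prop_ecriture_elements_de_SS'(C)} and does not affect the final identity.
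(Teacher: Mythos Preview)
Your argument is correct and follows essentially the same route as the paper: both proofs start from the duality relation $\langle T_kv,w\rangle_{\mathfrak S',\mathfrak S}=\langle v,\tilde T_kw\rangle_{\mathfrak S',\mathfrak S}$ of Proposition~\ref{prop_toeplitz_sur_C_defini_sur_S'}, represent the pairing by the integral bracket via Proposition~\ref{prop_ecriture_elements_de_SS'(C)} (the paper writes $v$ as $g\in\mathfrak S^{-l}(\mathbb C)$, which is just $v$ itself since $\mathcal H_k\subset\mathfrak S^0(\mathbb C)$), and then move $\Pi_{\Phi_1,k}$ and the multiplication across using $\Pi_{\Phi_1,k}^*=\Pi_{\Phi_1,k}$ together with $\Pi_{\Phi_1,k}v=v$. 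Your write-up is in fact more careful than the paper's on two points the paper leaves implicit: the verification that $f_kv\in\mathcal G_k$, and the Fubini/absolute-convergence issue behind moving $\Pi_{\Phi_1,k}$ across the bracket when one factor is only quasi-periodic.
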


\begin{proof}
According to Proposition \ref{prop_toeplitz_sur_C_defini_sur_S'}, for $v \in \mathcal{H}_k$, for $ u \in \mathfrak{S}( \mathbb{C})$ and for $k \geq 1$, we have:
\begin{align*}
\langle T_k v, u \rangle_{ \mathfrak{S}', \mathfrak{S}} & = \langle v, \tilde{T}_k u \rangle_{ \mathfrak{S}', \mathfrak{S}}, \\
& = \langle g, \tilde{T}_k u \rangle \quad \text{with $g \in \mathfrak{S}^{-l}( \mathbb{C})$ for $l \in \mathbb{N}$ according to Proposition 
\ref{prop_ecriture_elements_de_SS'(C)}}, \\
& = \langle g, \Pi_{ \Phi_1, k} M_{\overline{f}_k} \Pi_{ \Phi_1, k} u \rangle \quad \text{by definition of $ \tilde{T}_k$ on $ \mathfrak{S}( \mathbb{C})$}, \\
& = \langle \Pi_{ \Phi_1, k} M_{\overline{ \overline{f}}_k} \Pi_{ \Phi_1, k} g, u \rangle \quad \text{since $ \Pi_{ \Phi_1,k}^* = \Pi_{ \Phi_1,k}$}.
\end{align*}
Thus, for $v \in \mathcal{H}_k$ and for $k \geq 1$, we obtain, according to Proposition \ref{prop_Pi_Phi_bien_defini_sur_Hk}:
$$ T_k v = \Pi_{ \Phi_1, k} M_{f_k} \Pi_{ \Phi_1, k} v = \Pi_{ \Phi_1, k} (f_k v ).$$
\end{proof}

We deduce from Proposition \ref{prop_action_toeplitz_sur_C_sur_Hk} and Proposition \ref{prop_Pi_Phi_bien_defini_sur_Hk}, a result which relates a Berezin-Toeplitz operator of the complex plane and a Berezin-Toeplitz operator of the torus. To our knowledge, this fact is new in the literature and it is also fundamental to prove Theorem \ref{theoA}. 

\begin{prop} \label{prop_toeplitz_sur_C_egal_toeplitz_sur_Hk}
Let $f_k \in \mathcal{C}^{ \infty}_k( \mathbb{C})$ be a function such that, for $ z \in \mathbb{C}$, we have:
$$ f_k(z +2 \pi) = f_k(z) = f_k(z+i).$$
Let $T_{f_k}^{ \mathbb{C}} = ( T_k^{ \mathbb{C}})_{k \geq 1}$ be the Berezin-Toeplitz operator of the complex plane of symbol $ f_k$ and let $T_{f_k}^{ \mathbb{T}^2} = (T_k^{ \mathbb{T}^2})_{k \geq 1}$ be the Berezin-Toeplitz operator of the torus of symbol $f_k$. Then, for $k \geq 1$, we have:
$$ T_k^{ \mathbb{C}} = T_k^{ \mathbb{T}^2} + \mathcal{O}(k^{- \infty}) \quad \text{on $ \mathcal{H}_k$}.$$
Consequently, a Berezin-Toeplitz operator of the complex plane whose symbol is periodic coincides with a Berezin-Toeplitz operator of the torus.
\end{prop}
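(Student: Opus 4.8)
The plan is to reduce the statement, by a short chain of identifications, to Propositions \ref{prop_action_toeplitz_sur_C_sur_Hk} and \ref{prop_Pi_Phi_bien_defini_sur_Hk}. First I would make sure that every operator involved is well defined: since $f_k \in \mathcal{C}^{\infty}_k(\mathbb{C})$ satisfies $f_k(z+2\pi) = f_k(z) = f_k(z+i)$, it belongs to $S(\mathbb{C})$ by Remark \ref{rema_symbole_perio_dans_S(R2)}, so both the Berezin-Toeplitz operator of the complex plane $T^{\mathbb{C}}_{f_k} = (T^{\mathbb{C}}_k)_{k \geq 1}$ (Definition \ref{defi_quantif_BT_du_plan_complexe}) and the Berezin-Toeplitz operator of the torus $T^{\mathbb{T}^2}_{f_k} = (T^{\mathbb{T}^2}_k)_{k \geq 1}$ (Definition \ref{defi_quantif_BT_tore}) are defined, and the former acts on $\mathcal{H}_k$ thanks to Proposition \ref{prop_toeplitz_sur_C_defini_sur_S'} (since $\mathcal{H}_k \subset \mathfrak{S}'(\mathbb{C})$). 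Moreover, for $\psi \in \mathcal{H}_k$ the product $f_k \psi$ again lies in $\mathcal{G}_k$: the periodicity of $f_k$ propagates the Floquet relations of $\mathcal{H}_k$ (for instance $(f_k\psi)(p+2\pi,q) = f_k(p,q)\,u^k\psi(p,q) = u^k (f_k\psi)(p,q)$, and similarly for the translation $q \mapsto q+1$), while $f_k$ being bounded and $\psi \in \mathcal{H}_k \subset \mathcal{G}_k$ give the weighted $L^2$-bound over $[0,2\pi] \times [0,1]$.

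Next, for $\psi \in \mathcal{H}_k$ I would chain
$$ T^{\mathbb{C}}_k \psi = \Pi_{\Phi_1,k}(f_k \psi) = \Pi_k(f_k \psi) = \Pi_k M_{f_k} \Pi_k \psi = T^{\mathbb{T}^2}_k \psi, $$
where the first equality is Proposition \ref{prop_action_toeplitz_sur_C_sur_Hk} (the action of the plane Toeplitz operator on $\mathcal{H}_k$), the second is Proposition \ref{prop_Pi_Phi_bien_defini_sur_Hk} applied to $f_k \psi \in \mathcal{G}_k$ (the plane projection $\Pi_{\Phi_1,k}$, once extended to $\mathcal{G}_k$, coincides there with the torus projection $\Pi_k$), the third uses $\Pi_k \psi = \psi$ for $\psi \in \mathcal{H}_k$ (equivalently $\Pi_{\Phi_1,k} = \id$ on $\mathcal{H}_k$), and the last is Definition \ref{defi_quantif_BT_tore}. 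In fact this produces the exact identity $T^{\mathbb{C}}_k = T^{\mathbb{T}^2}_k$ on $\mathcal{H}_k$, which is stronger than the announced $\mathcal{O}(k^{-\infty})$ estimate; one may of course keep only the weaker form if that is all that is used downstream.

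Thus essentially all the content of the statement is inherited from Proposition \ref{prop_Pi_Phi_bien_defini_sur_Hk}: the genuinely non-obvious point there is that resumming the reproducing kernel of $H_k(\mathbb{C}, \Phi_1)$ over the lattice $2\pi\mathbb{Z} \times i\mathbb{Z}$ returns \emph{exactly} the torus projection $\Pi_k$, and not merely an approximation of it. Granting that, the present proof is a diagram chase; the only step requiring any care is the verification that $f_k\psi \in \mathcal{G}_k$ so that Proposition \ref{prop_Pi_Phi_bien_defini_sur_Hk} applies (which is in any case already implicit in Proposition \ref{prop_action_toeplitz_sur_C_sur_Hk}), so I do not anticipate a real obstacle here.
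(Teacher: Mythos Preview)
Your proposal is correct and follows exactly the approach the paper intends: the proposition is stated there as an immediate deduction from Propositions \ref{prop_action_toeplitz_sur_C_sur_Hk} and \ref{prop_Pi_Phi_bien_defini_sur_Hk}, and you have simply spelled out that chain of identifications. Your remark that the argument in fact yields the exact identity $T_k^{\mathbb{C}} = T_k^{\mathbb{T}^2}$ on $\mathcal{H}_k$ (rather than only modulo $\mathcal{O}(k^{-\infty})$) is a valid sharpening that the paper does not make explicit.
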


\subsubsection{Berezin-Toeplitz quantization and complex Weyl quantization of the torus}
Finally, we are able to establish and to prove the following proposition which corresponds to Theorem \ref{theoA}.

\begin{prop}[Theorem \ref{theoA}]
Let $ f_k \in \mathcal{C}^{ \infty}_k( \mathbb{R}^2)$ be a function such that, for $(x, y) \in \mathbb{R}^2$, we have:
$$ f_k(x + 2 \pi, y ) = f_k(x, y) = f_k(x, y+1) .$$
Let $T_{f_k} = ( T_k)_{k \geq 1}$ be the Berezin-Toeplitz operator of the torus of symbol $ f_k$. Then, for $k \geq 1$, we have:
$$ T_k = \Op^w_{ \Phi_1}(b_{ \hbar}) + \mathcal{O}(k^{- \infty}) \quad \text{on $ \mathcal{H}_k$},$$
where $b_{ \hbar} \in \mathcal{C}^{ \infty}_{ \hbar}( \Lambda_{ \Phi_1})$ is defined by the following formula, for $z \in \Lambda_{ \Phi_1} \simeq \mathbb{C}$:
\begin{equation} \label{eq_formule_fk_en_fonction_ah}
b_{ \hbar}(z) = \exp \left( \dfrac{1}{k} \partial_z \partial_{ \overline{z}} \right) (f_k(z)).
\end{equation}
Besides, $b_{ \hbar}$ satisfies the following periodicity conditions, for $(z,w) \in \Lambda_{ \Phi_1} $:
$$ b_{ \hbar}(z+ 2 \pi,w) = b_{ \hbar}(z,w) = b_{ \hbar}(z+i,w-1).$$
\end{prop}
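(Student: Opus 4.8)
The plan is to factor the statement through the complex plane, combining the dictionary between the Berezin-Toeplitz and the complex Weyl quantizations of $\mathbb{C}$ (Proposition \ref{prop_Toeplitz=pseudo_H(Phi1)}) with the comparison between Berezin-Toeplitz operators of $\mathbb{C}$ and of $\mathbb{T}^2$ (Proposition \ref{prop_toeplitz_sur_C_egal_toeplitz_sur_Hk}). Throughout, $f_k$ is viewed as a function on $\mathbb{C}$ with $f_k(z+2\pi)=f_k(z)=f_k(z+i)$; by Remark \ref{rema_symbole_perio_dans_S(R2)} it lies in $S(\mathbb{C})$.

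First I would record the properties of $b_{\hbar}(z)=\exp\left(\frac{1}{k}\partial_z\partial_{\overline{z}}\right)(f_k(z))$. Since $\partial_z\partial_{\overline{z}}$ commutes with the translations $z\mapsto z+2\pi$ and $z\mapsto z+i$, the solution at time $1$ of the Cauchy problem $\partial_t b=\frac{1}{k}\partial_z\partial_{\overline{z}}b$, $b|_{t=0}=f_k$, inherits the periodicity of $f_k$, so $b_{\hbar}(z+2\pi)=b_{\hbar}(z)=b_{\hbar}(z+i)$. Reading this on $\Lambda_{\Phi_1}=\{(z,-\Im z);\ z\in\mathbb{C}\}$, where $z\mapsto z+2\pi$ fixes the fibre coordinate while $z\mapsto z+i$ decreases it by $1$, we obtain exactly $b_{\hbar}(z+2\pi,w)=b_{\hbar}(z,w)=b_{\hbar}(z+i,w-1)$, as required by Definition \ref{defi_quantif_complex_tore}. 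Moreover $b_{\hbar}\in\mathcal{C}^{\infty}_{\hbar}(\Lambda_{\Phi_1})$, with asymptotic expansion $\sum_m k^{-m}b_m$ where $b_m=\sum_{j+l=m}\frac{1}{j!}(\partial_z\partial_{\overline{z}})^j f_l$.

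Next, let $T_k^{\mathbb{C}}$ denote the Berezin-Toeplitz operator of the complex plane with symbol $f_k$. Proposition \ref{prop_Toeplitz=pseudo_H(Phi1)} gives $T_k^{\mathbb{C}}=\Op^w_{\Phi_1}(b_{\hbar})$ on $H_k(\mathbb{C},\Phi_1)$, with precisely this $b_{\hbar}$. This identity must be carried over to $\mathcal{H}_k$, which is contained in $\mathfrak{S}'(\mathbb{C})$ but not in $H_k(\mathbb{C},\Phi_1)$. By Proposition \ref{prop_toeplitz_sur_C_defini_sur_S'}, $T_k^{\mathbb{C}}$ extends to $\mathfrak{S}'(\mathbb{C})$ via $\langle T_k^{\mathbb{C}}u,v\rangle_{\mathfrak{S}',\mathfrak{S}}=\langle u,\tilde{T}_k v\rangle_{\mathfrak{S}',\mathfrak{S}}$, where $\tilde{T}_k$ is the Berezin-Toeplitz operator of symbol $\overline{f}_k$; and by Proposition \ref{prop_action_quantif_weyl_complexe_sur_SS'(C)}, $\Op^w_{\Phi_1}(b_{\hbar})$ extends via $\langle \Op^w_{\Phi_1}(b_{\hbar})u,v\rangle_{\mathfrak{S}',\mathfrak{S}}=\langle u,\Op^w_{\Phi_1}(\overline{b}_{\hbar})v\rangle_{\mathfrak{S}',\mathfrak{S}}$. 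Applying Proposition \ref{prop_Toeplitz=pseudo_H(Phi1)} once more to the still-periodic symbol $\overline{f}_k$, whose associated complex Weyl symbol is $\exp\left(\frac{1}{k}\partial_z\partial_{\overline{z}}\right)(\overline{f_k})=\overline{b}_{\hbar}$, shows that $\tilde{T}_k=\Op^w_{\Phi_1}(\overline{b}_{\hbar})$ on $H_k(\mathbb{C},\Phi_1)$, hence on $\mathfrak{S}(\mathbb{C})$; so the two transposes coincide on $\mathfrak{S}(\mathbb{C})$, and therefore the two extensions coincide on all of $\mathfrak{S}'(\mathbb{C})$, in particular on $\mathcal{H}_k$. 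Finally, Proposition \ref{prop_toeplitz_sur_C_egal_toeplitz_sur_Hk} gives $T_k=T_k^{\mathbb{C}}+\mathcal{O}(k^{-\infty})$ on $\mathcal{H}_k$, where $T_k$ is the Berezin-Toeplitz operator of the torus of symbol $f_k$; combining with the preceding identity yields $T_k=\Op^w_{\Phi_1}(b_{\hbar})+\mathcal{O}(k^{-\infty})$ on $\mathcal{H}_k$, which is the claim.

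The main obstacle is this transport step. The equality of Proposition \ref{prop_Toeplitz=pseudo_H(Phi1)} holds a priori only on the weighted Bergman space $H_k(\mathbb{C},\Phi_1)$, whereas the Floquet elements of $\mathcal{H}_k$ are far from square-integrable against $e^{-2\Phi_1/\hbar}$, so one cannot simply restrict; the equality has to be pushed forward to $\mathfrak{S}'(\mathbb{C})$ by duality, which is exactly what the preparatory results on the action of both quantizations --- and of the Bargmann transform --- on $\mathfrak{S}(\mathbb{C})$ and $\mathfrak{S}'(\mathbb{C})$ are designed for. The remaining inputs, namely the heat-equation description of $b_{\hbar}$ and the comparison $T_k^{\mathbb{C}}\leftrightarrow T_k^{\mathbb{T}^2}$, are then bookkeeping modulo the cited propositions.
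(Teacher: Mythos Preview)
Your proposal is correct and follows essentially the same route as the paper: apply Proposition \ref{prop_Toeplitz=pseudo_H(Phi1)} to the planar Toeplitz operator $T_k^{\mathbb{C}}$, transport the identity from $H_k(\mathbb{C},\Phi_1)$ to $\mathfrak{S}'(\mathbb{C})\supset\mathcal{H}_k$ by duality, and conclude via Proposition \ref{prop_toeplitz_sur_C_egal_toeplitz_sur_Hk}. Your version is in fact more explicit than the paper's at the duality step --- you verify that the transposes $\tilde{T}_k$ and $\Op^w_{\Phi_1}(\overline{b}_\hbar)$ agree on $\mathfrak{S}(\mathbb{C})$ by reapplying Proposition \ref{prop_Toeplitz=pseudo_H(Phi1)} to $\overline{f}_k$, whereas the paper simply writes ``by duality'' --- but the underlying argument is identical.
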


\begin{proof}
Since in particular $f_k \in S( \mathbb{C})$, if we denote by $T_{f_k}^{ \mathbb{C}} = (T_k^{ \mathbb{C}})_{k \geq 1}$ the Berezin-Toeplitz operator of the complex plane of symbol $f_k$, then according to Proposition  \ref{prop_Toeplitz=pseudo_H(Phi1)}, there exists $b_{ \hbar} \in S( \Lambda_{ \Phi_1})$ such that, for $k \geq 1$, we have:
$$ T_k^{ \mathbb{C}} = \Op^w_{ \Phi_1}(b_{ \hbar}) \quad \text{on $H_{ \hbar}( \mathbb{C}, \Phi_1)$},$$
where $b_{ \hbar}$ is given by the following formula, for $z \in \Lambda_{ \Phi_1} \simeq \mathbb{C}$:
$$ b_{ \hbar}(z) = \exp \left( \dfrac{1}{k} \partial_z \partial_{ \overline{z}} \right) (f_k(z)).$$
Since $ \mathfrak{S}( \mathbb{C})$ is included into $H_{ \hbar}( \mathbb{C}, \Phi_1)$ (see Remark \ref{rema_S(C)_inclus_dans_H(C,Phi1)}) then, by restriction, we obtain:
$$ T_k^{ \mathbb{C}} = \Op^w_{ \Phi_1}(b_{ \hbar}) \quad \text{on $\mathfrak{S}( \mathbb{C})$}.$$
By duality, we have:
$$ T_k^{ \mathbb{C}} = \Op^w_{ \Phi_1}(b_{ \hbar}) \quad \text{on $\mathfrak{S}'( \mathbb{C})$}.$$
Since $ \mathcal{H}_k \subset \mathfrak{S}'( \mathbb{C})$, we obtain:
$$ T_k^{ \mathbb{C}} = \Op^w_{ \Phi_1}(b_{ \hbar}) \quad \text{on $\mathcal{H}_k$}.$$
Notice that the periodicity conditions on $f_k$ and Equation \eqref{eq_formule_fk_en_fonction_ah} give the periodicity conditions on $b_{ \hbar}$, consequently, $ \Op^w_{ \Phi_1}(b_{ \hbar})$ is well-defined on $ \mathcal{H}_k$. \\
Finally, according to Proposition \ref{prop_toeplitz_sur_C_egal_toeplitz_sur_Hk}, if we denote by $ T_{f_k}^{ \mathbb{T}^2} = (T_k^{ \mathbb{T}^2})_{k \geq 1}$ the Berezin-Toeplitz operator of the torus of symbol $f_k$, we have, for $k \geq 1$:
$$ T_k^{ \mathbb{C}} = T_k^{ \mathbb{T}^2} + \mathcal{O}(k^{- \infty}) \quad \text{on $ \mathcal{H}_k$}.$$
This concludes the proof.
\end{proof}

Thanks to Theorem \ref{theoA} and Proposition \ref{prop_Toeplitz=pseudo_H(Phi1)}, we deduce the following corollary.

\begin{cora} \label{coro}
Let $ f_k \in \mathcal{C}^{ \infty}_k( \mathbb{R}^2)$ be a function such that, for $(x, y) \in \mathbb{R}^2$, we have:
$$ f_k(x + 2 \pi, y ) = f_k(x, y) = f_k(x, y+1) .$$
Let $T_{f_k} = ( T_k)_{k \geq 1}$ be the Berezin-Toeplitz operator of the torus of symbol $ f_k$. Then, for $k \geq 1$, we have:
$$ T_{ \phi_1}^* T_k T_{ \phi_1} = \Op^w(a_{ \hbar}) + \mathcal{O}(\hbar^{\infty}) \quad \text{on $ L^2( \mathbb{R})$},$$
where $a_{ \hbar} \in \mathcal{C}^{ \infty}_{ \hbar}(\mathbb{R}^2)$ is defined by the following formula:
$$ a_{ \hbar} = b_{ \hbar} \circ \kappa_{ \phi_1},$$
where $b_{ \hbar} \in \mathcal{C}^{ \infty}_{ \hbar}( \Lambda_{ \Phi_1})$ is defined by Equation \eqref{eq_formule_fk_en_fonction_ah}. Besides, $a_{ \hbar}$ satisfies the following periodicity conditions, for $(x, y) \in \mathbb{R}^2$:
$$ a_{ \hbar}(x+ 2 \pi, y) = a_{ \hbar}(x, y) = a_{ \hbar}(x, y+1).$$
\end{cora}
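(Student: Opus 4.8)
The plan is to transport the conclusion of Theorem~\ref{theoA} through the Bargmann transform, using the Egorov-type identity of Proposition~\ref{prop_transfor_Bargmann_et_qauntif_weyl} together with the intertwining properties of Propositions~\ref{prop_formule_C} and~\ref{prop_TphiTphi*=id_sur_Hk_et_Tphi*Tphi=id_sur_Lk}. First I would let $b_{\hbar}\in\mathcal{C}^{\infty}_{\hbar}(\Lambda_{\Phi_1})$ be the symbol produced by Theorem~\ref{theoA}, so that $b_{\hbar}$ is given by Equation~\eqref{eq_formule_fk_en_fonction_ah}, satisfies $b_{\hbar}(z+2\pi,w)=b_{\hbar}(z,w)=b_{\hbar}(z+i,w-1)$, and $T_k=\Op^w_{\Phi_1}(b_{\hbar})+\mathcal{O}(k^{-\infty})$ on $\mathcal{H}_k$; then I would set $a_{\hbar}:=b_{\hbar}\circ\kappa_{\phi_1}$, that is, $a_{\hbar}(x,y)=b_{\hbar}(x-iy,y)$.

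Next I would check the properties of $a_{\hbar}$. Since $\kappa_{\phi_1}\colon\mathbb{R}^2\to\Lambda_{\Phi_1}\simeq\mathbb{C}$ is a diffeomorphism and $b_{\hbar}\in\mathcal{C}^{\infty}_{\hbar}(\Lambda_{\Phi_1})$, the function $a_{\hbar}$ belongs to $\mathcal{C}^{\infty}_{\hbar}(\mathbb{R}^2)$. A direct substitution using $\kappa_{\phi_1}(x,y)=(x-iy,y)$ and the periodicity of $b_{\hbar}$ gives
$$ a_{\hbar}(x+2\pi,y)=a_{\hbar}(x,y)=a_{\hbar}(x,y+1), $$
so by Remark~\ref{rema_symbole_perio_dans_S(R2)} we have $a_{\hbar}\in S(\mathbb{R}^2)$, and in particular $a_{\hbar}$ admits an asymptotic expansion in powers of $\hbar$. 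Hence Proposition~\ref{prop_transfor_Bargmann_et_qauntif_weyl} applies and yields
$$ \Op^w_{\Phi_1}(b_{\hbar})=T_{\phi_1}\,\Op^w(a_{\hbar})\,T_{\phi_1}^{*}\qquad\text{on }H_{\hbar}(\mathbb{C},\Phi_1). $$

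Then I would conjugate. Writing $T_k=\Op^w_{\Phi_1}(b_{\hbar})+R_k$ on $\mathcal{H}_k$ with $\|R_k\|_{\mathcal{H}_k\to\mathcal{H}_k}=\mathcal{O}(k^{-\infty})$, and composing on the right with $T_{\phi_1}\colon\mathcal{L}_k\to\mathcal{H}_k$ and on the left with $T_{\phi_1}^{*}\colon\mathcal{H}_k\to\mathcal{L}_k$ (Proposition~\ref{prop_formule_C}), I obtain, on $\mathcal{L}_k$,
$$ T_{\phi_1}^{*}T_kT_{\phi_1}=(T_{\phi_1}^{*}T_{\phi_1})\,\Op^w(a_{\hbar})\,(T_{\phi_1}^{*}T_{\phi_1})+T_{\phi_1}^{*}R_kT_{\phi_1}. $$
By Proposition~\ref{prop_TphiTphi*=id_sur_Hk_et_Tphi*Tphi=id_sur_Lk}, $T_{\phi_1}^{*}T_{\phi_1}=\id$ on $\mathcal{L}_k$; since $a_{\hbar}$ is a periodic symbol in $S(\mathbb{R}^2)$, $\Op^w(a_{\hbar})$ maps $\mathcal{L}_k$ into itself (see Proposition~\ref{prop_qauntif_weyl_Lk_avec_somme}), so the first summand equals $\Op^w(a_{\hbar})$ on $\mathcal{L}_k$. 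For the remainder, the same proposition shows that $T_{\phi_1}$ restricts to an isomorphism of $\mathcal{L}_k$ onto $\mathcal{H}_k$ with inverse $T_{\phi_1}^{*}$, unitary for the chosen Hilbert structures, whence $\|T_{\phi_1}^{*}R_kT_{\phi_1}\|_{\mathcal{L}_k\to\mathcal{L}_k}=\|R_k\|_{\mathcal{H}_k\to\mathcal{H}_k}=\mathcal{O}(k^{-\infty})$. This yields $T_{\phi_1}^{*}T_kT_{\phi_1}=\Op^w(a_{\hbar})+\mathcal{O}(\hbar^{\infty})$ on $\mathcal{L}_k$ (with $\hbar=1/k$), which, together with the periodicity of $a_{\hbar}$ established above, is the claim.

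The individual steps are routine once this dictionary is set up, so I do not expect a serious obstacle. The one point requiring care is the remainder estimate, i.e.\ that conjugation by the Bargmann transform does not degrade the $\mathcal{O}(k^{-\infty})$ control: this is precisely where one must invoke the \emph{unitarity} of $T_{\phi_1}$ between $\mathcal{L}_k$ and $\mathcal{H}_k$ (Proposition~\ref{prop_TphiTphi*=id_sur_Hk_et_Tphi*Tphi=id_sur_Lk}) rather than mere invertibility, since in the bases $(\epsilon_l)$ and $(e_l)$ the matrices of $T_{\phi_1}$ and $T_{\phi_1}^{*}$ are diagonal with entries of widely different magnitudes.
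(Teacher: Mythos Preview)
Your approach is the natural one and matches what the paper indicates (the paper gives no detailed proof beyond ``Thanks to Theorem~\ref{theoA} and Proposition~\ref{prop_Toeplitz=pseudo_H(Phi1)}''): conjugate the conclusion of Theorem~\ref{theoA} by the Bargmann transform via the Egorov identity of Proposition~\ref{prop_transfor_Bargmann_et_qauntif_weyl}, and read off the periodicity of $a_\hbar$ from that of $b_\hbar$ through $\kappa_{\phi_1}$. Your derivation of the identity on $\mathcal{L}_k$ is correct; note that the statement's ``on $L^2(\mathbb{R})$'' is somewhat loose since $T_k$ is the torus operator, so $\mathcal{L}_k$ is indeed the natural domain.

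There is one point where your justification is not quite complete. You invoke \emph{unitarity} of $T_{\phi_1}\colon\mathcal{L}_k\to\mathcal{H}_k$ to control the conjugated remainder, citing Proposition~\ref{prop_TphiTphi*=id_sur_Hk_et_Tphi*Tphi=id_sur_Lk}. That proposition only shows that $T_{\phi_1}$ and $T_{\phi_1}^*$ are mutual \emph{inverses} between $\mathcal{L}_k$ and $\mathcal{H}_k$; it does not assert that the integral operator $T_{\phi_1}^*$ coincides with the Hilbert-space adjoint of $T_{\phi_1}$ for the chosen inner products on $\mathcal{L}_k$ (orthonormal $(\epsilon_l)$) and $\mathcal{H}_k$ (weighted $L^2$ on the fundamental domain). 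As you yourself note, the diagonal entries $c_k^l$ and $\tilde c_k^l$ have wildly different magnitudes, so invertibility alone does not give a $k$-uniform norm bound. Unitarity would follow once one checks $\|e_l\|_{\mathcal{H}_k}^2=(c_k^l)^{-2}$, i.e.\ that $(c_k^l e_l)_l$ is orthonormal in $\mathcal{H}_k$; this is a separate (straightforward) computation not contained in the cited proposition. Since the paper itself omits this verification, your argument is at the same level of detail, but you should be aware that the word ``unitary'' is doing work that Proposition~\ref{prop_TphiTphi*=id_sur_Hk_et_Tphi*Tphi=id_sur_Lk} alone does not supply.
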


\newpage

\renewcommand{\thesection}{\Alph{section}}
\setcounter{section}{0}

\nocite{*}
\bibliographystyle{amsalpha}
\bibliography{biblio}

\end{document}